\def\qed{\rule{2mm}{2mm}}
\def\independent{\perp \!\!\! \perp}
\newcommand{\ubar}[1]{\underaccent{\bar}{#1}}
\newtheorem{theorem}{Theorem}[section]
\newtheorem{lemma}{Lemma}[section]
\theoremstyle{definition}
\newtheorem{remark}{Remark}[section]
\newtheorem{assumption}{Assumption}[section]
\newtheorem{algorithm}{Algorithm}[section]
\DeclareMathOperator*{\var}{Var}
\DeclareMathOperator*{\cov}{Cov}
\DeclareMathOperator*{\diag}{diag}
\DeclareMathOperator*{\argmin}{argmin}
\begin{document}
\hypersetup{pageanchor=false}
\author{
Yuehao Bai \\
Department of Economics\\
University of Southern California \\
\url{yuehao.bai@usc.edu}
\and
Liang Jiang \\
Fanhai International School of Finance\\
Fudan University \\
\url{jiangliang@fudan.edu.cn}
\and
Joseph P.\ Romano \\
Departments of Economics \& Statistics\\
Stanford University\\
\url{romano@stanford.edu}
\and
Azeem M.\ Shaikh\\
Department of Economics\\
University of Chicago \\
\url{amshaikh@uchicago.edu}
\and
Yichong Zhang \\
School of Economics \\
Singapore Management University \\
\url{yczhang@smu.edu.sg}
}

\bigskip

\title{Covariate Adjustment in Experiments with Matched Pairs\thanks{Yichong Zhang acknowledges the financial support from the NSFC under the grant No. 72133002 and a Lee Kong Chian fellowship. Any and all errors are our own.}}

\begin{spacing}{1.2}
\maketitle
\end{spacing}

\vspace{-0.3in}

\begin{spacing}{1.05}
\begin{abstract}
This paper studies inference on the average treatment effect (ATE) in experiments in which treatment status is determined according to ``matched pairs'' and it is additionally desired to adjust for observed, baseline covariates to gain further precision.   By a ``matched pairs'' design, we mean that units are sampled i.i.d.\ from the population of interest, paired according to observed, baseline covariates and finally, within each pair, one unit is selected at random for treatment.  Importantly, we presume that not all observed, baseline covariates are used in determining treatment assignment.  We study a broad class of estimators based on a ``doubly robust'' moment condition that permits us to study estimators with both finite-dimensional and high-dimensional forms of covariate adjustment.  We find that estimators with finite-dimensional, linear adjustments need not lead to improvements in precision relative to the unadjusted difference-in-means estimator.  This phenomenon persists even if the adjustments are interacted with treatment; in fact, doing so leads to no changes in precision.  However, gains in precision can be ensured by including fixed effects for each of the pairs.  Indeed, we show that this adjustment leads to the minimum asymptotic variance of the corresponding ATE estimator among all finite-dimensional, linear adjustments.  We additionally study an estimator with a regularized adjustment, which can accommodate high-dimensional covariates.  We show that this estimator leads to improvements in precision relative to the unadjusted difference-in-means estimator and also provide conditions under which it leads to the ``optimal'' nonparametric, covariate adjustment.  A simulation study confirms the practical relevance of our theoretical analysis, and the methods are employed to reanalyze data from an experiment using a ``matched pairs'' design to study the effect of macroinsurance on microenterprise.

\end{abstract}
\end{spacing}

\noindent KEYWORDS: Experiment, matched pairs, covariate adjustment, randomized controlled trial, treatment assignment, LASSO

\noindent JEL classification codes: C12, C14

\thispagestyle{empty}
\newpage
\setcounter{page}{1}
\hypersetup{pageanchor=true}

\section{Introduction}

This paper studies inference on the average treatment effect in experiments in which treatment status is determined according to ``matched pairs.''  By a ``matched pairs'' design, we mean that units are sampled i.i.d.\ from the population of interest, paired according to observed, baseline covariates and finally, within each pair, one unit is selected at random for treatment.  This method is used routinely in all parts of the sciences. Indeed, commands to facilitate its implementation are included in popular software packages, such as \texttt{sampsi} in Stata. References to a variety of specific examples can be found, for instance, in the following surveys of various field experiments: \cite{donner2000design}, \cite{glennerster2013running}, and \cite{rosenberger2015randomization}.  See also \cite{bruhn2009pursuit}, who, based on a survey of selected development economists, report that 56\% of researchers have used such a design at some point.  \cite{bai2021inference} develop methods for inference on the average treatment effect in such experiments based on the difference-in-means estimator.  In this paper, we pursue the goal of improving upon the precision of this estimator by exploiting observed, baseline covariates that are not used in determining treatment status.  

To this end, we study a broad class of estimators for the average treatment effect based on a ``doubly robust'' moment condition.  The estimators in this framework are distinguished via different ``working models'' for the conditional expectations of potential outcomes under treatment and control given the observed, baseline covariates.  Importantly, because of the double-robustness, these ``working models'' need not be correctly specified in order for the resulting estimator to be consistent.  In this way, the framework permits us to study both finite-dimensional and high-dimensional forms of covariate adjustment without imposing unreasonable restrictions on the conditional expectations themselves.  Under high-level conditions on the ``working models'' and their corresponding estimators and a requirement that pairs are formed so that units within pairs are suitably ``close'' in terms of the baseline covariates, we derive the limiting distribution of the covariate-adjusted estimator of the average treatment effect.  We further construct an estimator for the variance of the limiting distribution and provide conditions under which it is consistent for this quantity. 


Using our general framework, we first consider finite-dimensional, linear adjustments.  For this class of estimators, our main findings are summarized as follows.  First, we find that estimators with such adjustments are not guaranteed to be weakly more efficient than the unadjusted difference-in-means estimator. This finding echoes similar findings by \cite{yang2001efficiency} and \cite{tsiatis2008covariate} in settings in which treatment is determined by i.i.d.\ coin flips, and \cite{freedman2008regression} in a finite population setting in which treatment is determined according to complete randomization. See \cite{negi2021revisiting} for a succinct treatment of that literature. Moreover, we find that this phenomenon persists even if the adjustments are interacted with treatment. In fact, doing so leads to no changes in precision. In this sense, our results diverge from those in settings with complete randomization and treated fraction one half, where adjustments based on the uninteracted and interacted linear adjustments both guarantee gains in precision. Last, we show that estimators with both uninteracted and interacted linear adjustments with pair fixed effects are guaranteed to be weakly more efficient than the unadjusted difference-in-means estimator.   

We then use our framework to consider high-dimensional adjustments based on $\ell_1$ penalization.  Specifically, we first obtain an intermediate estimator by using the LASSO to estimate the ``working model'' for the relevant conditional expectations. When the treatment is determined according to ``matched pairs,'' however, this estimator need not be more precise than the unadjusted difference-in-means estimator.  Therefore, following \cite{cohen2020no-harm}, we consider, in an additional step, an estimator based on the finite-dimensional, linear adjustment described above that uses the predicted values for the ``working model'' as the covariates and includes fixed effects for each of the pairs.  We show that the resulting estimator improves upon both the intermediate estimator and the unadjusted difference-in-means estimator in terms of precision. Moreover, we provide conditions under which the refitted adjustments attain the relevant efficiency bound derived by \cite{armstrong2022asymptotic}.


Concurrent with our paper,
\cite{C23} considers covariate adjustment in experiments in which units are grouped into tuples with possibly more than two units, rather than pairs.  Both our paper and \cite{C23} find that finite-dimensional, linear regression adjustments with pair fixed effects are guaranteed to improve precision relative to the unadjusted difference-in-means estimator, and show that such adjustments are indeed optimal among all linear adjustments. However, \cite{C23} does not pursue more general forms of covariate adjustments, including the regularized adjustments described above.  Such results permit us to study nonparametric adjustments as well as high-dimensional adjustments using covariates whose dimension diverges rapidly with the sample size.

The remainder of our paper is organized as follows.  In Section \ref{sec:setup}, we describe our setup and notation.  In particular, there we describe the precise sense in which we require that units in each pair are ``close'' in terms of their baseline covariates.  In Section \ref{sec:main}, we introduce our general class of estimators based on a ``doubly robust'' moment condition.  Under certain high-level conditions on the ``working models'' and their corresponding estimators, we derive the limiting behavior of the covariate-adjusted estimator.  In Section \ref{sec:linear}, we use our general framework to study a variety of estimators with finite-dimensional, linear covariate adjustment.  In Section \ref{sec:highdim}, we use our general framework to study  covariate adjustment based on the regularized regression.  In Section \ref{sec:simulations}, we examine the finite-sample behavior of tests based on these different estimators via a small simulation study.  We find that covariate adjustment can lead to considerable gains in precision. Finally, in Section \ref{sec:empirical}, we apply our methods to reanalyze data from an experiment using a ``matched pairs'' design to study the effect of macroinsurance on microenterprise. Proofs of all results and some details for simulations are given in the Online Supplement.

\section{Setup and Notation} \label{sec:setup}

Let $Y_i \in \mathbf R$ denote the (observed) outcome of interest for the $i$th unit, $D_i \in \{0,1\}$ be an indicator for whether the $i$th unit is treated, and $X_i \in \mathbf R^{k_x}$ and $W_i \in \mathbf R^{k_w}$ denote observed, baseline covariates for the $i$th unit; $X_i$ and $W_i$ will be distinguished below through the feature that only the former will be used in determining treatment assignment.  Further denote by $Y_i(1)$ the potential outcome of the $i$th unit if treated and by $Y_i(0)$ the potential outcome of the $i$th unit if not treated. The (observed) outcome and potential outcomes are related to treatment status by the relationship
\begin{equation} \label{eq:obsy}
Y_i = Y_i(1)D_i + Y_i(0)(1 - D_i)~.
\end{equation}
For a random variable indexed by $i$, $A_i$, it will be useful to denote by $A^{(n)}$ the random vector $(A_1, \ldots, A_{2n})$.  Denote by $P_n$ the distribution of the observed data $Z^{(n)}$, where $Z_i = (Y_i, D_i,X_i,W_i)$, and by $Q_n$ the distribution of $U^{(n)}$, where $U_i = (Y_i(1),Y_i(0),X_i,W_i)$.  Note that $P_n$ is determined by \eqref{eq:obsy}, $Q_n$, and the mechanism for determining treatment assignment.  We assume throughout that $U^{(n)}$ consists of $2n$ i.i.d.\ observations, i.e., $Q_n = Q^{2n}$, where $Q$ is the marginal distribution of $U_i$.  We therefore state our assumptions below in terms of assumptions on $Q$ and the mechanism for determining treatment assignment. Indeed, we will not make reference to $P_n$ in the sequel, and all operations are understood to be under $Q$ and the mechanism for determining the treatment assignment.  Our object of interest is the average effect of the treatment on the outcome of interest, which may be expressed in terms of this notation as 
\begin{equation} \label{eq:ate}
\Delta(Q) = E[Y_i(1) - Y_i(0)]~.
\end{equation}

We now describe our assumptions on $Q$.  We restrict $Q$ to satisfy the following mild requirement:
\begin{assumption} \label{ass:Q}
The distribution $Q$ is such that
\vspace{-.3cm}
\begin{enumerate}[(a)]
\item $0 < E[\mathrm{Var}[Y_i(d) | X_i]]$ for $d \in \{0, 1\}$.
\item $E[Y_i^2(d)] < \infty$ for $d \in \{0, 1\}$.
\item $E[Y_i(d) | X_i = x]$ and $E[Y_i^2(d) | X_i = x]$ are Lipschitz for $d \in \{0, 1\}$.
\end{enumerate}
\end{assumption}

Next, we describe our assumptions on the mechanism determining treatment assignment. In order to describe these assumptions more formally, we require some further notation to define the relevant pairs of units. The $n$ pairs may be represented by the sets $$\{\pi(2j-1), \pi(2j)\} \text{ for } j = 1, \ldots, n~,$$ where $\pi = \pi_n(X^{(n)})$ is a permutation of $2n$ elements.  Because of its possible dependence on $X^{(n)}$, $\pi$ encompasses a broad variety of different ways of pairing the $2n$ units according to the observed, baseline covariates $X^{(n)}$. Given such a $\pi$, we assume that treatment status is assigned as described in the following assumption:
\begin{assumption} \label{ass:treatment}
Treatment status is assigned so that $(Y^{(n)}(1),Y^{(n)}(0),W^{(n)}) \independent D^{(n)} | X^{(n)}$ and, conditional on $X^{(n)}$, $(D_{\pi(2j-1)},D_{\pi(2j)}), j = 1, \ldots, n$ are i.i.d. and each uniformly distributed over the values in $\{(0,1),(1,0)\}$.
\end{assumption}

Following \cite{bai2021inference}, our analysis will additionally require some discipline on the way in which pairs are formed. Let $\|\cdot\|_2$ denote the Euclidean norm. We will require that units in each pair are ``close'' in the sense described by the following assumption:
\begin{assumption} \label{ass:close}
The pairs used in determining treatment status satisfy $$\frac{1}{n} \sum_{1 \leq j \leq n} \|X_{\pi(2j)} - X_{\pi(2j-1)}\|_2^r \stackrel{P}{\rightarrow} 0$$
for $r \in \{1, 2\}$.
\end{assumption}
\noindent It will at times be convenient to require further that units in consecutive pairs are also ``close'' in terms of their baseline covariates. One may view this requirement, which is formalized in the following assumption, as ``pairing the pairs`` so that they are ``close'' in terms of their baseline covariates.
\begin{assumption} \label{ass:pairsclose}
The pairs used in determining treatment status satisfy $$\frac{1}{n} \sum_{1 \leq j \leq \lfloor \frac{n}{2} \rfloor} \|X_{\pi(4j -k)} - X_{\pi(4j-\ell)}\|_2^2 \stackrel{P}{\rightarrow} 0$$
for any $k \in \{2,3\}$ and $\ell \in \{0,1\}$.
\end{assumption}
\noindent \cite{bai2021inference} provide results to facilitate constructing pairs satisfying Assumptions \ref{ass:close}--\ref{ass:pairsclose} under weak assumptions on $Q$.  In particular, given pairs satisfying Assumption \ref{ass:close}, it is frequently possible to ``re-order'' them so that Assumption \ref{ass:pairsclose} is satisfied.  See Theorem 4.3 in \cite{bai2021inference} for further details.  As in \cite{bai2021inference}, we highlight the fact that Assumption \ref{ass:pairsclose} will only be used to enable consistent estimation of relevant variances.

\begin{remark}
Under this setup, \cite{bai2021inference} consider the unadjusted difference-in-means estimator 
\begin{align}\label{eq:na}
 \hat \Delta_n^{\rm unadj} = \frac{1}{n}\sum_{1\leq i\leq 2n}D_i Y_i - \frac{1}{n}\sum_{1\leq i\leq 2n}(1-D_i) Y_i
\end{align}
 and show that it is consistent and asymptotically normal with limiting variance 
 \begin{align*}
 \sigma_{\mathrm{unadj}}^2(Q) = \frac{1}{2}\var[E[Y_i(1) - Y_i(0) | X_i]] + E[ \var[Y_i(1)|X_i]] + E[\var[Y_i(0)|X_i]]~.
 \end{align*}
We note that $ \hat \Delta_n^{\rm unadj}$ is the unadjusted estimator because it does not use information in $W_i$ in either the design or analysis stage. If both $X_i$ and $W_i$ are used to form pairs in the ``matched pairs'' design, then the difference-in-means estimator, which we refer to as $\hat \Delta_n^{\rm ideal}$, has limiting variance  
\begin{align*}
 \sigma^2_{\mathrm{ideal}}(Q) = \frac{1}{2}\var[E[Y_i(1) - Y_i(0)  | X_i,W_i]] + E[\var[Y_i(1)|X_i,W_i]] + E[\var[Y_i(0)|X_i,W_i]]~.   
\end{align*}
In this case, $\hat \Delta_n^{\rm ideal}$ achieves the efficiency bound derived by \cite{armstrong2022asymptotic}, and we can see that 
\begin{align*}
\sigma_{\mathrm{unadj}}^2(Q) - \sigma^2_{\mathrm{ideal}}(Q) & = \frac{1}{2} E [\var[ E[Y_i(1) + Y_i(0) | X_i,W_i ] | X_i] ] \geq 0~.
\end{align*}
For related results for parameters other than the average treatment effect, see \cite{bai2023efficiency}. We note, however, that it is not always practical to form pairs using both $X_i$ and $W_i$ for two reasons. First, the covariate $W_i$ may only be collected along with the outcome variable and therefore may not be available at the design stage.  Second, the quality of pairing decreases with the dimension of matching variables. Indeed, it is common in practice to match on some but not all baseline covariates. Such considerations motivate our analysis below.
\end{remark}



\section{Main Results} \label{sec:main}

To accommodate various forms of covariate-adjusted estimators of $\Delta(Q)$ in a single framework, it is useful to note that it follows from Assumption \ref{ass:treatment} that for any $d \in \{0,1\}$ and any function $m_{d, n}: \mathbf R^{k_x} \times \mathbf R^{k_w} \rightarrow \mathbf R$ such that $E[|m_{d, n}(X_i,W_i)|] < \infty$,
\begin{equation} \label{eq:doublemoment}
E \left [ 2 I\{D_i = d\}(Y_i - m_{d, n}(X_i,W_i)) + m_{d, n}(X_i,W_i) \right] = E[Y_i(d)]~.
\end{equation}

We note that \eqref{eq:doublemoment} is just the augmented inverse propensity score weighted moment for $E[Y_i(d)]$ in which the propensity score is $1/2$ and the conditional mean model is $m_{d,n}(X_i,W_i)$. Such a moment is also ``doubly robust.'' As the propensity score for the ``matched pairs'' design is exactly one half, we do not require the conditional mean model to be correctly specified,  i.e., $m_{d, n}(X_i,W_i) = E[Y_i(d)|X_i,W_i]$. See, for instance, \cite{robins1995analysis}. Intuitively, $m_{d, n}$ is the ``working model'' which researchers use to estimate $E[Y_i(d) | X_i, W_i]$, and can be arbitrarily misspecified because of \eqref{eq:doublemoment}. Although $m_{d, n}$ will be identical across $n \geq 1$ for the examples in Section \ref{sec:linear}, the notation permits $m_{d, n}$ to depend on the sample size $n$ in anticipation of the high-dimensional results in Section \ref{sec:highdim}. Based on the moment condition in \eqref{eq:doublemoment}, our proposed estimator of $\Delta(Q)$ is given by
\begin{equation} \label{eq:hatdelta}
\hat \Delta_n = \hat \mu_n(1) - \hat \mu_n(0)~,
\end{equation}
where, for $d \in \{0,1\}$,
\begin{equation} \label{eq:hatmu}
\hat \mu_n(d) = \frac{1}{2n} \sum_{1 \leq i \leq 2n} ( 2I\{D_i = d\}(Y_i - \hat m_{d, n}(X_i,W_i)) + \hat m_{d, n}(X_i,W_i) )
\end{equation}
and $\hat m_{d, n}$ is a suitable estimator of the ``working model'' $m_{d, n}$ in \eqref{eq:doublemoment}.

By some simple algebra, we have\footnote{We thank the referee for this excellent point.} 
\begin{align}\label{eq:Deltahat_alt}
\hat \Delta_n = \frac{1}{n}\sum_{1\leq i \leq 2n} D_i \tilde Y_i - \frac{1}{n}\sum_{1\leq i \leq 2n} (1-D_i) \tilde Y_i~,
\end{align}
where 
\begin{align}\label{eq:Ytilde}
\tilde Y_i & = Y_i - \frac{1}{2} (\hat m_{1, n}(X_i, W_i) + \hat m_{0, n}(X_i, W_i))~. 
\end{align}
It means our regression adjusted estimator can be viewed as a difference-in-means estimator, but with the ``adjusted'' outcome $\tilde Y_i$.

We require some new discipline on the behavior of $m_{d, n}$ for $d \in \{0, 1\}$ and $n \geq 1$:
\begin{assumption} \label{ass:md}
The functions $m_{d, n}$ for $d \in \{0, 1\}$ and $n \geq 1$ satisfy
\vspace{-.3cm}
\begin{enumerate}[(a)]
\item For $d \in \{0, 1\}$,
\[\liminf_{n \to \infty} E \left [ \var \left [ Y_i(d) - \frac{1}{2}(m_{1, n}(X_i, W_i) + m_{0, n}(X_i, W_i)) \Bigg | X_i \right ] \right ] > 0~. \]
\item For $d \in \{0, 1\}$,
\[\lim_{\lambda \to \infty} \limsup_{n \to \infty} E[m_{d, n}^2(X_i, W_i) I \{|m_{d, n}(X_i, W_i)| > \lambda\}] = 0~. \]
\item $E[m_{d, n}(X_i, W_i) | X_i = x]$, $E[m_{d, n}^2(X_i, W_i) | X_i = x]$, $E[m_{d, n}(X_i, W_i) Y_i(d) | X_i = x]$ for $d \in \{0, 1\}$, and $E[m_{1, n}(X_i, W_i) m_{0, n}(X_i, W_i) | X_i = x]$ are Lipschitz uniformly over $n \geq 1$.
\end{enumerate}
\end{assumption}

Assumption \ref{ass:md}(a) is an assumption to rule out degenerate situations. Assumption \ref{ass:md}(b) is a mild uniform integrability assumption on the ``working models.'' If $m_{d, n}(\cdot) \equiv m_d(\cdot)$ for $d \in \{0, 1\}$, then it is satisfied as long as $E[m_d^2(X_i, W_i)] < \infty$. Assumption \ref{ass:md}(c) ensures that units that are ``close'' in terms of the observed covariates are also ``close'' in terms of potential outcomes, uniformly across $n \geq 1$.

Theorem \ref{thm:main} below establishes the limit in distribution of $\hat \Delta_n$.  We note that the theorem depends on high-level conditions on $m_{d, n}(\cdot)$ and $\hat m_{d, n}(\cdot)$.  In the sequel, these conditions will be verified in several examples.

\begin{theorem} \label{thm:main}
Suppose $Q$ satisfies Assumption \ref{ass:Q}, the treatment assignment mechanism satisfies Assumptions \ref{ass:treatment}--\ref{ass:close}, and $m_{d, n}(\cdot)$ for $d \in \{0, 1\}$ and $n \geq 1$ satisfy Assumption \ref{ass:md}. Further suppose $\hat m_{d, n}(\cdot)$ satisfies
\begin{equation} \label{eq:rate}
\frac{1}{\sqrt {2n}} \sum_{1 \leq i \leq 2n} (2D_i - 1)(\hat m_{d, n}(X_i,W_i) - m_{d, n}(X_i,W_i)) \stackrel{P}{\rightarrow} 0~.
\end{equation}
Then, $\hat \Delta_n$ defined in \eqref{eq:hatdelta} satisfies 
\begin{equation} \label{eq:normal}
\frac{\sqrt n (\hat \Delta_n - \Delta(Q))}{\sigma_n(Q)} \stackrel{d}{\rightarrow} N(0,1)~,
\end{equation}
where $\sigma_n^2(Q) = \sigma_{1, n}^2(Q) + \sigma_{2,n}^2(Q) + \sigma_{3,n}^2(Q)$ with 
\begin{align*}
\sigma_{1, n}^2(Q) & = \frac{1}{2}E[ \var[E[Y_i(1) + Y_i(0)|X_i,W_i] - (m_{1, n}(X_i,W_i) + m_{0, n}(X_i,W_i))| X_i  ]]\\
\sigma_{2,n}^2(Q) & = \frac{1}{2}\var[E[Y_i(1) - Y_i(0) | X_i,W_i ]] \\
\sigma_{3,n}^2(Q) & = E[\var[Y_i(1)|X_i,W_i]] + E[\var[Y_i(0)|X_i,W_i]]~.
\end{align*}
\end{theorem}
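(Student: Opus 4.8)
The plan is to exploit the representation \eqref{eq:Deltahat_alt}, which recasts $\hat\Delta_n$ as a difference-in-means estimator applied to the adjusted outcome $\tilde Y_i$, and then to reduce to an ``oracle'' version in which the estimated working models are replaced by their population counterparts. Concretely, define the oracle adjusted outcome $\tilde Y_i^\ast = Y_i - \frac{1}{2}(m_{1,n}(X_i,W_i) + m_{0,n}(X_i,W_i))$ and the oracle estimator $\hat\Delta_n^\ast = \frac{1}{n}\sum_{1 \le i \le 2n}(2D_i - 1)\tilde Y_i^\ast$. Since \eqref{eq:Deltahat_alt} gives $\hat\Delta_n = \frac{1}{n}\sum_{1 \le i \le 2n}(2D_i-1)\tilde Y_i$, subtracting and rescaling yields
\begin{equation*}
\sqrt n(\hat\Delta_n - \hat\Delta_n^\ast) = -\frac{1}{\sqrt 2}\sum_{d \in \{0,1\}} \frac{1}{\sqrt{2n}}\sum_{1 \le i \le 2n}(2D_i-1)\bigl(\hat m_{d,n}(X_i,W_i) - m_{d,n}(X_i,W_i)\bigr)~,
\end{equation*}
which is $o_P(1)$ by the rate condition \eqref{eq:rate}. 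Hence it suffices to prove \eqref{eq:normal} with $\hat\Delta_n^\ast$ in place of $\hat\Delta_n$; this is the step in which the high-level condition on $\hat m_{d,n}$ enters, and it cleanly isolates the estimation error in the working models.

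The oracle estimator $\hat\Delta_n^\ast$ is precisely the matched-pairs difference-in-means estimator of \cite{bai2021inference} computed on the outcome $\tilde Y_i^\ast$, whose potential outcomes are $\tilde Y_i^\ast(d) = Y_i(d) - \frac{1}{2}(m_{1,n}(X_i,W_i)+m_{0,n}(X_i,W_i))$. The crucial structural feature is that the adjustment is a deterministic function of $(X_i,W_i)$ not depending on $d$, so $\tilde Y_i^\ast(1) - \tilde Y_i^\ast(0) = Y_i(1) - Y_i(0)$, $\var[\tilde Y_i^\ast(d)\mid X_i,W_i] = \var[Y_i(d)\mid X_i,W_i]$, and in particular $E[\tilde Y_i^\ast(1) - \tilde Y_i^\ast(0)] = \Delta(Q)$. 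I would then appeal to the limit theory for the matched-pairs difference-in-means estimator to conclude that $\sqrt n(\hat\Delta_n^\ast - \Delta(Q))$ is asymptotically normal with variance equal to the \cite{bai2021inference} formula evaluated at $\tilde Y^\ast$, namely $\tilde\sigma_n^2 = \tfrac{1}{2}\var[E[\tilde Y_i^\ast(1) - \tilde Y_i^\ast(0)\mid X_i]] + E[\var[\tilde Y_i^\ast(1)\mid X_i]] + E[\var[\tilde Y_i^\ast(0)\mid X_i]]$.

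The main obstacle is that $\tilde Y_i^\ast$ forms a triangular array, since the working models depend on $n$, so the cited CLT (stated for a fixed outcome) cannot be applied verbatim and must be re-established for $n$-varying summands. The argument would follow the \cite{bai2021inference} decomposition of $\sqrt n(\hat\Delta_n^\ast - \Delta(Q))$ into a leading pair-wise sum of the within-pair differences of the components of $\tilde Y_i^\ast(D_i)$ orthogonal to $(X_i,W_i)$ together with the heterogeneous-effect term $E[Y_i(1) - Y_i(0)\mid X_i,W_i]$, to which a Lindeberg CLT applies, plus remainder terms built from differences of conditional moments across the two units in each pair. The remainders are driven to zero by Assumption \ref{ass:close} combined with the \emph{uniform} Lipschitz property in Assumption \ref{ass:md}(c), which guarantees that units close in $X_i$ have close conditional moments of $\tilde Y_i^\ast$ uniformly over $n$; the Lindeberg condition is verified via the uniform integrability in Assumption \ref{ass:md}(b); and Assumption \ref{ass:md}(a) ensures $\liminf_n \tilde\sigma_n^2 > 0$, so the normalization in \eqref{eq:normal} is well defined. (Note Assumption \ref{ass:pairsclose} is not needed here, consistent with its stated role in variance estimation only.)

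It remains to verify that $\tilde\sigma_n^2 = \sigma_n^2(Q)$. Writing $g_{d,n} = E[Y_i(d)\mid X_i,W_i] - \frac{1}{2}(m_{1,n}+m_{0,n})$, the law of total variance gives $E[\var[\tilde Y_i^\ast(d)\mid X_i]] = E[\var[Y_i(d)\mid X_i,W_i]] + E[\var[g_{d,n}\mid X_i]]$, which recovers $\sigma_{3,n}^2(Q)$ plus the residual $E[\var[g_{1,n}\mid X_i]] + E[\var[g_{0,n}\mid X_i]]$. Using the identities $g_{1,n} + g_{0,n} = E[Y_i(1)+Y_i(0)\mid X_i,W_i] - (m_{1,n}+m_{0,n})$ and $g_{1,n} - g_{0,n} = E[Y_i(1)-Y_i(0)\mid X_i,W_i]$, a short polarization computation shows that this residual, added to the first term $\tfrac{1}{2}\var[E[Y_i(1)-Y_i(0)\mid X_i]]$ of $\tilde\sigma_n^2$, collapses exactly to $\sigma_{1,n}^2(Q) + \sigma_{2,n}^2(Q)$, completing the identification of the limiting variance.
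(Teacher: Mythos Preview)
Your approach is essentially the same as the paper's: both reduce to the oracle estimator via \eqref{eq:rate}, then establish a triangular-array CLT for the matched-pairs difference-in-means on $\tilde Y_i^\ast$, and finally rearrange the variance. The paper carries this out explicitly by writing $\sqrt n(\hat\Delta_n^\ast - \Delta(Q)) = A_n - B_n + C_n - D_n$, where $A_n, B_n$ are centered conditionally on $(X^{(n)}, D^{(n)})$ (handled by conditional Lindeberg CLTs, with the $n$-dependence controlled via uniform-integrability lemmas built on Assumption~\ref{ass:md}(b)) and $C_n - D_n$ is a function of $(X^{(n)}, D^{(n)})$ alone; the pieces are combined by a subsequence argument. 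One slip in your description: the decomposition must project onto $X_i$, the matching variable, not onto $(X_i,W_i)$. Within-pair differences of moments conditional on $(X_i,W_i)$ do \emph{not} vanish, since $W_i$ is not matched on and Assumption~\ref{ass:close} gives no control over $|W_{\pi(2j-1)}-W_{\pi(2j)}|$; it is precisely Assumption~\ref{ass:md}(c), which makes the relevant conditional moments Lipschitz in $X_i$, that kills the remainders. Your stated variance formula $\tilde\sigma_n^2$ and the polarization algebra are already (correctly) conditioned on $X_i$, so this looks like a slip in wording rather than in substance.
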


In order to facilitate the use of Theorem \ref{thm:main} for inference about $\Delta(Q)$, we next provide a consistent estimator of $\sigma_n(Q)$. Define
\begin{align*}
\hat \tau_n^2 & = \frac{1}{n} \sum_{1 \leq j \leq n} (\tilde Y_{\pi(2j - 1)} - \tilde Y_{\pi(2j)})^2 \\
\hat \lambda_n & = \frac{2}{n} \sum_{1 \leq j \leq \lfloor \frac{n}{2} \rfloor} (\tilde Y_{\pi(4j - 3)} - \tilde Y_{\pi(4j - 2)}) (\tilde Y_{\pi(4j - 1)} - \tilde Y_{\pi(4j)}) (D_{\pi(4j - 3)} - D_{\pi(4j - 2)}) (D_{\pi(4j - 1)} - D_{\pi(4j)})~,
\end{align*}
where $\tilde Y_i$ is defined in \eqref{eq:Ytilde}.
The variance estimator is given by
\begin{equation} \label{eq:var}
\hat \sigma_n^2 = \hat \tau_n^2 - \frac{1}{2}(\hat \lambda_n + \hat \Delta_n^2)~.
\end{equation}
The variance estimator in \eqref{eq:var}, in particular its component $\hat \lambda_n$, is analogous to the ``pairs of pairs'' variance estimator in \cite{bai2021inference}. Such a variance estimator has also been used in \cite{abadie2008estimation} in a related setting. Note that it can be shown similarly as in Remark 3.9 of \cite{bai2021inference} that $\hat \sigma_n^2$ in \eqref{eq:var} is nonnegative.

Theorem \ref{thm:var} below establishes the consistency of this estimator and its implications for inference about $\Delta(Q)$.  In the statement of the theorem, we make use of the following notation: for any scalars $a$ and $b$, $[a \pm b]$ is understood to be $[a - b, a + b]$.
\begin{theorem} \label{thm:var}
Suppose $Q$ satisfies Assumption \ref{ass:Q}, the treatment assignment mechanism satisfies Assumptions \ref{ass:treatment}--\ref{ass:pairsclose}, and $m_{d, n}(\cdot)$ for $d \in \{0, 1\}$ and $n \geq 1$ satisfy Assumption \ref{ass:md}. Further suppose $\hat m_{d, n}(\cdot)$ satisfies \eqref{eq:rate} and
\begin{equation} \label{eq:L2}
\frac{1}{2n} \sum_{1 \leq i \leq 2n} (\hat m_{d, n}(X_i,W_i) - m_{d, n}(X_i,W_i))^2 \stackrel{P}{\rightarrow} 0~.
\end{equation}
Then,
\[ \frac{\hat \sigma_n}{\sigma_n(Q)} \stackrel{P}{\rightarrow} 1~. \] 
Hence, \eqref{eq:normal} holds with $\hat \sigma_n$ in place of $\sigma_n(Q)$.  In particular, for any $\alpha \in (0,1)$,
\[ P \left \{ \Delta(Q) \in \left [\hat \Delta_n \pm \hat \sigma_n \Phi^{-1} \left ( 1 - \frac{\alpha}{2} \right )\right ] \right\} \rightarrow 1-\alpha~, \]
where $\Phi$ is the standard normal c.d.f.\
\end{theorem}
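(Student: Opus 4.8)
The plan is to exploit the representation \eqref{eq:Deltahat_alt}, which shows that $\hat\Delta_n$, and hence the entire variance estimator $\hat\sigma_n^2$ in \eqref{eq:var}, is exactly the ``pairs of pairs'' variance estimator of \cite{bai2021inference} applied to the adjusted outcomes $\tilde Y_i$ in \eqref{eq:Ytilde}. The argument then proceeds in two stages: first replace the estimated working models $\hat m_{d,n}$ by the population targets $m_{d,n}$, and then identify the probability limits of the resulting infeasible quantities and verify that they recombine to $\sigma_n^2(Q)$.

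For the first stage I would introduce the infeasible adjusted outcome $\tilde Y_i^\ast = Y_i - \tfrac12(m_{1,n}(X_i,W_i) + m_{0,n}(X_i,W_i))$ and the versions $\tilde\tau_n^2$, $\tilde\lambda_n$, $\tilde\Delta_n$ of $\hat\tau_n^2$, $\hat\lambda_n$, $\hat\Delta_n$ obtained by substituting $\tilde Y_i^\ast$ for $\tilde Y_i$. Writing $\hat r_{d,i} = \hat m_{d,n}(X_i,W_i) - m_{d,n}(X_i,W_i)$, each difference $\hat\tau_n^2 - \tilde\tau_n^2$ and $\hat\lambda_n - \tilde\lambda_n$ expands into cross terms that are products of first differences of $\tilde Y_i^\ast$ with first differences of $\hat r_{d,i}$, plus purely quadratic terms in $\hat r_{d,i}$. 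The quadratic terms are $o_P(1)$ directly by \eqref{eq:L2}, and the cross terms are $o_P(1)$ by Cauchy--Schwarz, using \eqref{eq:L2} together with the boundedness of $\tfrac1n\sum_i (\tilde Y_i^\ast)^2$ in probability, which follows from Assumption \ref{ass:md}(b) (uniform integrability forces $\limsup_n E[m_{d,n}^2(X_i,W_i)] < \infty$) and Assumption \ref{ass:Q}(b). The negligibility of $\hat\Delta_n - \tilde\Delta_n$ is already contained in the proof of Theorem \ref{thm:main} via \eqref{eq:rate}. This reduces the claim to showing $\tilde\sigma_n^2 := \tilde\tau_n^2 - \tfrac12(\tilde\lambda_n + \tilde\Delta_n^2) \stackrel{P}{\to} \sigma_n^2(Q)$.

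For the second stage, note that $\tilde Y_i^\ast$ has the potential-outcome structure $\tilde Y_i^\ast = \tilde Y_i^\ast(1)D_i + \tilde Y_i^\ast(0)(1-D_i)$ with $\tilde Y_i^\ast(d) = Y_i(d) - \tfrac12(m_{1,n} + m_{0,n})$, so the matched-pairs laws of large numbers underlying \cite{bai2021inference} apply. Because the units within a pair are close in $X_i$ (Assumption \ref{ass:close}) and conditionally i.i.d.\ given $X_i$, one obtains $\tilde\tau_n^2 \stackrel{P}{\to} E[\var[\tilde Y_i^\ast(1)|X_i]] + E[\var[\tilde Y_i^\ast(0)|X_i]] + E[(E[Y_i(1)-Y_i(0)|X_i])^2]$; because consecutive pairs are additionally close in $X_i$ (Assumption \ref{ass:pairsclose}), the four-factor product structure in $\hat\lambda_n$ collapses, once the sign factors $D_{\pi(4j-3)} - D_{\pi(4j-2)}$ and $D_{\pi(4j-1)} - D_{\pi(4j)}$ are absorbed to recover treated-minus-control contrasts, to $\tilde\lambda_n \stackrel{P}{\to} E[(E[Y_i(1)-Y_i(0)|X_i])^2]$; and $\tilde\Delta_n \stackrel{P}{\to}\Delta(Q)$. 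Substituting these into $\tilde\sigma_n^2$ and applying the law of total variance (first conditioning on $(X_i,W_i)$, then on $X_i$) collapses the expression to $\sigma_{1,n}^2(Q) + \sigma_{2,n}^2(Q) + \sigma_{3,n}^2(Q) = \sigma_n^2(Q)$, the same algebraic identity used in the variance computation for Theorem \ref{thm:main}. Since this computation also shows $\sigma_n^2(Q) = E[\var[\tilde Y_i^\ast(1)|X_i]] + E[\var[\tilde Y_i^\ast(0)|X_i]] + \tfrac12\var[E[Y_i(1)-Y_i(0)|X_i]]$, Assumption \ref{ass:md}(a) yields $\liminf_n \sigma_n^2(Q) > 0$, so $\hat\sigma_n^2 - \sigma_n^2(Q)\stackrel{P}{\to}0$ upgrades to $\hat\sigma_n/\sigma_n(Q)\stackrel{P}{\to}1$. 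The confidence-interval statement then follows by combining this with the studentized limit \eqref{eq:normal} of Theorem \ref{thm:main} and Slutsky's lemma.

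The main obstacle I anticipate is the second stage, specifically establishing the probability limit of $\tilde\lambda_n$ as a genuine triangular-array statement. Unlike the within-pair term $\tilde\tau_n^2$, the cross-pair term couples four units across two consecutive pairs through a product of four factors, so its limit rests on Assumption \ref{ass:pairsclose} (``pairing the pairs'') and on controlling the dependence induced by the matching permutation $\pi$. Moreover, because $m_{d,n}$ is permitted to vary with $n$, I cannot cite the fixed-transformation results of \cite{bai2021inference} verbatim; instead I must verify that the requisite law of large numbers holds uniformly over the array, which is precisely what the uniform Lipschitz condition in Assumption \ref{ass:md}(c) and the uniform integrability in Assumption \ref{ass:md}(b) are designed to supply.
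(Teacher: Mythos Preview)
Your proposal is correct and follows essentially the same approach as the paper: introduce the infeasible adjusted outcome $\mathring Y_i = Y_i - \tfrac12(m_{1,n}+m_{0,n})$, use \eqref{eq:L2} and Cauchy--Schwarz to pass from $\hat\tau_n^2,\hat\lambda_n$ to their infeasible analogues, identify the same probability limits for the latter via triangular-array versions of the \cite{bai2021inference} lemmas (made possible by Assumptions~\ref{ass:md}(b)--(c)), and finish with Assumption~\ref{ass:md}(a) to upgrade $\hat\sigma_n^2-\sigma_n^2(Q)\stackrel{P}{\to}0$ to a ratio statement. Your anticipated obstacle---handling $\tilde\lambda_n$ as a genuine triangular-array object---is exactly where the paper spends its effort as well.
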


\begin{remark}\label{rem:conservative_1}
    Based on \eqref{eq:Deltahat_alt}, it is natural to estimate $\sigma_n^2(Q)$ using the usual estimator of the limiting variance of the difference-in-means estimator, i.e.,
   \begin{align*}
\hat \sigma_{\mathrm{diff},n}^{2} & = \frac{1}{n}\sum_{1\leq i\leq 2n} D_i \left(\tilde Y_i- \left(\frac{1}{n}\sum_{1 \leq i \leq 2n}D_i \tilde Y_i\right) \right)^2  + \frac{1}{n}\sum_{1\leq i\leq 2n} (1-D_i) \left(\tilde Y_i- \left(\frac{1}{n}\sum_{1 \leq i \leq 2n}(1-D_i) \tilde Y_i\right) \right)^2.
   \end{align*} 
   However, it can be shown that $\hat \sigma_{\mathrm{diff},n}^{2} = \sigma_{\mathrm{diff},n}^{2}(Q) + o_P(1)$, where
\begin{equation*}
\sigma_{\mathrm{diff},n}^{2}(Q) =    \var \left[Y_i(1) - \frac{1}{2}(m_{1,n}(X_i,W_i) + m_{0,n}(X_i,W_i))\right] + \var \left[Y_i(0) - \frac{1}{2}(m_{1,n}(X_i,W_i) + m_{0,n}(X_i,W_i) )\right]~.
\end{equation*}
Furthermore, 
\begin{align*}
\sigma_{\mathrm{diff},n}^{2}(Q) - \sigma_{n}^{2}(Q) & = \frac{1}{2} \var \left[ E [Y_i(1)+Y_i(0)- (m_{1,n}(X_i,W_i) + m_{0,n}(X_i,W_i)) |X_i]\right] \geq 0~,
\end{align*}
where the inequality is strict unless
\begin{align*}
 E [Y_i(1)+Y_i(0)- (m_{1,n}(X_i,W_i) + m_{0,n}(X_i,W_i)) |X_i] = E [ Y_i(1)+Y_i(0)- (m_{1,n}(X_i,W_i) + m_{0,n}(X_i,W_i))]
\end{align*}
with probability one.  In this sense, the usual estimator of the limiting variance of the difference-in-means estimator is conservative.
\end{remark}

\begin{remark} \label{remark:equivalent}
An important and immediate implication of Theorem \ref{thm:main} is that $\sigma_n^2(Q)$ is minimized when 
\begin{eqnarray*}
&& E[Y_i(0) + Y_i(1)|X_i,W_i] - E[Y_i(0) + Y_i(1)|X_i] = \\
&& \hspace{2cm} m_{0, n}(X_i,W_i) + m_{1, n}(X_i,W_i) - E[m_{0, n}(X_i,W_i) + m_{1, n}(X_i,W_i)| X_i] 
\end{eqnarray*}
with probability one.  In other words, the ``working model'' for $E[Y_i(0) + Y_i(1)|X_i,W_i]$ given by $m_{0, n}(X_i,W_i) + m_{1, n}(X_i,W_i)$, need only be correct ``on average'' over the variables that are not used in determining the pairs.  For such a choice of $m_{0, n}(X_i,W_i)$ and $m_{1, n}(X_i,W_i)$, $\sigma_n^2(Q)$ in Theorem \ref{thm:main} becomes simply
\[ \frac{1}{2}\var[E[Y_i(1) - Y_i(0) \Big | X_i,W_i]] + E[\var[Y_i(1)|X_i,W_i]] + E[\var[Y_i(0)|X_i,W_i]]~, \]
which agrees with the variance obtained in \cite{bai2021inference} when both $X_i$ and $W_i$ are used in determining the pairs. Such a variance also achieves the efficiency bound derived by \cite{armstrong2022asymptotic}.
\end{remark}

\begin{remark}
    Following \cite{bai2023inference}, it is straightforward to extend the analysis in this paper to the case with multiple treatment arms and where treatment status is determined using a ``matched tuples'' design, but we do not pursue this further in this paper.
\end{remark}

\begin{remark}
    Following \cite{bai2021inference}, we conjecture it it possible to establish the validity of a randomization test based on the test statistic studentized by a randomized version of \eqref{eq:var}. We emphasize that the validity of the randomization test depends crucially on the choice of studentization in the test statistic. See, for instance, Remark 3.16 in \cite{bai2021inference}. Such tests have been studied in finite-population settings with covariate adjustments by \cite{zhao2021covariate-adjusted}. We leave a detailed analysis of randomization tests for future work.
\end{remark}

\section{Linear Adjustments} \label{sec:linear}
In this section, we consider linearly covariate-adjusted estimators of $\Delta(Q)$ based on a set of regressors generated by $X_i \in  \mathbf R^{k_x}$ and $W_i \in \mathbf R^{k_w}$.  To this end, define $\psi_i = \psi(X_i, W_i)$, where $\psi: \mathbf R^{k_x} \times \mathbf R^{k_w} \to \mathbf R^p$. We impose the following assumptions on the function $\psi$:

\begin{assumption} \label{ass:psi}
The function $\psi$ is such that
\vspace{-.3cm}
\begin{enumerate}[(a)]
\item no component of $\psi$ is constant and $E[\var[\psi_i | X_i]]$ is non-singular.
\item $\var[\psi_i] < \infty$.
\item $E[\psi_i | X_i = x]$, $E[\psi_i \psi_i' | X_i = x]$, and $E[\psi_i Y_i(d) | X_i = x]$ for $d \in \{0, 1\}$ are Lipschitz.
\end{enumerate}
\end{assumption}
\noindent Assumption \ref{ass:psi} is analogous to Assumption \ref{ass:Q}. Note, in particular, that Assumption \ref{ass:psi}(a) rules out situations where $\psi_i$ is a function of $X_i$ only. See Remark \ref{remark:fogarty} for a discussion of the behavior of the covariate-adjusted estimators in such situations.

\subsection{Linear Adjustments without Pair Fixed Effects}
Consider the following linear regression model:
\begin{equation} \label{eq:naive}
Y_i = \alpha + \Delta D_i + \psi_i' \beta + \epsilon_i~.
\end{equation}
Let $\hat \alpha_n^{\rm naive}$, $\hat \Delta_n^{\rm naive}$, and $\hat \beta_n^{\rm naive}$ denote the OLS estimators of $\alpha$, $\Delta$, and $\beta$ in \eqref{eq:naive}. We call these estimators na\"ive because the corresponding regression adjustment is subject to Freedman's critique and can lead to an adjusted estimator that is less efficient than the simple difference-in-means estimator $\hat \Delta_n^{\rm unadj}$. 

It follows from direct calculation that
\[ \hat \Delta_n^{\rm naive} = \frac{1}{n} \sum_{1 \leq i \leq 2n} (Y_i - \psi_i' \hat \beta_n^{\rm naive}) (2 D_i - 1)~. \]
Therefore, $\hat \Delta_n^{\rm naive}$ satisfies \eqref{eq:hatdelta}--\eqref{eq:hatmu} with
\[ \hat m_{d, n}(X_i, W_i) = \psi_i' \hat \beta_n^{\rm naive}~. \]

Theorem \ref{thm:naive} establishes \eqref{eq:rate} and \eqref{eq:L2} for a suitable choice of $m_{d, n}(X_i, W_i)$ for $d \in \{0, 1\}$ and, as a result, the limiting distribution of $\hat \Delta_n^{\rm naive}$ and the validity of the variance estimator.

\begin{theorem} \label{thm:naive}
Suppose $Q$ satisfies Assumption \ref{ass:Q} and the treatment assignment mechanism satisfies Assumptions \ref{ass:treatment}--\ref{ass:close}. Further suppose $\psi$ satisfies Assumption \ref{ass:psi}. Then, as $n \to \infty$,
\[ \hat \beta_n^{\rm naive} \stackrel{P}{\to} \beta^{\rm naive} = \var[\psi_i]^{-1} \cov[\psi_i, Y_i(1) + Y_i(0)]~. \]
Moreover, \eqref{eq:rate}, \eqref{eq:L2}, and Assumption \ref{ass:md} are satisfied with
\begin{equation} \nonumber
m_{d, n}(X_i, W_i) = \psi_i' \beta^{\rm naive}   
\end{equation}
for $d \in \{0, 1\}$ and $n \geq 1$.
\end{theorem}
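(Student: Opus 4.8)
The plan is to verify, in turn, the consistency claim for $\hat\beta_n^{\rm naive}$, the two high-level conditions \eqref{eq:rate} and \eqref{eq:L2}, and the three parts of Assumption \ref{ass:md}, all for the choice $m_{d,n}(X_i,W_i)=\psi_i'\beta^{\rm naive}$. Since this $m_{d,n}$ is the same for $d\in\{0,1\}$ and does not depend on $n$, I note at the outset that $\tfrac12(m_{1,n}+m_{0,n})=\psi_i'\beta^{\rm naive}$ and $\hat m_{d,n}=\psi_i'\hat\beta_n^{\rm naive}$, so $\tilde Y_i$ in \eqref{eq:Ytilde} is $Y_i-\psi_i'\hat\beta_n^{\rm naive}$ and every quantity to be controlled is a linear form in $\hat\beta_n^{\rm naive}$ or $\beta^{\rm naive}$. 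This reduces the whole argument to (i) pinning down the probability limit of $\hat\beta_n^{\rm naive}$ and (ii) routine moment bookkeeping for $\psi_i$.

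The core step is identifying the probability limit of $\hat\beta_n^{\rm naive}$. I would write $\hat\beta_n^{\rm naive}$ through the OLS normal equations for \eqref{eq:naive}, equivalently via the Frisch--Waugh--Lovell theorem, partialling $(1,D_i)$ out of $\psi_i$ and $Y_i$, and then compute the probability limit of each sample moment of the regressors $(1,D_i,\psi_i')'$ and of their cross moments with $Y_i$. The moments that are averages of i.i.d.\ functions of $U_i$ alone, namely $\tfrac1{2n}\sum\psi_i$ and $\tfrac1{2n}\sum\psi_i\psi_i'$, converge by the ordinary weak law. The moments that involve the treatment indicator, $\tfrac1{2n}\sum D_i$, $\tfrac1{2n}\sum D_i\psi_i$, $\tfrac1{2n}\sum D_iY_i$, and $\tfrac1{2n}\sum\psi_iY_i$, are the nonstandard ones; these I would handle by conditioning on $(U^{(n)},X^{(n)})$ and using Assumption \ref{ass:treatment}, so that within each pair the treated unit is uniform: the conditional mean of each such average equals $\tfrac12$ times the corresponding unweighted average, and the conditional variance is $O(1/n)$ by independence across pairs. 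A truncation argument covers the cross moments for which only a finite first moment is guaranteed (e.g.\ $\psi_iY_i(d)$, integrable by Cauchy--Schwarz under Assumptions \ref{ass:Q}(b) and \ref{ass:psi}(b)). The payoff is that the treatment indicator asymptotically decouples from $\psi_i$, i.e.\ the limiting $\cov[D_i,\psi_i]=0$, so that solving the limiting normal equations leaves the demeaned $\psi_i$ and yields $\hat\beta_n^{\rm naive}\stackrel{P}{\to}\beta^{\rm naive}$, the coefficient of the population linear projection of $\tfrac12(Y_i(1)+Y_i(0))$ onto $\psi_i$.

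Given consistency, the high-level conditions are short. For \eqref{eq:rate}, since $\hat m_{d,n}-m_{d,n}=\psi_i'(\hat\beta_n^{\rm naive}-\beta^{\rm naive})$, I would write the left-hand side as $\bigl(\tfrac{1}{\sqrt{2n}}\sum_{1\le i\le 2n}(2D_i-1)\psi_i'\bigr)(\hat\beta_n^{\rm naive}-\beta^{\rm naive})$ and show the normalized sum is $O_P(1)$: conditional on $(U^{(n)},X^{(n)})$ it is mean zero with conditional second moment $\tfrac1{2n}\sum_{1\le j\le n}\|\psi_{\pi(2j)}-\psi_{\pi(2j-1)}\|_2^2\le \tfrac1n\sum_{1\le i\le 2n}\|\psi_i\|_2^2=O_P(1)$ under Assumption \ref{ass:psi}(b), so Markov's inequality gives the bound, and multiplying by the $o_P(1)$ factor $\hat\beta_n^{\rm naive}-\beta^{\rm naive}$ gives \eqref{eq:rate}. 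For \eqref{eq:L2}, I would bound $\tfrac1{2n}\sum(\psi_i'(\hat\beta_n^{\rm naive}-\beta^{\rm naive}))^2\le\|\hat\beta_n^{\rm naive}-\beta^{\rm naive}\|_2^2\,\lambda_{\max}(\tfrac1{2n}\sum\psi_i\psi_i')=o_P(1)O_P(1)$.

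Finally, Assumption \ref{ass:md} reduces to moment bookkeeping for the fixed function $\psi_i'\beta^{\rm naive}$. Part (b) is immediate because $E[(\psi_i'\beta^{\rm naive})^2]<\infty$ under Assumption \ref{ass:psi}(b) and a fixed square-integrable function is uniformly integrable. Part (c) follows by expanding the required conditional expectations of $\psi_i'\beta^{\rm naive}$, of $(\psi_i'\beta^{\rm naive})^2$, and of their products with $Y_i(d)$ into the conditional moments $E[\psi_i\mid X_i]$, $E[\psi_i\psi_i'\mid X_i]$, and $E[\psi_iY_i(d)\mid X_i]$, Lipschitz by Assumption \ref{ass:psi}(c), together with $E[Y_i(d)\mid X_i]$ and $E[Y_i^2(d)\mid X_i]$, Lipschitz by Assumption \ref{ass:Q}(c); uniformity over $n$ is vacuous since $m_{d,n}$ is constant in $n$. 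Part (a), the non-degeneracy requirement, holds under the maintained Assumptions \ref{ass:Q}(a) and \ref{ass:psi}(a). I expect the main obstacle to be the matched-pairs weak law in the core step, in particular obtaining the treatment-interacted limits under only first-moment integrability via truncation and correctly extracting the propensity factor $\tfrac12$ that produces the projection form of $\beta^{\rm naive}$.
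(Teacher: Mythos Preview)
Your proposal is correct and shares the paper's overall plan (Frisch--Waugh--Lovell, moment limits, then direct verification of \eqref{eq:L2} and Assumption~\ref{ass:md}), but you take a more elementary route for the treatment-interacted sample moments and for the $O_P(1)$ bound behind \eqref{eq:rate}. You condition on all of $U^{(n)}$---legitimate by Assumption~\ref{ass:treatment}, since $D^{(n)}\independent(Y^{(n)}(1),Y^{(n)}(0),W^{(n)})\mid X^{(n)}$---and then exploit the Rademacher structure of the within-pair assignments together with Chebyshev and truncation. The paper instead conditions on $(X^{(n)},D^{(n)})$, appeals to the weak-law lemmas of \cite{bai2021inference} (which use the Lipschitz conditions in Assumption~\ref{ass:psi}(c)), and for \eqref{eq:rate} decomposes $\sqrt n\hat\Delta_{\psi,n}=F_n-G_n+H_n$, handling $F_n,G_n$ via a conditional CLT and $H_n$ via Lipschitz closeness of matched units. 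Your route defers any use of the Lipschitz hypothesis until Assumption~\ref{ass:md}(c), where it is genuinely required; the paper's route reuses machinery already built for the unadjusted estimator. One side remark: your identification of the limit as the projection coefficient of $\tfrac12(Y_i(1)+Y_i(0))$ onto $\psi_i$ is indeed what the calculation delivers and differs by a factor $\tfrac12$ from the displayed formula in the theorem; this is immaterial downstream, since only the actual probability limit of $\hat\beta_n^{\rm naive}$ enters $m_{d,n}$.
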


\begin{remark} \label{remark:lin}
\cite{freedman2008regression} studies regression adjustment based on \eqref{eq:naive} when treatment is assigned by complete randomization instead of a ``matched pairs'' design. In such settings, \cite{lin2013agnostic} proposes adjustment based on the following linear regression model:
\begin{equation} \label{eq:lin}
Y_i = \alpha + \Delta D_i + (\psi_i - \bar \psi_n)' \gamma + D_i (\psi_i - \bar \psi_n)' \eta + \epsilon_i~,
\end{equation}
where
\[ \bar \psi_n = \frac{1}{2n} \sum_{1 \leq i \leq 2n} \psi_i~. \]
Let $\hat \alpha_n^{\rm int}, \hat \Delta_n^{\rm int}, \hat \gamma_n^{\rm int}, \hat \eta_n^{\rm int}$ denote the OLS estimators for $\alpha, \Delta, \gamma, \eta$ in \eqref{eq:lin}. It is straightforward to show $\hat \Delta_n^{\rm int}$ satisfies \eqref{eq:hatdelta}--\eqref{eq:hatmu} with
\begin{align*}
\hat m_{1, n}(X_i, W_i) & = (\psi_i - \hat \mu_{\psi, n}(1))' (\hat \gamma_n^{\rm int} + \hat \eta_n^{\rm int}) \\
\hat m_{0, n}(X_i, W_i) & = (\psi_i - \hat \mu_{\psi, n}(0))' \hat \gamma_n^{\rm int}~,
\end{align*}
where
\[ \hat \mu_{\psi, n}(d) = \frac{1}{n} \sum_{1 \leq i \leq 2n} I \{D_i = d\} \psi_i~. \]
It can be shown using similar arguments to those used to establish Theorem \ref{thm:naive} that \eqref{eq:rate} and Assumption \ref{ass:md} are satisfied with
\[ m_{d, n}(X_i, W_i) = (\psi_i - E[\psi_i])' \var[\psi_i]^{-1} \cov[\psi_i, Y_i(d)] \]
for $d \in \{0, 1\}$ and $n \geq 1$. It thus follows by inspecting the expression for $\sigma^2_n(Q)$ in Theorem \ref{thm:main} that the limiting variance of $\hat \Delta_n^{\rm int}$ is the same as that of $\hat \Delta_n^{\rm naive}$ based on \eqref{eq:naive}.
\end{remark}

\begin{remark}\label{rem:sigma_naive1}
Note that $\hat \Delta_n^{\rm naive}$ is the ordinary least squares estimator for $\Delta$ in the linear regression 
\begin{align*}
    Y_i - \psi_i' \hat \beta_n^{\rm naive} =  \alpha + \Delta D_i + \epsilon_{i}~.
\end{align*}
Furthermore, Theorem \ref{thm:naive} implies that its limiting variance is $\sigma_{\mathrm{naive}}^2(Q)$, given by $\sigma_n^2(Q)$ in Theorem \ref{thm:main} with $m_d(X_i,W_i) = \psi_i'\beta^{\rm naive}$.  The usual heteroskedasticity-robust estimator of the limiting variance of $\hat \Delta_n^{\rm naive}$ is, however, simply $\hat \sigma_{\mathrm{diff}, n}^2$ defined in Remark \ref{rem:conservative_1} with $\hat m_{d,n}(X_i,W_i) = \psi_i' \hat \beta_n^{\rm naive}$.  It thus follows that $\hat \sigma_{\mathrm{diff}, n}^2$ is conservative for $\sigma_{\mathrm{naive}}^2(Q)$ in the sense described therein. 
It is, of course, possible to estimate $\sigma_{\mathrm{naive}}^2(Q)$ consistently using $\hat \sigma_n^2$ proposed in Theorem \ref{thm:var} with $\hat m_{d,n}(W_i,X_i) = \psi_i'\hat \beta_n^{\rm naive}$, but $\sigma_{\mathrm{naive}}^2(Q)$ is not guaranteed to be smaller than the limiting variance of the unadjusted estimator, i.e., $\sigma_{\mathrm{unadj}}^2(Q)$, so the linear adjustment without pair fixed effects can harm the precision of the estimator.  Evidence of this phenomenon is provided in our simulations in Section \ref{sec:simulations}.
\end{remark}


\subsection{Linear Adjustments with Pair Fixed Effects}
\label{sec:pfe}
Remark \ref{remark:lin} implies that in ``matched pairs'' designs, including interaction terms in the linear regression does not lead to an estimator with lower limiting variance than the one based on the linear regression without interaction terms. It is therefore interesting to study whether there exists a linearly covariate-adjusted estimator with lower limiting variance than the ones based on \eqref{eq:naive} and \eqref{eq:lin} as well as the difference-in-means estimator. To that end, consider instead the following linear regression model:
\begin{equation} \label{eq:pfe}
Y_i = \Delta D_i + \psi_i' \beta + \sum_{1 \leq j \leq n} \theta_j I \{i \in \{\pi(2j - 1), \pi(2j)\}\} + \epsilon_i~.
\end{equation}
Let $\hat \Delta_n^{\rm pfe}$, $\hat \beta_n^{\rm pfe}$, and $\hat \gamma_{j, n}$, $1 \leq j \leq n$ denote the OLS estimators of $\Delta$, $\beta$, $\theta_j$, $1 \leq j \leq n$ in \eqref{eq:pfe}, where ``pfe'' stands for pair fixed effect. It follows from the Frisch-Waugh-Lovell theorem that
\[ \hat \Delta_n^{\rm pfe} = \frac{1}{n} \sum_{1 \leq i \leq 2n} (Y_i - \psi_i' \hat \beta_n^{\rm pfe}) (2 D_i - 1)~. \]
Therefore, $\hat \Delta_n^{\rm pfe}$ satisfies \eqref{eq:hatdelta}--\eqref{eq:hatmu} with
\[ \hat m_{d, n}(X_i, W_i) = \psi_i' \hat \beta_n^{\rm pfe}~. \]

Theorem \ref{thm:pfe} establishes \eqref{eq:rate} and \eqref{eq:L2} for a suitable choice of $m_{d, n}(X_i, W_i), d \in \{0, 1\}$ and, as a result, the limiting distribution of $\hat \Delta_n^{\rm pfe}$ and the validity of the variance estimator.

\begin{theorem} \label{thm:pfe}
Suppose $Q$ satisfies Assumption \ref{ass:Q} and the treatment assignment mechanism satisfies Assumptions \ref{ass:treatment}--\ref{ass:close}. Then, as $n \to \infty$,
\[ \hat \beta_n^{\rm pfe} \stackrel{P}{\to} \beta^{\rm pfe} = (2 E[\var[\psi_i | X_i]])^{-1} E[\cov[\psi_i, Y_i(1) + Y_i(0) | X_i]]~. \]
Moreover, \eqref{eq:rate}, \eqref{eq:L2}, and Assumption \ref{ass:md} are satisfied with
\[ m_{d, n}(X_i, W_i) = \psi_i' \beta^{\rm pfe} \]
for $d \in \{0, 1\}$ and $n \geq 1$.
\end{theorem}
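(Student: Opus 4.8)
The plan is to reduce the whole statement to one substantive computation — the probability limit of $\hat\beta_n^{\rm pfe}$ — after which the conditions \eqref{eq:rate}, \eqref{eq:L2}, and Assumption \ref{ass:md} follow with little extra work. First I would apply the Frisch--Waugh--Lovell theorem to write $\hat\beta_n^{\rm pfe} = (\sum_{1\le i\le 2n}\tilde\psi_i\tilde\psi_i')^{-1}\sum_{1\le i\le 2n}\tilde\psi_i Y_i$, where $\tilde\psi_i$ is $\psi_i$ residualized against $D_i$ and the pair dummies. Partialling out the pair dummies is within-pair demeaning, which replaces $\psi_i$ by $\pm\tfrac12(\psi_{\pi(2j-1)}-\psi_{\pi(2j)})$ for $i$ in pair $j$; partialling out $D_i$ then subtracts $\hat c_n D_i^\ast$, where $D_i^\ast=D_i-\tfrac12$ and $\hat c_n=\frac1n\sum_{1\le j\le n}(2D_{\pi(2j-1)}-1)(\psi_{\pi(2j-1)}-\psi_{\pi(2j)})$. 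Because the within-pair assignment is a fair coin independent of $(Y^{(n)}(1),Y^{(n)}(0),W^{(n)})$ given $X^{(n)}$ (Assumption \ref{ass:treatment}), $\hat c_n$ has conditional mean zero and $\hat c_n\stackrel{P}{\to}0$, so the $D_i^\ast$ correction is asymptotically negligible in both the ``denominator'' and ``numerator'' of $\hat\beta_n^{\rm pfe}$.

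It then remains to identify the two limits. Writing $\Delta\psi_j=\psi_{\pi(2j-1)}-\psi_{\pi(2j)}$ and $S_i=Y_i(1)+Y_i(0)$, one finds $\frac1n\sum_i\tilde\psi_i\tilde\psi_i'=\frac{1}{2n}\sum_{1\le j\le n}\Delta\psi_j\Delta\psi_j'+o_P(1)$, while taking the treatment assignment into account reduces the numerator to $\frac1n\sum_i\tilde\psi_i Y_i=\frac{1}{4n}\sum_{1\le j\le n}\Delta\psi_j(S_{\pi(2j-1)}-S_{\pi(2j)})+o_P(1)$. The key structural fact is that, conditional on $X^{(n)}$, the two units of a pair have close $X$-coordinates (Assumption \ref{ass:close}) while their $(W,Y(1),Y(0))$ are independent draws from the conditional law given $X$. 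I would invoke the law of large numbers for pair sums (as in \cite{bai2021inference}), using the uniform Lipschitz conditions in Assumptions \ref{ass:Q}(c) and \ref{ass:psi}(c) to absorb the discrepancy between $X_{\pi(2j-1)}$ and $X_{\pi(2j)}$, together with the identity $E[(\psi_1-\psi_2)(a_1-a_2)'\mid X]=2\cov[\psi_i,a_i\mid X_i]$ for conditionally i.i.d.\ $(\psi_1,a_1),(\psi_2,a_2)$. This yields $\frac{1}{2n}\sum_j\Delta\psi_j\Delta\psi_j'\stackrel{P}{\to}E[\var[\psi_i\mid X_i]]$ and $\frac{1}{4n}\sum_j\Delta\psi_j(S_{\pi(2j-1)}-S_{\pi(2j)})\stackrel{P}{\to}\tfrac12 E[\cov[\psi_i,Y_i(1)+Y_i(0)\mid X_i]]$, and combining them gives $\hat\beta_n^{\rm pfe}\stackrel{P}{\to}\beta^{\rm pfe}$, with invertibility guaranteed by Assumption \ref{ass:psi}(a). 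This pair-sum law of large numbers is the main obstacle: since the pairing $\pi=\pi_n(X^{(n)})$ is itself a function of the data and each summand couples the two (close but not identical) units of a pair, the ordinary i.i.d.\ law of large numbers does not apply, and one must carefully control the approximation error in $\|X_{\pi(2j-1)}-X_{\pi(2j)}\|_2$.

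With consistency in hand, the rest is routine. For \eqref{eq:rate}, since $\hat m_{d,n}(X_i,W_i)-m_{d,n}(X_i,W_i)=\psi_i'(\hat\beta_n^{\rm pfe}-\beta^{\rm pfe})$, I would factor the sum as $[\frac{1}{\sqrt{2n}}\sum_i(2D_i-1)\psi_i']\,(\hat\beta_n^{\rm pfe}-\beta^{\rm pfe})$; the bracketed row vector equals $\frac{1}{\sqrt{2n}}\sum_j\Delta\psi_j$ up to sign and is $O_P(1)$ (conditional mean zero, conditional variance converging by the same pair-sum argument), while the second factor is $o_P(1)$, so the product is $o_P(1)$. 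For \eqref{eq:L2}, I would bound $\frac{1}{2n}\sum_i(\psi_i'(\hat\beta_n^{\rm pfe}-\beta^{\rm pfe}))^2\le\|\hat\beta_n^{\rm pfe}-\beta^{\rm pfe}\|_2^2\,\lambda_{\max}(\frac{1}{2n}\sum_i\psi_i\psi_i')$, which is $o_P(1)$ because $\frac{1}{2n}\sum_i\psi_i\psi_i'\stackrel{P}{\to}E[\psi_i\psi_i']$ by the ordinary law of large numbers under Assumption \ref{ass:psi}(b). Finally, Assumption \ref{ass:md} holds with $m_{d,n}(X_i,W_i)=\psi_i'\beta^{\rm pfe}$: parts (b) and (c) follow from the moment and Lipschitz conditions in Assumption \ref{ass:psi} since $\beta^{\rm pfe}$ is a fixed vector, and part (a), which amounts to $E[\var[Y_i(d)-\psi_i'\beta^{\rm pfe}\mid X_i]]>0$, follows from Assumption \ref{ass:Q}(a), which excludes the degenerate case in which $Y_i(d)$ is an exact linear function of $\psi_i$ given $X_i$.
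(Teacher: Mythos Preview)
Your proposal is correct and follows essentially the same strategy as the paper: apply Frisch--Waugh--Lovell to reduce $\hat\beta_n^{\rm pfe}$ to a pair-level regression, show the $D$-partialling correction $\hat c_n$ (the paper's $\hat\Delta_{\psi,n}$) vanishes, and then invoke the pair-sum law of large numbers from \cite{bai2021inference} under Assumptions \ref{ass:close} and \ref{ass:psi}(c) for the numerator and denominator. The only cosmetic difference is that for the numerator you first average over the within-pair coin flip to replace $Y_{\pi(2j-1)}-Y_{\pi(2j)}$ by $\tfrac12(S_{\pi(2j-1)}-S_{\pi(2j)})$ and then apply the pair LLN to a treatment-free expression, whereas the paper works directly with $\delta_{\psi,j}\delta_{Y,j}$ and computes $E[\psi_{\pi(2j-1)}Y_{\pi(2j)}\mid X^{(n)}]$; both routes land on $E[\cov[\psi_i,Y_i(1)+Y_i(0)\mid X_i]]$.
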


\begin{remark} \label{remark:fogarty}
When $\psi$ is restricted to be a function of $X_i$ only, $\hat \Delta_n^{\rm pfe}$ coincides to first order with the unadjusted difference-in-means estimator $\hat \Delta_n^{\rm unadj}$ defined in \eqref{eq:na}. To see this, suppose further that $\psi$ is Lipschitz and that $\text{Var}[Y_i(d)|X_i = x], d \in \{0,1\}$ are bounded.  The proof of Theorem \ref{thm:pfe} reveals that $\hat \Delta_n^{\rm pfe}$ and $\hat \beta_n^{\rm pfe}$ coincide with the OLS estimators of the intercept and slope parameters in a linear regression of $(Y_{\pi(2j)} - Y_{\pi(2j-1)})(D_{\pi(2j)} - D_{\pi(2j-1)})$ on a constant and $(\psi_{\pi(2j)} - \psi_{\pi(2j-1)})(D_{\pi(2j)} - D_{\pi(2j-1)})$.  Using this observation, it follows by arguing as in Section S.1.1 of \cite{bai2021inference} that \[ \sqrt n (\hat \Delta_n^{\rm pfe} - \Delta(Q)) = \sqrt n (\hat \Delta_n^{\rm unadj} - \Delta(Q)) + o_P(1) ~.\] See also Remark 3.8 of \cite{bai2021inference}.
\end{remark}

\begin{remark} \label{remark:optimal}
Note in the expression of $\sigma_n^2(Q)$ in Theorem \ref{thm:main} only depends on $m_{d, n}(X_i, W_i), d \in \{0,1\}$ through $\sigma_{1, n}^2(Q)$. With this in mind, consider the class of all linearly covariate-adjusted estimators based on $\psi_i$, i.e., $m_{d, n}(X_i, W_i) = \psi_i' \beta(d)$. For this specification of $m_{d, n}(X_i, W_i), d \in \{0,1\}$,
\[ \sigma_{1, n}^2(Q) = E[(E[Y_i(1) + Y_i(0) | X_i, W_i] - E[Y_i(1) + Y_i(0) | X_i] - (\psi_i - E[\psi_i | X_i])' (\beta(1) + \beta(0)))^2]~. \]
It follows that among all such linear adjustments, $\sigma_n^2(Q)$ in \eqref{eq:normal} is minimized when
\[ \beta(1) + \beta(0) = 2 \beta^{\rm pfe}~. \]
This observation implies that the linear adjustment with  pair fixed effects, i.e., $\hat \Delta_n^{\rm pfe}$, yields the optimal linear adjustment in the sense of minimizing $\sigma_n^2(Q)$.  Its limiting variance is, in particular, weakly smaller than the limiting variance of the unadjusted difference-in-means estimator defined in \eqref{eq:na}. On the other hand, the covariate-adjusted estimators based on \eqref{eq:naive} or \eqref{eq:lin}, i.e., $\hat \Delta_n^{\rm naive}$ and $\hat \Delta_n^{\rm int}$, are in general not optimal among all linearly covariate-adjusted estimators based on $\psi_i$. In fact, the limiting variances of these two estimators may even be larger than that of the unadjusted difference-in-means estimator. 
\end{remark}

\begin{remark}
 ``Matched pairs'' design is essentially a non-parametric way to  adjust for $X_i$. Projecting $\psi_i$ on the pair dummies in \eqref{eq:pfe} is equivalent to pair-wise demeaning, which effectively removes $E (\psi_i|X_i)$ from $\psi_i$. This is key to the optimality of $\hat \Delta_n^{\rm pfe}$ over all linearly adjusted estimators. Following the same logic, we expect that by replacing the pair dummies with sieve bases of $X_i$ in \eqref{eq:pfe}, the linear regression can still effectively remove $E (\psi_i|X_i)$ from $\psi_i$ so that the new adjusted estimator is asymptotically equivalent to $\hat \Delta_n^{\rm pfe}$, and thus, linearly optimal.      
\end{remark}

\begin{remark}\label{rem:sigma_naive3}
  Remark \ref{rem:sigma_naive1} also applies here with $\beta^{\rm naive}$ replaced by $\beta^{\rm pfe}$. Even though $\hat \Delta_n^{\rm pfe}$ can be computed via OLS estimation of \eqref{eq:pfe}, we emphasize that the usual heteroskedascity-robust standard errors that na\"ively treats the data (including treatment status) as if it were i.i.d.\ need not be consistent for the limiting variance derived in our analysis. 
\end{remark}

\begin{remark} \label{remark:int-pfe}
One can also consider the estimator based on the following linear regression model:
\begin{equation} \label{eq:int-pfe}
Y_i = \Delta D_i + (\psi_i - \bar \psi_n)' \gamma + D_i (\psi_i - \hat \mu_{\psi, n}(1))' \eta + \sum_{1 \leq j \leq n} \theta_j I \{i \in \{\pi(2j - 1), \pi(2j)\}\} + \epsilon_i~.
\end{equation}
Let $\hat \Delta_n^{\rm int-pfe}, \hat \gamma_n^{\rm int-pfe}, \hat \eta_n^{\rm int-pfe}$ denote the OLS estimators for $\Delta, \gamma, \eta$ in \eqref{eq:int-pfe}. It is straightforward to show $\hat \Delta_n^{\rm int - pfe}$ satisfies \eqref{eq:hatdelta}--\eqref{eq:hatmu} with
\begin{align*}
\hat m_{1, n}(X_i, W_i) & = (\psi_i - \hat \mu_{\psi, n}(1))' \hat \eta_n^{\rm int-pfe} \\
\hat m_{0, n}(X_i, W_i) & = (\psi_i - \hat \mu_{\psi, n}(0))' (\hat \eta_n^{\rm int-pfe} - \hat \gamma_n^{\rm int-pfe})~.
\end{align*}
Following similar arguments to those used in the proof of Theorem \ref{thm:naive}, we can establish that \eqref{eq:rate} and Assumption \ref{ass:md} are satisfied with
\begin{align*}
m_{1, n}(X_i, W_i) & = (\psi_i - E[\psi_i])' \eta^{\rm int-pfe} \\
m_{0, n}(X_i, W_i) & = (\psi_i - E[\psi_i])' (\eta^{\rm int-pfe} - \gamma^{\rm int-pfe})~,
\end{align*}
where
\begin{align*}
\gamma^{\rm int-pfe} & = (E[\var[\psi_i | X_i]])^{-1} E[\mathrm{Cov}[\psi_i, Y_i(1) - Y_i(0) | X_i]]~, \\
\eta^{\rm int-pfe} & = (E[\var[\psi_i | X_i]])^{-1} E[\mathrm{Cov}[\psi_i, Y_i(1) | X_i]]~.
\end{align*}
Because $2\eta^{\rm int-pfe} - \gamma^{\rm int-pfe} = 2 \beta^{\rm pfe}$, it follows from Remark \ref{remark:optimal} that the limiting variance of $\hat \Delta_n^{\rm int-pfe}$ is identical to the limiting variance of $\hat \Delta_n^{\rm pfe}$.
\end{remark}

\begin{remark}
\cite{WGB21} consider the covariate adjustment for paired experiments under the design-based framework, where the covariates are treated as deterministic, and thus, the cross-sectional dependence between units in the same pair due to the closeness of their covariates is not counted in their analysis. We differ from them by considering the sampling-based framework in which the covariates are treated as random and the pairs are formed by matching, and thus, have an impact on statistical inference. Under their framework, \cite{WGB21} point out that covariate adjustments may have a positive or negative effect on the estimation accuracy depending on how they are estimated. This is consistent with our findings in this section. Specifically, we show that when the regression adjustments are estimated by a linear regression with pair fixed effects, the resulting ATE estimator is guaranteed to weakly improve upon the difference-in-means estimator in terms of efficiency. However, this improvement is not guaranteed if the adjustments are estimated without pair fixed effects.
\end{remark}

\begin{remark}
   If we choose $\psi_i$ as a set of sieve basis functions with increasing dimension, then under suitable regularity conditions, the linear adjustments both with and without pair fixed effects achieve the same limiting variance as $\hat \Delta^{\rm ideal}_n$, and thus, the  efficiency bound. In fact, if $\psi_i$ contains sieve bases, then the linear adjustment without pair fixed effects can approximate the true specification $E[Y_i(1) + Y_i(0)|X_i,W_i]$ in the sense that $E[Y_i(1) + Y_i(0)|X_i,W_i]  = \psi_i' \beta^{\rm naive} + R_i$ and $E[R_i^2] = o(1)$. This property implies $\sigma_{1, n}^2(Q)$ in Theorem \ref{thm:main} equals zero. Similarly, the linear adjustment with pair fixed effects can approximate the true specification $E[Y_i(1) + Y_i(0)|X_i,W_i] - E[Y_i(1) + Y_i(0)|X_i]$ in the sense that $E[Y_i(1) + Y_i(0)|X_i,W_i] - E[Y_i(1) + Y_i(0)|X_i] = \tilde \psi_i' \beta^{\rm naive} + \tilde R_i$ and $E [\tilde R_i^2] = o(1)$. This property again implies $\sigma_{1, n}^2(Q)$ in Theorem \ref{thm:main} equals zero. Therefore, in both cases, the adjusted estimator achieves the minimum variance.  In the next section, we consider $\ell_1$-regularized adjustments which may be viewed as providing a way to choose the relevant sieve bases in a data-driven manner.    
\end{remark}

\section{Regularized Adjustments} \label{sec:highdim}
In this section, we study covariate adjustments based on the $\ell_1$-regularized linear regression. Such settings can arise if the covariates $W_i$ are high-dimensional or if the dimension of $W_i$ is fixed but the regressors include many sieve basis functions of $X_i$ and $W_i$. To accommodate situations where the dimension of $W_i$ increases with $n$, we add a subscript and denote it by $W_{n, i}$ instead. Let $k_{w, n}$ denote the dimension of $W_{n, i}$. For $n \geq 1$, let $\psi_{n,i} = \psi_n(X_i,W_{n,i})$, where $\psi_n: \mathbf R^{k_x} \times \mathbf R^{k_{w, n}} \to \mathbf R^{p_n}$ and $p_n$ will be permitted below to be possibly much larger than $n$.

In what follows, we consider a two-step method in the spirit of \cite{cohen2020no-harm}.  In the first step, an intermediate estimator, $\hat \Delta_n^{\rm r}$, is obtained using \eqref{eq:hatdelta} with a ``working model'' obtained through a $\ell_1$-regularized linear regression adjustments $m_{d,n}(X_i,W_i)$ for $d \in \{0,1\}$.  As explained further below in Theorem \ref{thm:LASSO2}, when $m_{d,n}(X_i,W_i)$ is approximately correctly specified, such an estimator is optimal in the sense that it minimizes the limiting variance in Theorem \ref{thm:main}.  When this is not the case, however, for reasons analogous to those put forward in  Remark \ref{rem:sigma_naive1}, it needs not to have a limiting variance weakly smaller than the unadjusted difference-in-means estimator.  In a second step, we therefore consider an estimator by refitting a version of \eqref{eq:pfe} in which the covariates $\psi_i$ are replaced by the regularized estimates of $m_{d,n}(X_i,W_i)$ for $d \in \{0,1\}$.  The resulting estimator, $\hat \Delta_n^{\rm refit}$, has the limiting variance weakly smaller than that of the intermediate estimator and thus remains optimal under approximately correct specification in the same sense. Moreover, it has limiting variance weakly smaller than the unadjusted difference-in-means estimator. \cite{WDTT16} also consider high-dimensional regression adjustments in
randomized experiments using LASSO. We differ from their work by considering the ``matched pairs'' design, and more importantly, discussing when and how regularized adjustments can improve estimation efficiency upon the difference-in-means estimator.

Before proceeding, we introduce some additional notation that will be required in our formal description of the methods. 
We denote by $\psi_{n,i,l}$ the $l$th components of $\psi_{n,i}$. For a vector $a \in \mathbf R^k$ and $0 \leq p \leq \infty$, recall that $$\|a\|_p = \Big ( \sum_{1 \leq l \leq k} |a_l|^p \Big )^{1/p}~,$$ where it is understood that $\|a\|_0 = \sum_{1 \leq l \leq k} I \{a_k \neq 0\}$ and $\|a\|_\infty = \sup_{1 \leq l \leq k} |a_l|$.
Using this notation, we further define $$\Xi_n = \sup_{(x,w) \times \mathrm{supp}(X_i) \times \mathrm{supp}(W_i) }\|\psi_{n,i}(x,w)\|_\infty ~.$$

For $d \in \{0, 1\}$, define
\begin{equation} \label{eq:LASSO2}
(\hat \alpha_{d, n}^{\rm r}, \hat \beta_{d, n}^{\rm r}) \in \argmin_{a \in \mathbf R, b \in \mathbf R^{p_n}} \frac{1}{n} \sum_{1 \leq i \leq 2n: D_i = d} (Y_i - a - \psi_{n,i}' b)^2 + \lambda_{d, n}^{\rm r} \|\hat \Omega_n(d)  b\|_1~,
\end{equation}
where  $\lambda_{d, n}^{\rm r}$ is a penalty parameter that will be disciplined by the assumptions below, $\hat{\Omega}_n(d) = \diag(\hat{\omega}_1(d),\cdots,\break\hat{\omega}_{p_n}(d))$ is a diagonal matrix, and $\hat{\omega}_{n,l}(d)$ is the penalty loading for the $l$th regressor. Let $\hat \Delta_n^{\rm r}$ denote the estimator in \eqref{eq:hatdelta} with $\hat m_{d, n}(X_i,W_{n,i}) = \psi_{n, i}' \hat \beta_{d, n}^{\rm r}$ for $d \in \{0, 1\}$. 

We now proceed with the statement of our assumptions. The first assumption collects a variety of moment conditions that will be used in our formal analysis:
\begin{assumption} 
\begin{enumerate}[(a)]
\item There exist nonrandom quantities $(\alpha_{d,n}^{\rm r}, \beta_{d,n}^{\rm r})$ such that with $\epsilon_{n,i}^{\rm r}(d)$ defined as 
\begin{align*}
   \epsilon_{n,i}^{\rm r}(d) = Y_i(d) -  \alpha_{d, n}^{\rm r} - \psi_{n, i}' \beta_{d, n}^{\rm r}~,
\end{align*}
we have
\begin{equation} \label{eq:foc2}
	\|\Omega_n(d)^{-1}E[\psi_{n, i} \epsilon_{n, i}^{\rm r}(d)]\|_\infty + |E[\epsilon_{n, i}^{\rm r}(d)]|= o\left( \lambda_{d, n}^{\rm r}\right)~,
\end{equation}
where $\Omega_n(d) = \diag(\omega_{n,1}(d),\cdots,\omega_{n,p_n}(d))$ and $\omega_{n,l}^{2}(d) = \var[\psi_{n, i, l} \epsilon_{n,i}^{\rm r}(d)]$. 

\item For some $q > 2$ and constant $C_1$, 
\begin{align*}
\sup_{n \geq 1} \max_{1 \leq l \leq p_n} E[|\psi_{n, i, l}^q| | X_i] &\leq C_1~, \\
\sup_{n \geq 1} |\psi_{n,i}'\beta_{d,n}^{\rm r-pd}| &\leq C_1~, \\
\sup_{n \geq 1} |E[Y_i(a)|X_i,W_{n,i}]| &\leq C_1~,
\end{align*}
with probability one.
\item For some $\ubar c$ and $\bar c$, we require that
\begin{equation} \label{eq:loadings2}
    0<\underline{c} \leq \liminf_{n \rightarrow \infty} \min_{1 \leq l \leq p_n} \hat{\omega}_{n,l}(d)/\omega_{n,l}(d) \leq  \limsup_{n \rightarrow \infty} \max_{1 \leq l \leq p_n} \hat{\omega}_{n,l}(d)/\omega_{n,l}(d) \leq \bar{c} < \infty.
\end{equation}
\item For some $c_0$, $\ubar \sigma$, $\bar \sigma$, the following statements hold with probability one:
\[ 0<\ubar\sigma^2 \leq \liminf_{n \rightarrow \infty}~ \min_{d \in \{0, 1\}, 1 \leq l \leq p_n} \omega_{n,l}^2(d) \leq \limsup_{n \rightarrow \infty}~ \max_{d \in \{0, 1\}, 1 \leq l \leq p_n} \omega_{n,l}^2(d) \leq \bar{\sigma}^2 < \infty~, \]
\begin{align*}
		\sup_{n \geq 1} \max_{d \in \{0, 1\}}E[(\psi_{n, i}'\beta_{d,n}^{\rm r})^2] \leq c_0 & < \infty~, \\ 
		\max_{d \in \{0, 1\}, 1 \leq l \leq p_n} \frac{1}{2n} \sum_{1 \leq i \leq 2n} E[\epsilon_{n, i}^4(d) | X_i] \leq c_0 & < \infty~, \\
  		\sup_{n \geq 1}~ \max_{d \in \{0, 1\}} E[\epsilon_{n, i}^4(d)] \leq c_0 & < \infty~, \\
 \min_{d \in \{0, 1\}} \var[Y_i(d) - \psi_{n,i}'(\beta_{1,n}^{\rm r} + \beta_{0,n}^{\rm r})/2] \geq \ubar \sigma^2 & > 0~, \\
		\min_{1 \leq l \leq p_n} \frac{1}{n} \sum_{1 \leq i \leq 2n} I \{D_i = d\} \var[\psi_{n, i, l} \epsilon_{n, i}(d) | X_i] \geq \ubar \sigma^2 & > 0~, \\
		\min_{1 \leq l \leq p_n} \var[E[\psi_{n, i, l} \epsilon_{n, i}(d) | X_i]] \geq \ubar \sigma^2 & > 0~. 
\end{align*}
\end{enumerate}\label{ass:LASSO2-moments}
\end{assumption}

\begin{remark}
It is instructive to note that \eqref{eq:foc2} in Assumption \ref{ass:LASSO2-moments}(a) is the subgradient condition for a $\ell_1$-penalized regression of the outcome $Y_i(d)$ on $ \psi_{n,i}$ when the penalty is of order $o(\lambda_n^{\rm r})$. Specifically, if $p_n \ll n$, then this condition holds automatically for the $\beta_{d, n}^{\rm r}$ equal to the coefficients of a linear projection of $Y_i(d)$ onto $(1, \psi_{n,i}')$.  When $p_n \gg n$, but $E[Y_i(d)|X_i,W_i]$ is approximately correctly specified in the sense that the approximation error $ R_{n,i}(d) = E[Y_i(d)|X_i,W_i] - \alpha_{d,n}^{\rm r} - \psi_{n, i}'\beta_{d,n}^{\rm r}$ is sufficiently small, then \eqref{eq:foc2} also holds. However, the approximately correct specification is not necessary for \eqref{eq:foc2}. For example, suppose $W_{n,i} = (W_{n,i,1},\cdots,W_{n,i,p_n})$ is a $p_n$ vector of independent standard normal random variables, $W_{n,i}$ is independent of $X_i$, $\psi_{n,i} = (X_i',W_{n,i}')'$, and 
\begin{align*}
    Y_i(d) = \alpha_{d,n}^{\rm r} + \psi_{n,i}'\beta_{d,n}^{\rm r} + \sum_{l=1}^{p_n}\frac{W_{n,i,l}^2-1}{\sqrt{p_n}} + u_{n,i}(d)~,
\end{align*}
where $E (u_{n,i}(d)|X_i,W_{n,i}) = 0$. Then, Assumption \ref{ass:LASSO2-moments}(a) holds with $\epsilon_{n,i}^{\rm r}(d) = \sum_{l=1}^{p_n}\frac{W_{n,i,l}^2-1}{\sqrt{p_n}} + u_{n,i}(d)$. We can impose a sparse restriction on $\beta_{d,n}^{\rm r}$ so that it further satisfies Assumption \ref{ass:LASSO2-penalty}(b) below. On the other hand, the linear regression adjustment is not approximately correctly specified because $R_{n,i}(d) = E (Y_i(d)|X_i,W_{n,i})  - (\alpha_{d,n}^{\rm r} + \psi_{n,i}'\beta_{d,n}^{\rm r})=  \sum_{l=1}^{p_n}\frac{W_{n,i,l}^2-1}{\sqrt{p_n}}$, and we have $E R_{n,i}^2(d) = 2 \nrightarrow 0$. 
\end{remark}

\begin{remark}    
Assumption \ref{ass:LASSO2-moments}(b) and \ref{ass:LASSO2-moments}(d) are standard in the high-dimensional estimation literature; see, for instance, \cite{belloni2017program}. The last four inequalities in Assumption \ref{ass:LASSO2-moments}(d), in particular, permit us to apply the high-dimensional central limit theorem in \citet[Theorem 2.1]{chernozhukov2017central}.     
\end{remark}

\begin{remark}
The penalty loadings in Assumption \ref{ass:LASSO2-moments}(c) can be computed by an iterative procedure proposed by \cite{belloni2017program}. We provide more detail in Algorithm \ref{algo:2} below. We can then verify \eqref{eq:loadings2} under ``matched pairs'' designs following arguments similar to those in \cite{belloni2017program}. 
\end{remark}

Our analysis will, as before, also require some discipline on the way in which pairs are formed.  For this purpose, Assumption \ref{ass:close} will suffice, but we will need an additional Lipshitz-like condition:
\begin{assumption} \label{ass:LASSO2-match}
For some $L > 0$ and any $x_1$ and $x_2$ in the support of $X_i$, we have
\[ |(\Psi(x_1)-\Psi(x_2))'\beta_{d,n}^{\rm r}| \leq L ||x_1-x_2||_2~. \]
\end{assumption}

We next specify our restrictions on the penalty parameter $\lambda_{d, n}^{\rm r}$.
\begin{assumption} \label{ass:LASSO2-penalty}
\begin{enumerate}[(a)]
  \item For some $\ell \ell_n \rightarrow \infty$, 
		\[ \lambda_{d, n}^{\rm r} = \frac{\ell \ell_n }{\sqrt n} \Phi^{-1} \left ( 1 - \frac{0.1}{2 \log(n) p_n} \right )~. \]
  \item $\Xi_n^2 (\log p_n)^7 / n \to 0$ and $(\ell\ell_n s_n \log p_n) / \sqrt{n} \to 0$, where 
  \begin{equation} \label{eq:sparse2}
s_n = \max_{d \in \{0, 1\}} \|\beta_{d,n}^{\rm r}\|_0.
\end{equation}
  \end{enumerate}
\end{assumption}
We note that Assumption \ref{ass:LASSO2-penalty}(b) permits $p_n$ to be much greater than $n$. It also requires sparsity in the sense that $s_n = o(\sqrt{n})$. 

Finally, as is common in the analysis of $\ell_1$-penalized regression, we require a ``restricted eigenvalue'' condition.  This assumption permits us to apply \citet[Lemma 4.1]{bickel2009simultaneous} and establish the error bounds for $|\hat \alpha_{d,n}^{\rm r} - \alpha_{d,n}^{\rm r}|+||\hat \beta_{d,n}^{\rm r} - \beta_{d,n}^{\rm r}||_1$  and $\frac{1}{n}\sum_{1\leq i\leq 2n}I\{D_i=d\}\left(\hat \alpha_{d,n}^{\rm r} - \alpha_{d,n}^{\rm r}+\psi_{n,i}'(\hat \beta_{d,n}^{\rm r} - \beta_{d,n}^{\rm r})\right)^2$.
\begin{assumption} \label{ass:LASSO2-ev}
   For some $\kappa_1 > 0, \kappa_2$ and $\ell_n \to \infty$, the following statements hold with probability approaching one:
		\begin{align*}
		\inf_{d \in \{0, 1\}, v \in \mathbf R^{p_n+1}: \|v\|_0 \leq (s_n+1) \ell_n} (\|v\|_2^2)^{-1} v' \Bigg ( \frac{1}{n} \sum_{1 \leq i \leq 2n} I \{D_i = d\} \breve \psi_{n, i} \breve \psi_{n, i}' \Bigg ) v & \geq \kappa_1 \\
		\sup_{d \in \{0, 1\}, v \in \mathbf R^{p_n+1}: \|v\|_0 \leq (s_n+1) \ell_n} (\|v\|_2^2)^{-1} v' \Bigg ( \frac{1}{n} \sum_{1 \leq i \leq 2n} I \{D_i = d\} \breve \psi_{n, i} \breve \psi_{n, i}' \Bigg ) v & \leq \kappa_2 \\
		\inf_{d \in \{0, 1\}, v \in \mathbf R^{p_n+1}: \|v\|_0 \leq (s_n+1) \ell_n} (\|v\|_2^2)^{-1} v' \Bigg ( \frac{1}{n} \sum_{1 \leq i \leq 2n} I \{D_i = d\} E[\breve \psi_{n, i}\breve \psi_{n, i}' | X_i] \Bigg ) v & \geq \kappa_1 \\
		\sup_{d \in \{0, 1\}, v \in \mathbf R^{p_n+1}: \|v\|_0 \leq (s_n+1) \ell_n} (\|v\|_2^2)^{-1} v' \Bigg ( \frac{1}{n} \sum_{1 \leq i \leq 2n} I \{D_i = d\} E[\breve \psi_{n, i} \breve \psi_{n, i}' | X_i] \Bigg ) v & \leq \kappa_2~,
		\end{align*}
		where $\breve \psi_{n,i} = (1,\psi_{n,i}')'$.
\end{assumption}

Using these assumptions, the following theorem characterizes the behavior of $\hat \Delta_n^{\rm r}$:
\begin{theorem} \label{thm:LASSO2}
Suppose $Q$ satisfies Assumption \ref{ass:Q} and the treatment assignment mechanism satisfies Assumptions \ref{ass:treatment}--\ref{ass:close}.  Further suppose Assumptions \ref{ass:LASSO2-moments}--\ref{ass:LASSO2-ev} hold. Then, \eqref{eq:rate},  \eqref{eq:L2}, and Assumption \ref{ass:md} are satisfied with $\hat m_{d, n}(X_i, W_{n, i}) = \hat \alpha_{d,n}^{\rm r}+ \psi_{n, i}' \hat \beta_{d, n}^{\rm r}$ and
\[ m_{d, n}(X_i, W_{n, i}) = \alpha_{d,n}^{\rm r} + \psi_{n, i}' \beta_{d, n}^{\rm r} \]
for $d \in \{0, 1\}$ and $n \geq 1$. Denote the variance of $\hat \Delta_n^{\rm r}$ by $\sigma_n^{\rm r,2}$. If the regularized adjustment is approximately correctly specified, i.e., $E [Y_i(d)|X_i,W_{n,i}] = \alpha_{d,n}^{\rm r} + \psi_{n,i}'\beta^{\rm r}_{d,n} + R_{n,i}(d)$ and $\max_{d \in \{0, 1\}}E[R_{n,i}^2(d)] = o(1)$, then $\sigma_n^{\rm r,2}$ achieves the minimum variance, i.e., 
\begin{align*}
\lim_{n \to \infty} \sigma_n^{\rm r,2} = \sigma_2^2(Q) + \sigma_3^2(Q)~. 
\end{align*}
\end{theorem}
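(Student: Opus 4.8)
The plan is to verify the three high-level conditions required by Theorems \ref{thm:main} and \ref{thm:var} — namely \eqref{eq:rate}, \eqref{eq:L2}, and Assumption \ref{ass:md} — for the regularized working model, and then to read off the limiting variance from Theorem \ref{thm:main}; the substantive content is therefore the analysis of the two LASSO fits in \eqref{eq:LASSO2}. The natural first step is to establish oracle inequalities for $\hat\beta_{d,n}^{\rm r}$: writing $\delta_d = (\hat\alpha_{d,n}^{\rm r}-\alpha_{d,n}^{\rm r},(\hat\beta_{d,n}^{\rm r}-\beta_{d,n}^{\rm r})')'$, I would show $\|\delta_d\|_1 = O_P(s_n\lambda_{d,n}^{\rm r})$ and the in-sample prediction error $\frac{1}{n}\sum_{1\le i\le 2n}I\{D_i=d\}(\breve\psi_{n,i}'\delta_d)^2 = O_P(s_n(\lambda_{d,n}^{\rm r})^2)$. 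As usual, these follow from the restricted-eigenvalue bounds in Assumption \ref{ass:LASSO2-ev} together with \citet[Lemma 4.1]{bickel2009simultaneous}, provided we are on the event that the penalty dominates the score, i.e.\ $\max_{1\le l\le p_n}\hat\omega_{n,l}(d)^{-1}\left|\frac{1}{n}\sum_{1\le i\le 2n}I\{D_i=d\}\psi_{n,i,l}\epsilon_{n,i}^{\rm r}(d)\right|\lesssim\lambda_{d,n}^{\rm r}$. The key observation that makes this event asymptotically certain under the matched-pairs design is that, for fixed $d$, exactly one unit per pair enters the score, so it decomposes as the i.i.d.\ average $\frac{1}{2n}\sum_{1\le i\le 2n}\psi_{n,i,l}\epsilon_{n,i}^{\rm r}(d)$ plus a conditionally Rademacher-weighted sum $\frac{1}{2n}\sum_{1\le j\le n}\xi_j(\psi_{\pi(2j-1),l}\epsilon_{\pi(2j-1)}^{\rm r}(d)-\psi_{\pi(2j),l}\epsilon_{\pi(2j)}^{\rm r}(d))$, where $\xi_j = 2D_{\pi(2j-1)}-1$; the population mean is controlled by the subgradient condition \eqref{eq:foc2}, while the fluctuations of both pieces are controlled uniformly over $l\le p_n$ by self-normalized moderate deviations and \citet[Theorem 2.1]{chernozhukov2017central}, with \eqref{eq:loadings2} permitting the replacement of $\omega_{n,l}(d)$ by $\hat\omega_{n,l}(d)$.

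The condition \eqref{eq:rate} is the crux. Since the matched-pairs design forces $\sum_{1\le i\le 2n}(2D_i-1)=0$, the intercept term drops out, and H\"older's inequality gives $\left|\frac{1}{\sqrt{2n}}\sum_{1\le i\le 2n}(2D_i-1)\psi_{n,i}'(\hat\beta_{d,n}^{\rm r}-\beta_{d,n}^{\rm r})\right|\le\|\hat\beta_{d,n}^{\rm r}-\beta_{d,n}^{\rm r}\|_1\cdot\left\|\frac{1}{\sqrt{2n}}\sum_{1\le i\le 2n}(2D_i-1)\psi_{n,i}\right\|_\infty$. The $\ell_1$ factor is $O_P(s_n\lambda_{d,n}^{\rm r})$ by the oracle inequality, so it remains to bound the sup-norm term. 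Using the same within-pair representation, $\frac{1}{\sqrt{2n}}\sum_{1\le i\le 2n}(2D_i-1)\psi_{n,i,l}=\frac{1}{\sqrt{2n}}\sum_{1\le j\le n}\xi_j(\psi_{\pi(2j-1),l}-\psi_{\pi(2j),l})$ is, conditionally on the data, a sum of $n$ independent bounded terms; a maximal inequality combined with the envelope $\Xi_n$ and the rate restriction $\Xi_n^2(\log p_n)^7/n\to0$ yields $\left\|\frac{1}{\sqrt{2n}}\sum_{1\le i\le 2n}(2D_i-1)\psi_{n,i}\right\|_\infty=O_P(\sqrt{\log p_n})$. The product is then $O_P(s_n\lambda_{d,n}^{\rm r}\sqrt{\log p_n})=o_P(1)$ because $\lambda_{d,n}^{\rm r}\asymp\ell\ell_n\sqrt{\log p_n}/\sqrt n$ and Assumption \ref{ass:LASSO2-penalty}(b) forces $\ell\ell_n s_n\log p_n/\sqrt n\to0$. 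I expect this interplay between the LASSO $\ell_1$-error rate and the matched-pairs treatment fluctuations to be the main obstacle, since it is exactly the non-i.i.d., conditionally Rademacher dependence structure that distinguishes the argument from standard high-dimensional theory and dictates the sparsity rate $s_n=o(\sqrt n)$.

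For \eqref{eq:L2} I would convert the in-sample prediction error on the $D_i=d$ subsample, which is $O_P(s_n(\lambda_{d,n}^{\rm r})^2)=o_P(1)$ under Assumption \ref{ass:LASSO2-penalty}(b), into the full-sample $L^2$ error $\frac{1}{2n}\sum_{1\le i\le 2n}(\breve\psi_{n,i}'\delta_d)^2$; this uses that $\delta_d$ lies in the restricted cone produced by the LASSO argument and that the restricted-eigenvalue bounds in Assumption \ref{ass:LASSO2-ev} hold on both the $d$ and $1-d$ subsamples, so the full-sample Gram form is comparable to the subsample one on that cone. Assumption \ref{ass:md} for the limiting model $m_{d,n}=\alpha_{d,n}^{\rm r}+\psi_{n,i}'\beta_{d,n}^{\rm r}$ then follows from Assumption \ref{ass:LASSO2-moments}: part (a) is the non-degeneracy guaranteed by the variance lower bounds in Assumption \ref{ass:LASSO2-moments}(d); part (b) follows from the uniform boundedness conditions in Assumption \ref{ass:LASSO2-moments}(b) together with the uniform $L^2$ bound $\sup_n\max_d E[(\psi_{n,i}'\beta_{d,n}^{\rm r})^2]\le c_0$; and part (c) follows from the Lipschitz condition in Assumption \ref{ass:LASSO2-match} applied to the conditional moments of $\psi_{n,i}$ given $X_i$.

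Finally, for the optimality claim, under approximate correct specification I would write $m_{1,n}(X_i,W_{n,i})+m_{0,n}(X_i,W_{n,i})=E[Y_i(1)+Y_i(0)\mid X_i,W_{n,i}]-(R_{n,i}(1)+R_{n,i}(0))$, so that $\sigma_{1,n}^2(Q)=\tfrac12 E[\var[R_{n,i}(1)+R_{n,i}(0)\mid X_i]]\le\tfrac12 E[(R_{n,i}(1)+R_{n,i}(0))^2]\to0$ by $\max_{d\in\{0,1\}}E[R_{n,i}^2(d)]=o(1)$. Since $\sigma_{2,n}^2(Q)$ and $\sigma_{3,n}^2(Q)$ do not involve the working model, the verification of \eqref{eq:rate}, \eqref{eq:L2}, and Assumption \ref{ass:md} lets Theorems \ref{thm:main} and \ref{thm:var} apply, and $\lim_{n\to\infty}\sigma_n^{\rm r,2}=\lim_{n\to\infty}(\sigma_{2,n}^2(Q)+\sigma_{3,n}^2(Q))=\sigma_2^2(Q)+\sigma_3^2(Q)$, which by Remark \ref{remark:equivalent} is precisely the efficiency bound of \cite{armstrong2022asymptotic}.
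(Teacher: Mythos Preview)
Your proposal is correct and follows essentially the same route as the paper's proof: the identical three-step structure (oracle inequality on the event that the penalty dominates the score, then \eqref{eq:rate} via H\"older's inequality together with the $O_P(\sqrt{\log p_n})$ bound on $\bigl\|\frac{1}{\sqrt{2n}}\sum_{i}(2D_i-1)\psi_{n,i}\bigr\|_\infty$ obtained from the within-pair conditionally Rademacher representation, then \eqref{eq:L2} and Assumption \ref{ass:md} from Assumptions \ref{ass:LASSO2-moments}--\ref{ass:LASSO2-match}), followed by the same variance calculation under approximate correct specification. Your handling of \eqref{eq:L2} --- converting the in-sample prediction error on the $D_i=d$ subsample to the full sample via the cone constraint and the restricted-eigenvalue bounds holding on \emph{both} subsamples --- is in fact more explicit than the paper's, which asserts the implication without spelling out that transfer.
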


\begin{remark}
We recommend employing an iterative estimation procedure outlined by \cite{belloni2017program} to estimate $\hat \beta_{d, n}^{\rm r}$, in which the $m$-th step's penalty loadings are estimated based on the $(m-1)$th step's LASSO estimates.  Formally, this iterative procedure is described by the following algorithm:
\begin{algorithm}\label{algo:2} \hspace{1cm}
\begin{enumerate}
    \item[] \underline{Step 0}: Set $\hat \epsilon_{n,i}^{{\rm r},(0)}(d) = Y_i$ if $D_i = d$.
    \item[] \hspace{1in} $\vdots$
    \item[] \underline{Step $m$}: Compute $\hat \omega_{n,l}^{(m)}(d) = \sqrt{\frac{1}{n} \sum_{1 \leq i \leq 2n} I \{D_i = d\} \psi_{n, i, l}^2 (\hat \epsilon_{n, i}^{{\rm r}, (m - 1)}(d))^2}$ and compute $(\hat \alpha_{d, n}^{{\rm r}, (m)},\hat \beta_{d, n}^{{\rm r}, (m)})$ following \eqref{eq:LASSO2} with $\hat \omega_{n,l}^{(m)}$ as the penalty loadings, and $\hat \epsilon_{n,i}^{{\rm r},(m)}(d) = Y_i -\hat \alpha_{d, n}^{{\rm r},(m)}-\psi_i'\hat \beta_{d, n}^{{\rm r}, (m)}$ if $D_i = d$.
    \item[] \hspace{1in} $\vdots$
    \item[] \underline{Step $M$}: $\ldots$
    \item[] \underline{Step $M+1$}: Set $\hat \beta_{d, n}^{\rm r} = \hat \beta_{d, n}^{{\rm r}, (M)}$. 
\end{enumerate}
\end{algorithm}
As suggested by \cite{belloni2017program}, we set $M$ to be 15. We note that {\tt R} package \textbf{hdm} has a built-in option for this iterative procedure.  For this choice of penalty loadings, arguments similar to those in \cite{belloni2017program} can be used to verify \eqref{eq:loadings2} under ``matched pairs'' designs.
\end{remark}

\begin{remark}\label{rem:lasso2_f}
When the $\ell_1$-regularized adjustment is approximately correctly specified, Theorem \ref{thm:LASSO2} shows $\hat \Delta_n^{\rm r}$ achieves the minimum variance derived in Remark \ref{remark:equivalent}, and thus, is guaranteed to be weakly more efficient than the difference-in-means estimator ($\hat \Delta_n^{\rm unadj}$). When $W_{n,i}$ is fixed dimensional and $\psi_{n,i}$ consists of sieve basis functions of $(X_i,W_{n,i})$, the approximately correct specification usually holds. Specifically, under regularity conditions such as the smoothness of $E(Y_i(d)|X_i,W_{n,i})$, we can approximate $E(Y_i(d)|X_i,W_{n,i})$ by $\alpha_{d,n}^{\rm r}+ \psi_{n,i}'\beta^{\rm r}_{d,n}$ and $\beta^{\rm r}_{d,n}$ is automatically sparse in the sense that $||\beta^{\rm r}_{d,n}||_0 \ll n$. This means our regularized regression adjustment can select relevant sieve bases in nonparametric regression adjustments in a data-driven manner and automatically minimize the limiting variance of the corresponding ATE estimator.  
\end{remark}

\begin{remark}\label{rem:lasso2_f2}
When the dimension of $\psi_{n,i}$ is ultra-high (i.e., $p_n \gg n$) and the regularized adjustment is not approximately correctly specified, $\hat \Delta_n^{\rm r}$ suffers from \cite{freedman2008regression}'s critique that, theoretically, it is possible to be less efficient than $\hat \Delta_n^{\rm unadj}$. To overcome this problem, we consider an additional step in which we treat the regularized adjustments $(\psi_{n, i}' \hat \beta_{1, n}^{\rm r}, \psi_{n, i}'\hat \beta_{0, n}^{\rm r})$ as a two-dimensional covariate and refit a linear regression with pair fixed effects. Such a procedure has also been studied by \cite{cohen2020no-harm} in the setting with low-dimensional covariates and complete randomization. In fact, this strategy can improve upon general initial regression adjustments as long as \eqref{eq:rate}, \eqref{eq:L2}, and Assumption \ref{ass:md} are satisfied.    
\end{remark}

Theorem \ref{thm:refit} below shows the ``refit'' estimator for the ATE is weakly more efficient than both $\hat \Delta_n^{\rm unadj}$ and $\hat \Delta_n^{\rm r}$. To state the results, define $\Gamma_{n,i} = (\psi_{n, i}' \beta_{1, n}^{\rm r}, \psi_{n, i}' \beta_{0, n}^{\rm r})'$, $\hat \Gamma_{n,i} = (\psi_{n, i}' \hat \beta_{1, n}^{\rm r}, \psi_{n, i}'\hat \beta_{0, n}^{\rm r})$, and $\hat \Delta_n^{\rm refit}$ as the estimator in \eqref{eq:pfe} with $\psi_i$ replaced by $\hat \Gamma_{n,i}$. Note that $\hat \Delta_n^{\rm refit}$ remains numerically the same if we include the intercept $\hat \alpha_{d,n}^{\rm r}$ in the definition of $\hat \Gamma_{n,i}$. Following Remark \ref{remark:fogarty}, $\hat \Delta_n^{\rm refit}$ is the intercept in the linear regression of $(D_{\pi(2j-1)}-D_{\pi(2j-1)})(Y_{\pi(2j-1)} - Y_{\pi(2j)})$ on constant and $(D_{\pi(2j-1)}-D_{\pi(2j-1)})(\hat \Gamma_{n,\pi(2j-1)} - \hat \Gamma_{n,\pi(2j)})$. Replacing  $\hat \Gamma_{n,i}$ by $\hat \Gamma_{n,i} + (\hat \alpha_{1,n}^{\rm r},\hat \alpha_{0,n}^{\rm r})'$ will not change the regression estimators. 

The following assumption will be employed to control $\Gamma_{n,i}$ in our subsequent analysis:
\begin{assumption}\label{ass:refit}
For some $\kappa_1 > 0$ and $\kappa_2$,
\vspace{-.3cm}
\begin{align*}
& \inf_{n \geq 1}  \inf_{v \in \mathbf R^2} ||v||_2^{-2}v' E [\var[\Gamma_{n,i}|X_i]]v \geq \kappa_1 \\
& \sup_{n \geq 1}  \sup_{v \in \mathbf R^2} ||v||_2^{-2}v' E [\var[\Gamma_{n,i}|X_i]]v \leq \kappa_2~.
\end{align*}
\end{assumption}

The following theorem characterizes the behavior of $\hat \Delta_n^{\rm refit}$:
\begin{theorem} \label{thm:refit}
	Suppose $Q$ satisfies Assumption \ref{ass:Q} and the treatment assignment mechanism satisfies Assumptions \ref{ass:treatment}--\ref{ass:close}.  Further suppose Assumptions \ref{ass:LASSO2-moments}--\ref{ass:refit} hold. Then, \eqref{eq:rate}, \eqref{eq:L2}, and Assumption \ref{ass:md} are satisfied with $\hat m_{d, n}(X_i, W_{n, i}) = \hat \Gamma_{n, i}' \hat \beta_{n}^{\rm refit}$ and
	\[ m_{d, n}(X_i, W_{n, i}) = \Gamma_{n, i}'  \beta_{n}^{\rm refit} \]
	for $d \in \{0, 1\}$ and $n \geq 1$, where $\beta_{n}^{\rm refit} = (2E [\var[\Gamma_{n, i}|X_i]])^{-1}E[\cov[\Gamma_{n, i},Y_i(1)+Y_i(0)|X_i]]$. In addition, denote the asymptotic variance of $\hat \Delta_n^{\rm refit}$ as $\sigma_n^{\rm refit,2}$. Then, $\sigma_n^{\rm unadj,2} \geq \sigma_n^{\rm refit,2}$ and $\sigma_n^{\rm r,2}\geq \sigma_n^{\rm refit,2}$. 
\end{theorem}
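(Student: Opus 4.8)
The plan is to treat $\hat \Delta_n^{\rm refit}$ as a linearly pair-fixed-effects-adjusted estimator whose regressors happen to be the estimated two-dimensional covariate $\hat \Gamma_{n,i}$, and to reduce its analysis to Theorem \ref{thm:main} by verifying the three high-level conditions \eqref{eq:rate}, \eqref{eq:L2}, and Assumption \ref{ass:md}; since $m_{d,n}$ in Theorem \ref{thm:main} is already permitted to vary with $n$, no new limit theorem is needed. First I would analyze the \emph{infeasible} refit that regresses $Y_i$ on $D_i$, $\Gamma_{n,i}$, and the pair dummies. Using the pair-differenced Frisch--Waugh representation of Remark \ref{remark:fogarty}, the infeasible slope is an explicit ratio of sample pair-moments of $\Gamma_{n,i}$. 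Mimicking the proof of Theorem \ref{thm:pfe} but tracking uniformity in $n$, I would show these converge to their population analogs, so the denominator tends to $2E[\var[\Gamma_{n,i}|X_i]]$ and the numerator to $E[\cov[\Gamma_{n,i},Y_i(1)+Y_i(0)|X_i]]$; invertibility is supplied by Assumption \ref{ass:refit}, and the requisite within-pair closeness of $\Gamma_{n,i}$ follows from Assumption \ref{ass:close} together with the uniform Lipschitz bound in Assumption \ref{ass:LASSO2-match}, with the moment bounds in Assumption \ref{ass:LASSO2-moments} controlling the remaining terms uniformly.

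Next I would pass from the infeasible to the feasible estimator by controlling $\hat \Gamma_{n,i} - \Gamma_{n,i} = (\psi_{n,i}'(\hat \beta_{1,n}^{\rm r} - \beta_{1,n}^{\rm r}), \psi_{n,i}'(\hat \beta_{0,n}^{\rm r} - \beta_{0,n}^{\rm r}))'$. The crucial inputs here are the two LASSO guarantees already delivered by Theorem \ref{thm:LASSO2}: the full-sample $L_2$ bound \eqref{eq:L2} and the rate condition \eqref{eq:rate} for $\psi_{n,i}' \hat \beta_{d,n}^{\rm r}$. Decomposing $\hat \Gamma_{n,i}' \hat \beta_n^{\rm refit} - \Gamma_{n,i}' \beta_n^{\rm refit}$ into a part driven by $\hat \Gamma_{n,i} - \Gamma_{n,i}$ and a part driven by $\hat \beta_n^{\rm refit} - \beta_n^{\rm refit}$, I would show each contribution to the relevant $\frac{1}{\sqrt{2n}}\sum_i (2D_i - 1)(\cdot)$ and $\frac{1}{2n}\sum_i (\cdot)^2$ sums is $o_P(1)$, using \eqref{eq:rate} and \eqref{eq:L2} for the first part and consistency $\hat \beta_n^{\rm refit} \stackrel{P}{\rightarrow} \beta_n^{\rm refit}$ together with $\frac{1}{\sqrt{2n}}\sum_i (2D_i - 1)\Gamma_{n,i} = O_P(1)$ for the second. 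This establishes \eqref{eq:rate}, \eqref{eq:L2}, and Assumption \ref{ass:md} with $\hat m_{d,n} = \hat \Gamma_{n,i}' \hat \beta_n^{\rm refit}$ and $m_{d,n} = \Gamma_{n,i}' \beta_n^{\rm refit}$, so Theorem \ref{thm:main} applies and identifies the limiting variance as $\sigma_n^{\rm refit,2}$.

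The variance comparison is then a finite-dimensional projection argument. Since only $\sigma_{1,n}^2(Q)$ depends on the working model, I would write, with $g = E[Y_i(1) + Y_i(0)|X_i,W_{n,i}]$, that any choice $m_{1,n} + m_{0,n} = \Gamma_{n,i}'b$ with $b \in \mathbf R^2$ gives $\sigma_{1,n}^2(Q) = \tfrac{1}{2} E[\var[g - \Gamma_{n,i}'b \,|\, X_i]]$. By its definition, $2\beta_n^{\rm refit}$ is exactly the minimizer of $b \mapsto E[\var[g - \Gamma_{n,i}'b \,|\, X_i]]$ over $\mathbf R^2$. The intermediate LASSO estimator corresponds to the fixed choice $b = (1,1)'$, because $\psi_{n,i}'(\beta_{1,n}^{\rm r} + \beta_{0,n}^{\rm r}) = \Gamma_{n,i}'(1,1)'$, while the unadjusted estimator corresponds to $b = 0$, using the identity (verified by the law of total variance) that Theorem \ref{thm:main} with a constant working model returns $\sigma_n^{\rm unadj,2}$ and that $\sigma_{2,n}^2(Q)$ and $\sigma_{3,n}^2(Q)$ are common to all three estimators. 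Since $2\beta_n^{\rm refit}$ minimizes over all $b$ and both $(1,1)'$ and $0$ are feasible choices, $\sigma_n^{\rm r,2} \geq \sigma_n^{\rm refit,2}$ and $\sigma_n^{\rm unadj,2} \geq \sigma_n^{\rm refit,2}$ follow immediately.

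I expect the main obstacle to be the reduction from the feasible to the infeasible estimator in the triangular array: transferring the LASSO $L_2$ and rate guarantees, which are stated subsample-by-subsample for each $d$, into control of the pooled pair-level moments of $\hat \Gamma_{n,i}$, and verifying the non-degeneracy part of Assumption \ref{ass:md}(a) for the refit working model uniformly in $n$ from the lower bounds in Assumption \ref{ass:LASSO2-moments}(d) and Assumption \ref{ass:refit}. By contrast, the headline inequalities are the \emph{easy} part once the $\Gamma_{n,i}'b$ projection viewpoint is in place.
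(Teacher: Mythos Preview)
Your proposal is correct and follows essentially the same three-step architecture as the paper's proof: consistency of $\hat\beta_n^{\rm refit}$ by comparing the feasible and infeasible pair-level Gram and cross-moment matrices (using the $L_2$ bound from Theorem~\ref{thm:LASSO2} to control $\hat\Gamma_{n,i}-\Gamma_{n,i}$ and Assumption~\ref{ass:refit} for invertibility), then the same two-part decomposition of $\hat\Gamma_{n,i}'\hat\beta_n^{\rm refit}-\Gamma_{n,i}'\beta_n^{\rm refit}$ to verify \eqref{eq:rate} and \eqref{eq:L2}, and finally the projection argument over $b\in\mathbf R^2$ with $b=(0,0)'$, $(1,1)'$, and $2\beta_n^{\rm refit}$ for the variance ranking. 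The only cosmetic difference is that the paper bounds the feasible-minus-infeasible pair moments directly rather than first completing the infeasible analysis, but the substance and the inputs used are identical.
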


\begin{remark} \label{remark:ridge}
It is possible to further relax the full rank condition in Assumption \ref{ass:refit} by running a ridge regression or truncating the minimum eigenvalue of the gram matrix in the refitting step. 
\end{remark}

\section{Simulations} \label{sec:simulations}
In this section, we conduct Monte Carlo experiments to assess the finite-sample performance of the inference methods proposed in the paper. In all cases, we follow \cite{bai2021inference} to consider tests of the hypothesis that 
\begin{align*}
	H_{0}: \Delta(Q) = \Delta_{0} \text{ versus } H_{1}: \Delta(Q) \neq \Delta_{0}~.
\end{align*} 
with $\Delta_{0}=0$ at nominal level $\alpha=0.05$.  

\subsection{Data Generating Processes}
We generate potential outcomes for $d\in\{0,1\}$ and $1\leq i\leq2n$ by the equation
\begin{equation}
	\ensuremath{Y_{i}(d)=\mu_{d}+m_{d}(X_{i},W_{i})+\sigma_{d}(X_{i},W_{i})\epsilon_{d,i}},~d=0,1,\label{eq:simulpart01}
\end{equation}
where $\mu_{d},m_{d}\left(X_{i},W_{i}\right),\sigma_{d}\left(X_{i},W_{i}\right)$,
and $\epsilon_{d,i}$ are specified in each model as follows. In each of the specifications, ($X_{i},W_{i},\epsilon_{0,i},\epsilon_{1,i}$) are i.i.d. across $i$. The number of pairs $n$ is equal to 100 and 200. The number of replications is 10,000.

\begin{description}
	\item [{Model 1}] $\left(X_{i},W_{i}\right)^\top=\left(\Phi\left(V_{i1}\right),\Phi\left(V_{i2}\right)\right)^\top$,
	where $\Phi(\cdot)$ is the standard normal distribution function and
	\[
	V_{i}\sim N\left(\left(\begin{array}{l}
		0\\
		0
	\end{array}\right),\left(\begin{array}{ll}
		1 & \rho\\
		\rho & 1
	\end{array}\right)\right),
	\]
	$m_{0}\left(X_{i},W_{i}\right)=\gamma\left(W_{i}-\frac{1}{2}\right)$; $m_{1}\left(X_{i},W_{i}\right)=m_{0}\left(X_{i},W_{i}\right)$;
	$\epsilon_{d,i}\sim N(0,1)$ for $d=0,1$; $\sigma_{0}\left(X_{i},W_{i}\right)=\sigma_{1}\left(X_{i},W_{i}\right)=1$. We set $\gamma=4$ and $\rho=0.2$.
	
	\item [{Model 2}] $\left(X_{i},W_{i}\right)^\top=\left(\Phi\left(V_{i1}\right),V_{1i}V_{i2}\right)^\top$, where $V_{i}$ is the same as in Model 1. $m_{0}\left(X_{i},W_{i}\right)=m_{1}\left(X_{i},W_{i}\right)=\gamma_{1}\left(W_{i} - \rho\right)+ \gamma_{2}\left(\Phi^{-1}\left(X_{i}\right)^2 - 1\right)$. $\epsilon_{d,i}\sim N(0,1)$ for $d=0,1$; $\sigma_{0}\left(X_{i},W_{i}\right)=\sigma_{1}\left(X_{i},W_{i}\right)=1$. $\left(\gamma_{1},\gamma_{2}\right)^\top=\left(1,2\right)^\top$ and $\rho=0.2$.
	
	\item [{Model 3}] The same as in Model 2, except that $m_{0}\left(X_{i},W_{i}\right)=m_{1}\left(X_{i},W_{i}\right)=\gamma_{1}\left(W_{i} - \rho\right)+ \gamma_{2}\left(\Phi\left(W_{i}\right) - \frac{1}{2}\right) + \gamma_{3}\left(\Phi^{-1}\left(X_{i}\right)^2  - 1\right)$ with $\left(\gamma_{1},\gamma_{2},\gamma_{3}\right)^\top=\left(\frac{1}{4},1,2\right)^\top$.
	
	\item [{Model 4}] $\left(X_{i},W_{i}\right)^\top=\left(V_{i1},V_{1i}V_{i2}\right)^\top$, where $V_{i}$ is the same as in Model 1. $m_{0}\left(X_{i},W_{i}\right)=m_{1}\left(X_{i},W_{i}\right)=\gamma_{1}\left(W_{i} - \rho\right)+ \gamma_{2}\left(\Phi\left(W_{i}\right) - \frac{1}{2}\right) + \gamma_{3}\left(X_{i}^2  - 1\right)$. $\epsilon_{d,i}\sim N(0,1)$ for $d=0,1$; $\sigma_{0}\left(X_{i},W_{i}\right)=\sigma_{1}\left(X_{i},W_{i}\right)=1$. $\left(\gamma_{1},\gamma_{2},\gamma_{3}\right)^\top=\left(2,1,2\right)^\top$ and $\rho=0.2$.
	
	\item [{Model 5}] The same as in Model 4, except that $m_{1}\left(X_{i},W_{i}\right)= m_{0}\left(X_{i},W_{i}\right) + \left(\Phi\left(X_{i}\right) - \frac{1}{2}\right)$.
	
	\item [{Model 6}] The same as in Model 5, except that $\sigma_{0}\left(X_{i},W_{i}\right)=\sigma_{1}\left(X_{i},W_{i}\right)=\left(\Phi\left(X_{i}\right)+0.5\right)$.
	
	\item [{Model 7}] $X_{i}=\left(V_{i1},V_{i2}\right)^\top$ and $W_{i}=\left(V_{i1}V_{i3},V_{i2}V_{i4}\right)^\top$, where $V_{i} \sim N(0,\Sigma)$ with $\text{dim}(V_{i})=4$ and $\Sigma$ consisting of 1 on the diagonal and $\rho$ on all other elements. $m_{0}\left(X_{i},W_{i}\right)=m_{1}\left(X_{i},W_{i}\right)=\gamma_{1}^\prime\left(W_{i} - \rho\right) + \gamma_{2}^\prime\left(\Phi\left(W_{i}\right) - \frac{1}{2}\right) + \gamma_{3}\left(X_{i1}^2 - 1\right)$ with $\gamma_{1}=\left(2,2\right)^\top,\gamma_{2}=\left(1,1\right)^\top, \gamma_{3}=1$. $\epsilon_{d,i}\sim N(0,1)$ for $d=0,1$; $\sigma_{0}\left(X_{i},W_{i}\right)=\sigma_{1}\left(X_{i},W_{i}\right)=1$. $\rho=0.2$.
	
	\item [{Model 8}] The same as in Model 7, except that $m_{1}\left(X_{i},W_{i}\right)= m_{0}\left(X_{i},W_{i}\right) + \left(\Phi\left(X_{i1}\right) - \frac{1}{2}\right)$.
	
	\item [{Model 9}] The same as in Model 8, except that $\sigma_{0}\left(X_{i},W_{i}\right)=\sigma_{1}\left(X_{i},W_{i}\right)=\left(\Phi\left(X_{i1}\right) +0.5\right)$
	
	\item [{Model 10}] $X_{i}=\left(\Phi\left(V_{i1}\right),\cdots,\Phi\left(V_{i4}\right)\right)^\top$ and $W_{i}=\left(V_{i1}V_{i5},V_{i2}V_{i6}\right)^\top$, where $V_{i}\sim N(0,\Sigma)$ with $\text{dim}(V_{i})=6$ and $\Sigma$ consisting of 1 on the diagonal and $\rho$ on all other elements. $m_{0}\left(X_{i},W_{i}\right)=m_{1}\left(X_{i},W_{i}\right)=\gamma_{1}^\prime\left(W_{i} - \rho\right) + \gamma_{2}^\prime\left(\Phi\left(W_{i}\right) - \frac{1}{2}\right) + \gamma_{3}^\prime\left(\left(\Phi^{-1}\left(X_{i1}\right)^2,\Phi^{-1}\left(X_{i2}\right)^2\right)^\top - 1\right)$ with $\gamma_{1}=\left(1,1\right)^\top,\gamma_{2}=\left(\frac{1}{2},\frac{1}{2}\right)^\top, \gamma_{3}=\left(\frac{1}{2},\frac{1}{2}\right)^\top$. $\epsilon_{d,i}\sim N(0,1)$ for $d=0,1$; $\sigma_{0}\left(X_{i},W_{i}\right)=\sigma_{1}\left(X_{i},W_{i}\right)=1$.
	
	\item [{Model 11}] The same as in Model 10, except that $m_{1}\left(X_{i},W_{i}\right)= m_{0}\left(X_{i},W_{i}\right) + \frac{1}{4}\sum_{j=1}^{4}\left(X_{ij} - \frac{1}{2}\right)$.
	
	\item [{Model 12}] $X_{i}=\left(\Phi\left(V_{i1}\right),\cdots,\Phi\left(V_{i4}\right)\right)^\top$ and $W_{i}=\left(V_{i1}V_{i41},\cdots,V_{i40}V_{i80}\right)^\top$, where $V_{i} \sim N(0,\Sigma)$ with $\text{dim}(V_{i})=80$. $\Sigma$ is the Toeplitz matrix
	\begin{align*}
		\Sigma = \begin{pmatrix}
			1 & 0.5 & 0.5^2 &\cdots & 0.5^{79} \\
			0.5 & 1 & 0.5 & \cdots & 0.5^{78} \\
			0.5^2 & 0.5 & 1 & \cdots & 0.5^{77} \\
			\vdots & \vdots & \vdots & \ddots & \vdots \\
			0.5^{79} & 0.5^{78} & 0.5^{77} & \cdots & 1
		\end{pmatrix}.
	\end{align*}
	$m_{0}\left(X_{i},W_{i}\right)=m_{1}\left(X_{i},W_{i}\right)=\gamma_{1}^{\prime}W_{i} + \gamma_{2}^\prime\left(\Phi^{-1}\left(X_{i}\right)^2 - 1\right)$, $\gamma_{1}=\left(\frac{1}{1^2},\frac{1}{2^2},\cdots,\frac{1}{40^2}\right)^\top$ with $\text{dim}(\gamma_{1})=40$, and $\gamma_{2}=\left(\frac{1}{8},\frac{1}{8},\frac{1}{8},\frac{1}{8}\right)^\top$ with $\text{dim}(\gamma_{2})=4$. $\epsilon_{d,i}\sim N(0,1)$ for $d=0,1$; $\sigma_{0}\left(X_{i},W_{i}\right)=\sigma_{1}\left(X_{i},W_{i}\right)=1$.
	
	\item [{Model 13}] The same as in Model 12, except that $m_{0}\left(X_{i},W_{i}\right)=m_{1}\left(X_{i},W_{i}\right)=\gamma_{1}^{\prime}W_{i} + \gamma_{2}^\prime\left(\Phi\left(W_{i}\right) - \frac{1}{2}\right) + \gamma_{3}^\prime\left(\Phi^{-1}\left(X_{i}\right)^2 - 1\right)$, $\gamma_{1}=\left(\frac{1}{1^2},\cdots,\frac{1}{40^2}\right)^\top$,  $\gamma_{2}=\frac{1}{8}\left(\frac{1}{1^2},\cdots,\frac{1}{40^2}\right)^\top$, and $\gamma_{3}=\left(\frac{1}{8},\frac{1}{8},\frac{1}{8},\frac{1}{8}\right)^\top$ with $\text{dim}(\gamma_{1})=\text{dim}(\gamma_{2})=40$ and $\text{dim}(\gamma_{3})=4$.
	
	\item [{Model 14}] The same as in Model 13, except that $m_{1}\left(X_{i},W_{i}\right)= m_{0}\left(X_{i},W_{i}\right) + \sum_{j=1}^{4}\frac{1}{j^2}\left(X_{ij}- \frac{1}{2}\right)$.
	
	\item [{Model 15}] The same as in Model 14, except that $\sigma_{0}\left(X_{i},W_{i}\right)=\sigma_{1}\left(X_{i},W_{i}\right)=\left(X_{i1}+0.5\right)$.

\end{description}

It is worth noting that Models 1, 2, 3, 4,  7, 10, 12, and 13 imply homogeneous treatment effects because $m_{1}\left(X_{i},W_{i}\right) = m_{0}\left(X_{i},W_{i}\right)$. Among them, $E[Y_i(a)|X_i,W_i] - E[Y_i(a)|X_i]$ is linear in $W_i$ in Models 1, 2, and 12. Models 5, 8, 11, and 14 have heterogeneous but homoscedastic treatment effects. In Models 6, 9, and 15, however, the implied treatment effects are both heterogeneous and heteroscedastic. Models 12-15 contain high-dimensional covariates. 

We follow \cite{bai2021inference} to match pairs. Specifically, if $\text{dim}\left(X_i\right)=1$, we match pairs by sorting $X_i, i = 1, \ldots, 2n$. If $\text{dim}\left(X_i\right)>1$, we match pairs by the permutation $\pi$ calculated using the {\tt R} package \emph{nbpMatching}. For more details, see \citet[Section 4]{bai2021inference}. After matching the pairs, we flip coins to randomly select one unit within each pair for treatment and another for control.

\subsection{Estimation and Inference}
\label{sec:sims2}
We set $\mu_0=0$ and $\mu_{1}=\Delta$, where $\Delta=0$ and $1/4$ are used to illustrate the size and power, respectively. Rejection probabilities in percentage points are presented. To further illustrate the efficiency gains obtained by regression adjustments, in Figure \ref{fig:bar}, we plot the average standard error reduction in percentage relative to the standard error of the estimator without adjustments for various estimation methods.  

Specifically, we consider the following adjusted estimators. 
\begin{enumerate}[(i)]
	\item unadj: the estimator with no adjustments. In this case, our standard error is identical to the adjusted standard error proposed by \cite{bai2021inference}.
	\item na\"ive: the linear adjustments with regressors $W_i$ but without pair dummies. 
	\item na\"ive2: the linear adjustments with $X_i$ and $W_i$ regressors but without pair dummies. 
	\item pfe: the linear adjustments with regressors $W_i$ and pair dummies.
        \item refit: refit the $\ell_1$-regularized adjustments by linear regression with pair dummies.
        
\end{enumerate}
See Section \ref{sec:sims-details} in the Online Supplement for the regressors used in the regularized adjustments.

For Models 1-11, we examine the performance of estimators (i)-(v). For Models 12-15, we assess the performance among estimators (i) and (v) in high-dimensional settings. Note that the adjustments are misspecified for almost all the models. The only exception is Model 1, for which the linear adjustment in $W_i$ is correctly specified because $m_d(X_i,W_i)$ is just a linear function of $W_i$. 

\subsection{Simulation Results}

Tables \ref{tab:no_LASSO_tab} and \ref{tab:no_LASSO_tab2} report rejection probabilities at the 0.05 level and power of the different methods for Models 1--11 when $n$ is 100 and 200, respectively. Several patterns emerge. First, for all the estimators, the rejection rates under $H_0$ are close to the nominal level even when $n=100$ and with misspecified adjustments. This result is expected because all the estimators take into account the dependence structure arising in the ``matched pairs'' design, consistent with the findings in \cite{bai2021inference}.

Second, in terms of power,  ``pfe'' is higher than ``unadj'', ``na\"ive'', and ``na\"ive2'' for all eleven models, as predicted by our theory. This finding confirms that ``pfe'' is the optimal linear adjustment and will not degrade the precision of the ATE estimator. In contrast, we observe that ``na\"ive'' and ``na\"ive2'' in Model 3 are even less powerful than the unadjusted estimator ``unadj''. Figure \ref{fig:bar} further confirms that these two methods inflate the estimation standard error. This result echoes Freedman's critique \citep{freedman2008regression} that careless regression adjustments may degrade the estimation precision. Our ``pfe'' addresses this issue because it has been proven to be weakly more efficient than the unadjusted estimator.

Third, the improvement of power for ``pfe'' is mainly due to the reduction of estimation standard errors, which can be more than 50\% as shown in Figure \ref{fig:bar} for Models 4--9. This means that the length of the confidence interval of the ``pfe'' estimator is just half of that for the ``unadj'' estimator.  Note the standard error of the ``unadj'' estimator is the one proposed by \cite{bai2021inference}, which has already been adjusted to account for the cross-sectional dependence created in pair matching. The extra 50\% reduction is therefore produced purely by the regression adjustment. For Models 10-11, the reduction of standard errors achieved by ``pfe'' is more than 40\% as well. For Model 1, the linear regression is correctly specified so that all three methods achieve the global minimum asymptotic variance and maximum power. For Model 2, $m_d(X_i,W_i) - E[m_d(X_i,W_i)|X_i] = \gamma (W_i - E[W_i|X_i])$ so that the linear adjustment $\gamma W_i$ satisfies the conditions in Theorem \ref{thm:main}. Therefore, ``pfe'', as the best linear adjustment, is also the best adjustment globally, achieving the global minimum asymptotic variance and maximum power. In contrast, ``na\"ive'' and ``na\"ive2'' are not the best linear adjustment and therefore less powerful than ``pfe'' because of the omitted pair dummies.

Finally,  the ``refit'' method has the best power for most models as they automatically achieve the global minimum asymptotic variance when the dimension of $W_i$ is fixed. 

Tables \ref{tab:LASSO_tab} and \ref{tab:LASSO_tab2} report the size and power for the ``refit'' adjustments when both $W_i$ and $X_i$ are high-dimensional. We see that the size under the null is close to the nominal 5\% while the power for the adjusted estimator is higher than the unadjusted one. Figure \ref{fig:bar} further illustrates the reduction of the standard error is more than 30\% for all high-dimensional models. 

\newcolumntype{L}{>{\raggedright\arraybackslash}X}
\newcolumntype{C}{>{\centering\arraybackslash}X}

\begin{table}[t!]
	\caption{Rejection probabilities for Models 1-11 when $n=100$}
	\vspace{1ex}
	
	\centering{}%
	\begin{tabularx}{1\textwidth}{LCCCCCCCCCC}
		\toprule
		& \multicolumn{5}{c}{$H_{0}$: $\Delta=0$} & \multicolumn{5}{c}{$H_{1}$: $\Delta=1/4$}\\
		\cmidrule(lr){2-6} \cmidrule(lr){7-11}
		Model & \multicolumn{1}{c}{unadj} & \multicolumn{1}{c}{na\"ive} & \multicolumn{1}{c}{na\"ive2} & \multicolumn{1}{c}{pfe} & \multicolumn{1}{c}{refit}  & \multicolumn{1}{c}{unadj}  & \multicolumn{1}{c}{na\"ive} & \multicolumn{1}{c}{na\"ive2} & \multicolumn{1}{c}{pfe} & \multicolumn{1}{c}{refit} \\
		\midrule
1                       & 5.47      & 5.57      & 5.63      & 5.76      & 5.84      & 22.48     & 43.89     & 43.95     & 43.91     & 43.92      \\ 
2                       & 4.96      & 5.26      & 5.30      & 5.47      & 5.32      & 23.32     & 28.02     & 27.96     & 37.21     & 33.12      \\ 
3                       & 4.99      & 5.28      & 5.24      & 5.48      & 5.27      & 32.19     & 27.88     & 27.96     & 37.34     & 36.29      \\ 
4                       & 5.31      & 5.28      & 5.28      & 5.48      & 5.79      & 11.78     & 27.88     & 28.03     & 37.34     & 43.28      \\ 
5                       & 5.43      & 5.09      & 5.08      & 5.49      & 5.78      & 11.87     & 27.72     & 27.88     & 36.69     & 43.08      \\ 
6                       & 5.28      & 5.43      & 5.41      & 5.58      & 5.79      & 11.78     & 26.67     & 26.72     & 34.71     & 40.29      \\ 
7                       & 5.64      & 5.63      & 5.62      & 5.98      & 6.04      & 9.24      & 34.55     & 34.65     & 37.96     & 42.08      \\ 
8                       & 5.63      & 5.54      & 5.51      & 6.03      & 6.17      & 9.28      & 34.11     & 34.42     & 37.22     & 41.29      \\ 
9                       & 5.74      & 5.69      & 5.76      & 6.19      & 5.89      & 8.99      & 32.39     & 32.30     & 35.42     & 38.75      \\ 
10                      & 5.24      & 5.78      & 5.73      & 6.05      & 6.04      & 14.27     & 30.80     & 30.75     & 32.02     & 32.51      \\ 
11                      & 5.19      & 5.78      & 5.72      & 6.07      & 5.95      & 14.36     & 30.60     & 30.49     & 32.21     & 32.81      \\ 
		\bottomrule
	\end{tabularx}
	
	\label{tab:no_LASSO_tab}
\end{table}

\begin{table}[ht!]
	\caption{Rejection probabilities for Models 12-15 when $n=100$}
	\vspace{1ex}
	
	\centering{}%
	\begin{tabularx}{1\textwidth}{LCCCC}
		\toprule
		& \multicolumn{2}{c}{$H_{0}$: $\Delta=0$} & \multicolumn{2}{c}{$H_{1}$: $\Delta=1/4$}\\
		\cmidrule(lr){2-3} \cmidrule(lr){4-5}
		& \multicolumn{1}{c}{unadj} & \multicolumn{1}{c}{refit}  & \multicolumn{1}{c}{unadj} & \multicolumn{1}{c}{refit} \\
		\midrule
12                      & 5.35      & 6.12      & 22.01     & 42.56      \\ 
13                      & 5.31      & 6.11      & 21.47     & 42.47      \\ 
14                      & 5.24      & 6.07      & 21.39     & 41.14      \\ 
15                      & 5.31      & 6.23      & 20.73     & 38.67      \\   

		\bottomrule
	\end{tabularx}
	
	\label{tab:LASSO_tab}
\end{table}

\begin{table}[ht!]
	\caption{Rejection probabilities for Models 1-11 when $n=200$}
	\vspace{1ex}
	
	\centering{}%
	\begin{tabularx}{1\textwidth}{LCCCCCCCCCC}
		\toprule
		& \multicolumn{5}{c}{$H_{0}$: $\Delta=0$} & \multicolumn{5}{c}{$H_{1}$: $\Delta=1/4$}\\
		\cmidrule(lr){2-6} \cmidrule(lr){7-11}
		Model & \multicolumn{1}{c}{unadj} & \multicolumn{1}{c}{na\"ive} & \multicolumn{1}{c}{na\"ive2} & \multicolumn{1}{c}{pfe} & \multicolumn{1}{c}{refit}  & \multicolumn{1}{c}{unadj}  & \multicolumn{1}{c}{na\"ive} & \multicolumn{1}{c}{na\"ive2} & \multicolumn{1}{c}{pfe} & \multicolumn{1}{c}{refit} \\
		\midrule
1                       & 5.08      & 5.04      & 5.10      & 5.21      & 5.31      & 38.94     & 70.35     & 70.36     & 70.32     & 70.30      \\ 
2                       & 5.69      & 5.28      & 5.28      & 5.24      & 5.40      & 40.31     & 49.25     & 49.32     & 65.36     & 57.87      \\ 
3                       & 5.44      & 5.29      & 5.30      & 5.35      & 5.41      & 56.89     & 49.43     & 49.51     & 64.96     & 62.42      \\ 
4                       & 5.45      & 5.29      & 5.29      & 5.35      & 5.20      & 18.55     & 49.43     & 49.67     & 64.96     & 69.96      \\ 
5                       & 5.45      & 5.24      & 5.18      & 5.19      & 5.29      & 18.41     & 48.65     & 48.80     & 64.11     & 69.09      \\ 
6                       & 5.62      & 5.32      & 5.31      & 5.35      & 5.43      & 18.19     & 46.71     & 46.67     & 61.09     & 65.98      \\ 
7                       & 5.24      & 5.51      & 5.46      & 5.34      & 5.49      & 11.86     & 60.73     & 60.63     & 65.14     & 69.24      \\ 
8                       & 5.23      & 5.49      & 5.47      & 5.35      & 5.65      & 11.84     & 60.00     & 60.10     & 64.93     & 68.02      \\ 
9                       & 5.30      & 5.58      & 5.57      & 5.66      & 5.81      & 11.90     & 57.25     & 57.28     & 61.61     & 64.88      \\ 
10                      & 5.34      & 5.19      & 5.15      & 5.25      & 5.31      & 23.95     & 55.49     & 55.44     & 56.64     & 56.43      \\ 
11                      & 5.41      & 5.36      & 5.32      & 5.34      & 5.41      & 23.88     & 55.01     & 55.05     & 56.31     & 56.18      \\ 
		\bottomrule
	\end{tabularx}
	
	\label{tab:no_LASSO_tab2}
\end{table}

\begin{table}[ht!]
	\caption{Rejection probabilities for Models 12-15 when $n=200$}
	\vspace{1ex}
	
	\centering{}%
	\begin{tabularx}{1\textwidth}{LCCCC}
	\toprule
		& \multicolumn{2}{c}{$H_{0}$: $\Delta=0$} & \multicolumn{2}{c}{$H_{1}$: $\Delta=1/4$}\\
		\cmidrule(lr){2-3} \cmidrule(lr){4-5}
		& \multicolumn{1}{c}{unadj} & \multicolumn{1}{c}{refit}  & \multicolumn{1}{c}{unadj} & \multicolumn{1}{c}{refit} \\
		
		\midrule
12                      & 4.97      & 5.22      & 38.91     & 68.10      \\ 
13                      & 4.95      & 5.19      & 38.04     & 68.06      \\ 
14                      & 5.01      & 5.24      & 37.65     & 66.69      \\ 
15                      & 5.15      & 5.40      & 36.61     & 63.79      \\  
		\bottomrule
	\end{tabularx}
	\label{tab:LASSO_tab2}
\end{table}

\begin{figure}[ht!]
	\includegraphics[width=5.5in, height=5.0in]{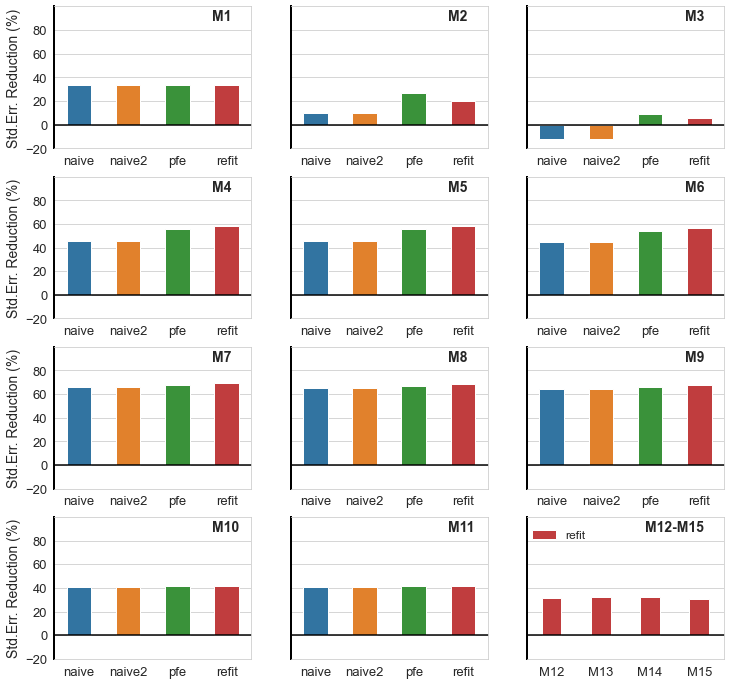}
	\centering
	\caption{Average Standard Error Reduction in Percentage under $H_{1}$ when $n=200$}
	\label{fig:bar}
	\vspace{-1ex}
	\justify
	Notes: The figure plots average standard error reduction in percentage achieved by regression adjustments relative to ``unadj'' under $H_{1}$ for Models 1-15 when $n=200$. 
\end{figure}

\section{Empirical Illustration} \label{sec:empirical}

In this section, we revisit the randomized experiment with a matched pairs design conducted in \cite{groh2016macroinsurance}. In the paper, they examined the impact of macroinsurance on microenterprises. Here, we apply the covariate adjustment methods developed in this paper to their data and reinvestigate the average effect of macroinsurance on three outcome variables: the microenterprises' monthly profits, revenues, and investment.

The subjects in the experiment are microenterprise owners, who were the clients of the largest microfinance institution in Egypt. In the randomization, after an exact match of gender and the institution's branch code, those clients were grouped into pairs by applying an optimal greedy algorithm to additional 13 matching variables. Within each pair, a macroinsurance product was then offered to one randomly assigned client, and the other acted as a control. Based on the pair identities and all the matching variables, we re-order the pairs in our sample according to the procedure described in Section 5.1 of \cite{Jiang2022QTE}. The resulting sample contains 2824 microenterprise owners, that is, 1412 pairs of them.\footnote{See \cite{groh2016macroinsurance} and \cite{Jiang2022QTE} for more details.}

Table \ref{tab:emp_ate} reports the ATEs with the standard errors (in parentheses) estimated by different methods. Among them, ``GM'' corresponds to the method used in \cite{groh2016macroinsurance}.\footnote{\cite{groh2016macroinsurance} estimated the effect by regression with regressors including some baseline variables, a dummy for missing observations, and dummies for the pairs. Specifically, for profits and revenues, the regressors are the baseline value for the outcome of interest, a dummy for missing observations, and pair dummies; for investment, the regressors only include pair dummies. The standard errors for the ``GM'' ATE estimate are calculated by the usual heteroskedastity-consistent estimator. The ``GM'' results in Table \ref{tab:emp_ate} were obtained by applying the Stata code provided by \cite{groh2016macroinsurance}.} The description of other methods is similar to that in Section \ref{sec:sims2}.\footnote{Specifically:
\begin{enumerate}[(i)]
    \item $X_i$ includes gender and 13 additional matching variables for all adjustments. Three of the matching variables are continuous, and the others are dummies.
    \item To maintain comparability, we keep $X_i$ and $W_i$ consistent across all adjustments except for ``refit'' for each outcome variable. For profits and revenue, $W_i$ includes the baseline value for the outcome of interest, a dummy for whether the firm is above the 95th percentile of the control firms' distributions of the outcome variable, and a dummy for missing observations. For investment, $W_i$ includes all the covariates used for the first two outcome variables. 
    \item For ``refit'', we intentionally expand the dimensions of $W_i$. In addition to the baseline values used in the other adjustments and the dummy variables for missing observations, the $W_i$ used in ``refit'' also includes the interaction of the continuous original $W_i$ variables with three continuous variables and the first three discrete variables in $X_i$. 
    \item All the continuous variables in $X_i$ and $W_i$ are standardized initially when the regression-adjusted estimators are employed.
\end{enumerate} } The results in this table prompt the following observations. 

First, aligning with our theoretical and simulation findings, we observe that the standard errors associated with the covariate-adjusted ATEs, particularly those for the ``naïve2'' and ``pfe'' estimates, are generally lower compared to the ATE estimate without any adjustment. This pattern is consistent across nearly all the outcome variables. To illustrate, when examining the revenue outcome, the standard errors for the ``pfe'' estimates are 10.2\% smaller than those for the unadjusted ATE estimate. 

Second, the standard errors of the ``refit'' estimates are consistently smaller than those of the unadjusted ATE estimate across all the outcome variables. For example, when profits are the outcome variable, the ``refit'' estimates exhibit standard errors 7.5\% smaller than those of the unadjusted ATE estimate. Moreover, compared with those of the ``pfe'' estimates, the standard errors of ``refit'' are slightly smaller.

\begin{table}[H]
	\centering
	\caption{Impacts of Macronsurance for Microenterprises}
	\vspace{1ex}
	\begin{tabularx}{1\textwidth}{LCCCCCCC}
		\toprule
		Y & n & unadj & GM & na\"ive & na\"ive2 & pfe & refit\\
		\midrule
Profits                 & 1322      & -85.65    & -50.88    & -41.69    & -50.97    & -51.60    & -55.13     \\ 
                        &           & (49.43)   & (46.46)   & (47.22)   & (45.49)   & (46.94)   & (45.71)    \\ 
Revenue                 & 1318      & -838.60   & -660.16   & -611.75   & -610.80   & -635.80   & -600.97    \\ 
                        &           & (319.02)  & (284.02)  & (286.93)  & (282.93)  & (286.50)  & (284.60)   \\ 
Investment              & 1410      & -66.60    & -66.60    & -49.37    & -50.72    & -67.31    & -58.77     \\ 
                        &           & (118.93)  & (118.66)  & (119.23)  & (118.97)  & (118.88)  & (118.84)   \\

		\bottomrule
	\end{tabularx} \\
	
	\vspace{-1ex}
	\justify
	Notes: The table reports the ATE estimates of the effect of macroinsurance for microenterprises. Standard errors are in parentheses. 
	\label{tab:emp_ate}
\end{table}

\section{Conclusion}
This paper considers covariate adjustment for the estimation of average treatment effect in ``matched pairs'' designs when covariates other than the matching variables are available. When the dimension of these covariates is low, we suggest estimating the average treatment effect by a linear regression of the outcome on treatment status and covariates, controlling for pair fixed effects.  We show that this estimator is no worse than the simple difference-in-means estimator in terms of efficiency. When the dimension of these covariates is high, we suggest a two-step estimation procedure: in the first step, we run $\ell_1$-regularized regressions of outcome on covariates for the treated and control groups separately and obtain the fitted values for both potential outcomes, and in the second step, we estimate the average treatment effect by refitting a linear regression of outcome on treatment status and regularized adjustments from the first step, controlling for the pair fixed effects. We show that the final estimator is no worse than the simple difference-in-means estimator in terms of efficiency. When the conditional mean models are approximately correctly specified, this estimator further achieves the minimum variance as if all relevant covariates are used to form pairs in the experiment design stage.  We take the choice of variables to use in forming pairs as given and focus on how to obtain more efficient estimators of the average treatment effect in the analysis stage. Our paper is therefore silent on the important question of how to choose the relevant matching variables in the design stage. This topic is left for future research. 

\clearpage
\newpage
\appendix

\section{Proofs of Main Results} \label{sec:proof_main}
In the appendix, we use $a_n \lesssim b_n$ to denote there exists $c > 0$ such that $a_n \leq c b_n$.

\subsection{Proof of Theorem \ref{thm:main}}
\underline{Step 1: Decomposition by recursive conditioning}

To begin, note
\begin{align}
\nonumber \hat \mu_n(1) & = \frac{1}{2n} \sum_{1 \leq i \leq 2n} (2 D_i (Y_i(1) - \hat m_{1, n}(X_i, W_i)) + \hat m_{1, n}(X_i, W_i)) \\
\nonumber & = \frac{1}{2n} \sum_{1 \leq i \leq 2n} (2 D_i Y_i(1) - (2 D_i - 1) \hat m_{1, n}(X_i, W_i)) \\
\nonumber & = \frac{1}{2n} \sum_{1 \leq i \leq 2n} (2 D_i Y_i(1) - (2 D_i - 1) m_{1, n}(X_i, W_i)) + o_P(n^{-1/2}) \\
\label{eq:mu1} & = \frac{1}{2n} \sum_{1 \leq i \leq 2n} (2 D_i Y_i(1) - D_i m_{1, n}(X_i, W_i) - (1 - D_i) m_{1, n}(X_i, W_i)) + o_P(n^{-1/2})~,
\end{align}
where the third equality follows from \eqref{eq:rate}. Similarly,
\begin{equation} \label{eq:mu0}
\hat \mu_n(0) = \frac{1}{2n} \sum_{1 \leq i \leq 2n} (2 (1 - D_i) Y_i(0) - D_i m_{0, n}(X_i, W_i) - (1 - D_i) m_{0, n}(X_i, W_i)) + o_P(n^{-1/2})~.
\end{equation}
It follows from \eqref{eq:mu1}--\eqref{eq:mu0} that
\begin{equation} \label{eq:delta-phi}
\hat \Delta_n = \frac{1}{n} \sum_{1 \leq i \leq 2n} D_i \phi_{1, n, i} - \frac{1}{n} \sum_{1 \leq i \leq 2n} (1 - D_i) \phi_{0, n, i} + o_P(n^{-1/2})~,
\end{equation}
where
\begin{align*}
\phi_{1, n, i} & = Y_i(1) - \frac{1}{2} (m_{1, n}(X_i, W_i) + m_{0, n}(X_i, W_i)) \\
\phi_{0, n, i} & = Y_i(0) - \frac{1}{2} (m_{1, n}(X_i, W_i) + m_{0, n}(X_i, W_i))~.
\end{align*}
Next, consider
\begin{align*}
\mathbb L_n = \frac{1}{2\sqrt n} \sum_{1 \leq i \leq 2n} (2 D_i - 1) E[m_{1, n}(X_i, W_i) + m_{0, n}(X_i, W_i) | X_i]~.
\end{align*}
For simplicity, define $M_{d, n}(X_i) = E[m_{d, n}(X_i, W_i) | X_i]$ for $d \in \{0, 1\}$. It follows from Assumption \ref{ass:treatment} that $E[\mathbb L_n | X^{(n)}] = 0$. On the other hand,
\begin{align*}
\var[\mathbb L_n | X^{(n)}] & = \frac{1}{4n} \sum_{1 \leq j \leq n} \left ( M_{1, n}(X_{\pi(2j - 1)}) + M_{0, n}(X_{\pi(2j - 1)}) - (M_{1, n}(X_{\pi(2j)}) + M_{0, n}(X_{\pi(2j)})) \right )^2 \\
& \lesssim \frac{1}{n} \sum_{1 \leq j \leq n} |M_{1, n}(X_{\pi(2j - 1)}) - M_{1, n}(X_{\pi(2j)})|^2 + \frac{1}{n} \sum_{1 \leq j \leq n} |M_{0, n}(X_{\pi(2j - 1)}) - M_{0, n}(X_{\pi(2j)})|^2 \\
& \stackrel{P}{\to} 0~,
\end{align*}
where the inequality follows from $(a + b)^2 \leq 2(a^2 + b^2)$ and the convergence follows from Assumptions \ref{ass:close} and \ref{ass:md}(c). By Markov's inequality and the fact that $E[\mathbb L_n | X^{(n)}] = 0$, for any $\epsilon > 0$,
\[ P \{|\mathbb L_n| > \epsilon | X^{(n)}\} \leq \frac{\var[\mathbb L_n | X^{(n)}]}{\epsilon^2} \stackrel{P}{\to} 0~. \]
Since probabilities are bounded, we have $\mathbb L_n = o_P(1)$. This fact, together with \eqref{eq:delta-phi}, imply
\[ \sqrt n(\hat \Delta_n - \Delta(Q)) = A_n - B_n + C_n - D_n~, \]
where
\begin{align*}
A_n & = \frac{1}{\sqrt n} \sum_{1 \leq i \leq 2n} \left ( D_i \phi_{1, n, i} - E[D_i \phi_{1, n, i} | X^{(n)}, D^{(n)}] \right ) \\
B_n & = \frac{1}{\sqrt n} \sum_{1 \leq i \leq 2n} \left ( (1 - D_i) \phi_{0, n, i} - E[(1 - D_i) \phi_{0, n, i} | X^{(n)}, D^{(n)}] \right ) \\
C_n & = \frac{1}{\sqrt n} \sum_{1 \leq i \leq 2n} D_i (E[Y_i(1) | X_i] - E[Y_i(1)]) \\
D_n & = \frac{1}{\sqrt n} \sum_{1 \leq i \leq 2n} (1 - D_i)(E[Y_i(0) | X_i] - E[Y_i(0)])~.
\end{align*}
Note that conditional on $X^{(n)}$ and $D^{(n)}$, $A_n$ and $B_n$ are independent while $C_n$ and $D_n$ are constants.

\noindent \underline{Step 2: Conditional central limit theorems}

We first analyze the limiting behavior of $A_n$. Define
\[ s_n^2 = \sum_{1 \leq i \leq 2n} D_i \var[\phi_{1, n, i} | X_i]~. \]
Note by Assumption \ref{ass:treatment} that $s_n^2 = n \var[A_n | X^{(n)}, D^{(n)}]$. We proceed verify the Lindeberg condition for $A_n$ conditional on $X^{(n)}$ and $D^{(n)}$, i.e., we show that for every $\epsilon > 0$,
\begin{equation} \label{eq:lind}
\frac{1}{s_n^2} \sum_{1 \leq i \leq 2n} E [ |D_i (\phi_{1, n, i} - E[\phi_{1, n, i} | X_i])|^2 I \{|D_i (\phi_{1, n, i} - E[\phi_{1, n, i} | X_i])| > \epsilon s_n\} | X^{(n)}, D^{(n)} ] \stackrel{P}{\to} 0~.
\end{equation}
To that end, first note Lemma \ref{lem:sn} implies
\begin{equation} \label{eq:sn-limit}
\frac{s_n^2}{n E[\var[\phi_{1, n, i} | X_i]]} \stackrel{P}{\to} 1~.
\end{equation}
\eqref{eq:sn-limit} and Assumption \ref{ass:md}(a) imply that for all $\lambda > 0$,
\begin{equation} \label{eq:sn-eps}
P \{\epsilon s_n > \lambda\} \stackrel{P}{\to} 1~.
\end{equation}
Furthermore, for some $c > 0$,
\begin{equation} \label{eq:sn-n}
P \left \{ \frac{s_n^2}{n} > c \right \} \to 1~. 
\end{equation}
Next, note for any $\lambda > 0$ and $\delta_1 > 0$, the left-hand side of \eqref{eq:lind} can be written as
\begin{align}
\nonumber & \frac{1}{s_n^2 / n} \frac{1}{n} \sum_{1 \leq i \leq 2n: D_i = 1} E[ |\phi_{1, n, i} - E[\phi_{1, n, i} | X_i]|^2 I \{|\phi_{1, n, i} - E[\phi_{1, n, i} | X_i]| > \epsilon s_n\} | X^{(n)}, D^{(n)}] \\
\nonumber & \leq \frac{1}{s_n^2 / n} \frac{1}{n} \sum_{1 \leq i \leq 2n} E[ |\phi_{1, n, i} - E[\phi_{1, n, i} | X_i]|^2 I \{|\phi_{1, n, i} - E[\phi_{1, n, i} | X_i]| > \epsilon s_n\} | X^{(n)}, D^{(n)}] \\
\nonumber & \leq \frac{1}{c} \frac{1}{n} \sum_{1 \leq i \leq 2n} E[ |\phi_{1, n, i} - E[\phi_{1, n, i} | X_i]|^2 I \{|\phi_{1, n, i} - E[\phi_{1, n, i} | X_i]| > \lambda\} | X^{(n)}, D^{(n)}] + o_P(1) \\
\label{eq:lind1} & \leq \frac{2}{c} \frac{1}{2n} \sum_{1 \leq i \leq 2n} E[ |\phi_{1, n, i} - E[\phi_{1, n, i} | X_i]|^2 I \{|\phi_{1, n, i} - E[\phi_{1, n, i} | X_i]| > \lambda\} | X_i] + o_P(1)~,
\end{align}
where the first inequality follows by inspection, the second follows from \eqref{eq:sn-eps}--\eqref{eq:sn-n}, and the last follows from Assumption \ref{ass:treatment}. We then argue
\begin{multline} \label{eq:lind2}
\frac{1}{2n} \sum_{1 \leq i \leq 2n} E[ |\phi_{1, n, i} - E[\phi_{1, n, i} | X_i]|^2 I \{|\phi_{1, n, i} - E[\phi_{1, n, i} | X_i]| > \lambda\} | X_i] \\
= E[|\phi_{1, n, i} - E[\phi_{1, n, i} | X_i]|^2 I \{|\phi_{1, n, i} - E[\phi_{1, n, i} | X_i]| > \lambda\}] + o_P(1)~.
\end{multline}
To this end, we once again verify the Lindeberg condition in Lemma 11.4.2 of \cite{lehmann:romano:tsh:2005}. Note
\[ |\phi_{1, n, i} - E[\phi_{1, n, i} | X_i]|^2 I \{|\phi_{1, n, i} - E[\phi_{1, n, i} | X_i]| > \lambda\} \leq |\phi_{1, n, i} - E[\phi_{1, n, i} | X_i]|^2~. \]
Therefore, in light of Lemma \ref{lem:ui}, we only need to verify
\begin{equation} \label{eq:lind-ui}
\lim_{\gamma \to \infty} \limsup_{n \to \infty} E[|\phi_{1, n, i} - E[\phi_{1, n, i} | X_i]|^2 I \{|\phi_{1, n, i} - E[\phi_{1, n, i} | X_i]|^2 > \gamma\}] = 0~,
\end{equation}
which follows immediately from Lemma \ref{lem:lind-ui}.

Another application of \eqref{eq:lind-ui} implies \eqref{eq:lind}. Lindeberg's central limit theorem and \eqref{eq:sn-limit} then imply that
\[ \sup_{t \in \mathbf R} |P \{A_n / \sqrt{E[\var[\phi_{1, n, i} | X_i]]} \leq t | X^{(n)}, D^{(n)}\} - \Phi(t)| \stackrel{P}{\to} 0~. \]
Similar arguments lead to
\[ \sup_{t \in \mathbf R} |P \{B_n / \sqrt{E[\var[\phi_{0, n, i} | X_i]]} \leq t | X^{(n)}, D^{(n)}\} - \Phi(t)| \stackrel{P}{\to} 0~. \]

\noindent \underline{Step 3: Combining conditional and unconditional components}

Meanwhile, it follows from the same arguments as those in (S.22)--(S.25) of \cite{bai2021inference} that
\[ C_n - D_n \stackrel{d}{\to} N \left (0, \frac{1}{2} E \left [ (E[Y_i(1) | X_i] - E[Y_i(1)] - (E[Y_i(0) | X_i] - E[Y_i(0)]))^2 \right ] \right )~. \]
To establish \eqref{eq:normal}, define $\nu_n^2 = \nu_{1, n}^2 + \nu_{0, n}^2 + \nu_2^2$, where
\begin{align*}
\nu_{1, n}^2 & = E[\var[\phi_{1, n, i} | X_i]] \\
\nu_{0, n}^2 & = E[\var[\phi_{0, n, i} | X_i]] \\
\nu^2 & = \frac{1}{2} E \left [ (E[Y_i(1) | X_i] - E[Y_i(1)] - (E[Y_i(0) | X_i] - E[Y_i(0)]))^2 \right ]
\end{align*}
Note
\[ \frac{\sqrt n (\hat \Delta_n - \Delta(Q))}{\nu_n} = \frac{A_n}{\nu_{1, n}} \frac{\nu_{1, n}}{\nu_n} - \frac{B_n}{\nu_{0, n}} \frac{\nu_{0, n}}{\nu_n} + \frac{C_n - D_n}{\nu_2} \frac{\nu_2}{\nu_n}~. \]
Further note $\nu_n, \nu_{1, n}, \nu_{0, n}, \nu_2$ are all constants conditional on $X^{(n)}$ and $D^{(n)}$. Suppose by contradiction that $\frac{\sqrt n (\hat \Delta_n - \Delta(Q))}{\nu_n}$ does not converge in distribution to $N(0, 1)$. Then, there exists $\epsilon > 0$ and a subsequence $\{n_k\}$ such that
\begin{equation} \label{eq:contra}
\sup_{t \in \mathbf R} |P \{\sqrt n_k (\hat \Delta_{n_k} - \Delta(Q)) / \nu_{n_k} \leq t\} - \Phi(t)| \to \epsilon~.
\end{equation}
Because the sequence $\nu_{1, n_k}$ and $\nu_{0, n_k}$ are bounded by Assumptions \ref{ass:md}(b), there is a further subsequence, which with some abuse of notation we still denote by $\{n_k\}$, along which $\nu_{1, n_k} \to \nu_1^\ast$ and $\nu_{0, n_k} \to \nu_0^\ast$ for some $\nu_1^\ast, \nu_0^\ast \geq 0$. Then, $\nu_{1, n_k} / \nu_{n_k}, \nu_{0, n_k} / \nu_{n_k}, \nu_2 / \nu_{n_k}$ all converge to constants. Therefore, it follows from Lemma S.1.2 of \cite{bai2021inference} that
\[ \sqrt n_k (\hat \Delta_{n_k} - \Delta(Q)) / \nu_{n_k} \stackrel{d}{\to} N(0, 1)~, \]
a contradiction to \eqref{eq:contra}. Therefore, the desired convergence in Theorem \ref{thm:main} follows.



\noindent \underline{Step 4: Rearranging the variance formula}

To conclude the proof with the the variance formula as stated in the theorem, note
\begin{align}
\nonumber & \var \Big [ Y_i(0) - \frac{1}{2} (m_{1, n}(X_i, W_i) + m_{0, n}(X_i, W_i)) \Big | X_i \Big ] \\
\nonumber & = \var \Big [ E \Big [ Y_i(0) - \frac{1}{2} (m_{1, n}(X_i, W_i) + m_{0, n}(X_i, W_i)) \Big | X_i, W_i \Big ] \Big | X_i \Big ] \\
\nonumber & \hspace{3em} + E \Big [ \var \Big [ Y_i(0) - \frac{1}{2} (m_{1, n}(X_i, W_i) + m_{0, n}(X_i, W_i)) \Big | X_i, W_i \Big ] \Big | X_i \Big ] \\
\nonumber & = \var \Big [ E \Big [ \frac{Y_i(1) + Y_i(0)}{2}  \Big | X_i, W_i \Big ] - \frac{1}{2} (m_{1, n}(X_i, W_i) + m_{0, n}(X_i, W_i)) - E \Big [ \frac{Y_i(1) - Y_i(0)}{2}  \Big | X_i, W_i \Big ] \Big | X_i \Big ] \\
\nonumber & \hspace{3em}  + E [\var[Y_i(0) | X_i, W_i] | X_i] \\
\nonumber & = \var \Big [ E \Big [ \frac{Y_i(1) + Y_i(0)}{2}  \Big | X_i, W_i \Big ] - \frac{1}{2} (m_{1, n}(X_i, W_i) + m_{0, n}(X_i, W_i)) \Big | X_i \Big ] \\
\nonumber & \hspace{3em} + \var \Big [E \Big [ \frac{Y_i(1) - Y_i(0)}{2}  \Big | X_i, W_i \Big ] \Big | X_i \Big ] \\
\nonumber & \hspace{3em} - 2 \mathrm{Cov} \Big [ E \Big [ \frac{Y_i(1) + Y_i(0)}{2}  \Big | X_i, W_i \Big ] - \frac{1}{2} (m_{1, n}(X_i, W_i) + m_{0, n}(X_i, W_i)), E \Big [ \frac{Y_i(1) - Y_i(0)}{2}  \Big | X_i, W_i \Big ] \Big | X_i \Big ] \\
\label{eq:var0} & \hspace{3em}  + E [\var[Y_i(0) | X_i, W_i] | X_i]~,
\end{align}
where the first equality follows from the law of total variance, the second one follows by direct calculation, and the last one follows by expanding the variance of the sum. Similarly,
\begin{align}
\nonumber & \var \Big [ Y_i(1) - \frac{1}{2} (m_{1, n}(X_i, W_i) + m_{0, n}(X_i, W_i)) \Big | X_i \Big ] \\
\nonumber & = \var \Big [ E \Big [ \frac{Y_i(1) + Y_i(0)}{2}  \Big | X_i, W_i \Big ] - \frac{1}{2} (m_{1, n}(X_i, W_i) + m_{0, n}(X_i, W_i)) \Big | X_i \Big ] \\
\nonumber & \hspace{3em} + \var \Big [E \Big [ \frac{Y_i(1) - Y_i(0)}{2}  \Big | X_i, W_i \Big ] \Big | X_i \Big ] \\
\nonumber & \hspace{3em} + 2 \mathrm{Cov} \Big [ E \Big [ \frac{Y_i(1) + Y_i(0)}{2}  \Big | X_i, W_i \Big ] - \frac{1}{2} (m_{1, n}(X_i, W_i) + m_{0, n}(X_i, W_i)), E \Big [ \frac{Y_i(1) - Y_i(0)}{2}  \Big | X_i, W_i \Big ] \Big | X_i \Big ] \\
\label{eq:var1} & \hspace{3em}  + E [\var[Y_i(1) | X_i, W_i] | X_i]~.
\end{align}
It follows that
\begin{align}
\nonumber \sigma_n^2(Q) & = \frac{1}{2} E[\var[E[Y_i(1) + Y_i(0) | X_i, W_i] - (m_{1, n}(X_i, W_i) + m_{0, n}(X_i, W_i)) | X_i]] \\
\nonumber & \hspace{3em} + \frac{1}{2} E[\var[E[Y_i(1) - Y_i(0) | X_i, W_i] | X_i]] + \frac{1}{2} \var[E[Y_i(1) - Y_i(0) | X_i]] \\
\nonumber & \hspace{3em} + E [\var[Y_i(0) | X_i, W_i] | X_i] + E [\var[Y_i(1) | X_i, W_i] | X_i] \\
\nonumber & = \frac{1}{2} E[\var[E[Y_i(1) + Y_i(0) | X_i, W_i] - (m_{1, n}(X_i, W_i) + m_{0, n}(X_i, W_i)) | X_i]] \\
\nonumber & \hspace{3em} + \frac{1}{2} E[(E[Y_i(1) - Y_i(0) | X_i, W_i] - E[Y_i(1) - Y_i(0) | X_i])^2] \\
\nonumber & \hspace{3em} + \frac{1}{2} E[(E[Y_i(1) - Y_i(0) | X_i] - E[Y_i(1) - Y_i(0)])^2] \\
\nonumber & \hspace{3em} + E[(Y_i(0) - E[Y_i(0) | X_i, W_i])^2] + E[(Y_i(1) - E[Y_i(1) | X_i, W_i])^2] \\
\nonumber & = \frac{1}{2} E[\var[E[Y_i(1) + Y_i(0) | X_i, W_i] - (m_{1, n}(X_i, W_i) + m_{0, n}(X_i, W_i)) | X_i]] \\
\nonumber & \hspace{3em} + \frac{1}{2} \var[E[Y_i(1) - Y_i(0) | X_i, W_i]] + E [\var[Y_i(0) | X_i, W_i]] + E [\var[Y_i(1) | X_i, W_i]]~,
\end{align}
where the first equality follows by definition, the second one follows from \eqref{eq:var0}--\eqref{eq:var1}, the third one again follows by definition, and the last one follows because by the law of iterated expectations,
\[ E[(E[Y_i(1) - Y_i(0) | X_i, W_i] - E[Y_i(1) - Y_i(0) | X_i]) (E[Y_i(1) - Y_i(0) | X_i] - E[Y_i(1) - Y_i(0)])] = 0~. \]
The conclusion then follows. \qed

\subsection{Proof of Theorem \ref{thm:var}}
Theorem \ref{thm:main} implies $\hat \Delta_n \stackrel{P}{\to} \Delta(Q)$. Next, we show
\begin{equation} \label{eq:tau}
\hat \tau_n^2 - E[\var[\phi_{1, n, i} | X_i]] + E[\var[\phi_{0, n, i} | X_i]] + E[(E[Y_i(1) | X_i] - E[Y_i(0) | X_i])^2] \stackrel{P}{\to} 0~.
\end{equation}
To that end, define
\[ \mathring Y_i = Y_i - \frac{1}{2} (m_{1, n}(X_i, W_i) + m_{0, n}(X_i, W_i))~. \]
Note
\begin{align*}
\hat \tau_n^2 & = \frac{1}{n} \sum_{1 \leq j \leq n} \left ( \mathring Y_{\pi(2j - 1)} - \mathring Y_{\pi(2j)} + (\tilde Y_{\pi(2j - 1)} - \tilde Y_{\pi(2j)} - (\mathring Y_{\pi(2j - 1)} - \mathring Y_{\pi(2j)}) ) \right )^2 \\
& = \frac{1}{n} \sum_{1 \leq j \leq n} (\mathring Y_{\pi(2j - 1)} - \mathring Y_{\pi(2j)})^2 + \frac{1}{n} \sum_{1 \leq j \leq n} (\tilde Y_{\pi(2j - 1)} - \tilde Y_{\pi(2j)} - (\mathring Y_{\pi(2j - 1)} - \mathring Y_{\pi(2j)}) )^2 \\
& \hspace{3em} + \frac{2}{n} \sum_{1 \leq j \leq n} (\tilde Y_{\pi(2j - 1)} - \tilde Y_{\pi(2j)} - (\mathring Y_{\pi(2j - 1)} - \mathring Y_{\pi(2j)}) ) (\mathring Y_{\pi(2j - 1)} - \mathring Y_{\pi(2j)})~.
\end{align*}
Therefore, to establish \eqref{eq:tau}, we first show
\begin{equation} \label{eq:check}
\frac{1}{n} \sum_{1 \leq j \leq n} (\mathring Y_{\pi(2j - 1)} - \mathring Y_{\pi(2j)})^2 - E[\var[\phi_{1, n, i} | X_i]] + E[\var[\phi_{0, n, i} | X_i]] + E[(E[Y_i(1) | X_i] - E[Y_i(0) | X_i])^2] \stackrel{P}{\to} 0
\end{equation}
and
\begin{equation} \label{eq:checktilde}
\frac{1}{n} \sum_{1 \leq j \leq n} (\tilde Y_{\pi(2j - 1)} - \tilde Y_{\pi(2j)} - (\mathring Y_{\pi(2j - 1)} - \mathring Y_{\pi(2j)}) )^2 \stackrel{P}{\to} 0~.
\end{equation}
\eqref{eq:checktilde} immediately follows from repeated applications of the inequality $(a - b)^2 \leq 2 (a^2 + b^2)$ and \eqref{eq:L2}. To verify \eqref{eq:check}, note
\[ \frac{1}{n} \sum_{1 \leq j \leq n} (\mathring Y_{\pi(2j - 1)} - \mathring Y_{\pi(2j)})^2 = \frac{1}{n} \sum_{1 \leq i \leq 2n} \mathring Y_i^2 - \frac{2}{n} \sum_{1 \leq j \leq n} \mathring Y_{\pi(2j - 1)} \mathring Y_{\pi(2j)}~. \]
It follows from similar arguments to those in the proof of Lemma \ref{lem:sn} below that
\[ \frac{1}{n} \sum_{1 \leq i \leq 2n} \mathring Y_i^2 - E[\phi_{1, n, i}^2] + E[\phi_{0, n, i}^2] \stackrel{P}{\to} 0~. \]
Similarly, it follows from the proof of the same lemma that
\[ \frac{2}{n} \sum_{1 \leq j \leq n} \mathring Y_{\pi(2j - 1)} \mathring Y_{\pi(2j)} - 2 E[E[\phi_{1, n, i} | X_i] E[\phi_{0, n, i} | X_i]] \stackrel{P}{\to} 0~. \]
To establish \eqref{eq:check}, note
\begin{align*}
& E[\phi_{1, n, i}^2] + E[\phi_{0, n, i}^2] - 2 E[E[\phi_{1, n, i} | X_i] E[\phi_{0, n, i} | X_i]] \\
& = E[\var[\phi_{1, n, i} | X_i]] + E[\var[\phi_{0, n, i} | X_i]] + E[E[\phi_{1, n, i} | X_i]^2] + E[E[\phi_{0, n, i} | X_i]^2] - 2 E[E[\phi_{1, n, i} | X_i] E[\phi_{0, n, i} | X_i]] \\
& = E[\var[\phi_{1, n, i} | X_i]] + E[\var[\phi_{0, n, i} | X_i]] + E[(E[\phi_{1, n, i} | X_i] - E[\phi_{0, n, i} | X_i])^2] \\
& = E[\var[\phi_{1, n, i} | X_i]] + E[\var[\phi_{0, n, i} | X_i]] + E[(E[Y_i(1) | X_i] - E[Y_i(0) | X_i])^2]~,
\end{align*}
where the last equality follows from the definition of $\phi_{1, n, i}$ and $\phi_{0, n, i}$. It then follows from the Cauchy-Schwarz inequality that
\begin{multline*}
\left | \frac{1}{n} \sum_{1 \leq j \leq n} (\tilde Y_{\pi(2j - 1)} - \tilde Y_{\pi(2j)} - (\mathring Y_{\pi(2j - 1)} - \mathring Y_{\pi(2j)}) ) (\mathring Y_{\pi(2j - 1)} - \mathring Y_{\pi(2j)}) \right | \\
\leq \left ( \frac{1}{n} \sum_{1 \leq j \leq n} (\mathring Y_{\pi(2j - 1)} - \mathring Y_{\pi(2j)})^2 \right ) \left ( \frac{1}{n} \sum_{1 \leq j \leq n} (\tilde Y_{\pi(2j - 1)} - \tilde Y_{\pi(2j)} - (\mathring Y_{\pi(2j - 1)} - \mathring Y_{\pi(2j)}) )^2 \right ) \stackrel{P}{\to} 0~,
\end{multline*}
which, together with \eqref{eq:check}--\eqref{eq:checktilde} as well as Assumptions \ref{ass:Q}(b) and \ref{ass:md}(b), imply \eqref{eq:tau}.

Next, we show
\begin{equation} \label{eq:lambda}
\hat \lambda_n \stackrel{P}{\to} E[(E[Y_i(1) | X_i] - E[Y_i(0) | X_i])^2]~.
\end{equation}
Note
\begin{align}
\label{eq:lambdadecomp} & \hat \lambda_n - \frac{2}{n} \sum_{1 \leq j \leq \lfloor \frac{n}{2} \rfloor} (\mathring Y_{\pi(4j - 3)} - \mathring Y_{\pi(4j - 2)}) (\mathring Y_{\pi(4j - 1)} - \mathring Y_{\pi(4j)}) (D_{\pi(4j - 3)} - D_{\pi(4j - 2)}) (D_{\pi(4j - 1)} - D_{\pi(4j)}) \\
\nonumber & = \frac{2}{n} \sum_{1 \leq j \leq \lfloor \frac{n}{2} \rfloor} (\tilde Y_{\pi(4j - 3)} - \mathring Y_{\pi(4j - 3)} - (\tilde Y_{\pi(4j - 2)} - \mathring Y_{\pi(4j - 2)})) (\mathring Y_{\pi(4j - 1)} - \mathring Y_{\pi(4j)}) \\
\nonumber & \hspace{5em} \times (D_{\pi(4j - 3)} - D_{\pi(4j - 2)}) (D_{\pi(4j - 1)} - D_{\pi(4j)}) \\
\nonumber & \hspace{3em} + \frac{2}{n} \sum_{1 \leq j \leq \lfloor \frac{n}{2} \rfloor} (\mathring Y_{\pi(4j - 3)} - \mathring Y_{\pi(4j - 2)}) (\tilde Y_{\pi(4j - 1)} - \mathring Y_{\pi(4j - 1)}) - (\tilde Y_{\pi(4j)} - \mathring Y_{\pi(4j)})) \\
\nonumber & \hspace{5em} \times (D_{\pi(4j - 3)} - D_{\pi(4j - 2)}) (D_{\pi(4j - 1)} - D_{\pi(4j)}) \\
\nonumber & \hspace{3em} + \frac{2}{n} \sum_{1 \leq j \leq \lfloor \frac{n}{2} \rfloor} (\tilde Y_{\pi(4j - 3)} - \mathring Y_{\pi(4j - 3)} - (\tilde Y_{\pi(4j - 2)} - \mathring Y_{\pi(4j - 2)})) \\
\nonumber & \hspace{5em} \times (\tilde Y_{\pi(4j - 1)} - \mathring Y_{\pi(4j - 1)}) - (\tilde Y_{\pi(4j)} - \mathring Y_{\pi(4j)})) (D_{\pi(4j - 3)} - D_{\pi(4j - 2)}) (D_{\pi(4j - 1)} - D_{\pi(4j)})~.
\end{align}
In what follows, we show
\begin{align}
\label{eq:check1} & \frac{2}{n} \sum_{1 \leq j \leq \lfloor \frac{n}{2} \rfloor} (\mathring Y_{\pi(4j - 3)} - \mathring Y_{\pi(4j - 2)})^2 = O_P(1) \\
\label{eq:check2} & \frac{2}{n} \sum_{1 \leq j \leq \lfloor \frac{n}{2} \rfloor} (\mathring Y_{\pi(4j - 1)} - \mathring Y_{\pi(4j)})^2 = O_P(1) \\
\label{eq:checktilde1} & \frac{2}{n} \sum_{1 \leq j \leq \lfloor \frac{n}{2} \rfloor} (\tilde Y_{\pi(4j - 3)} - \mathring Y_{\pi(4j - 3)} - (\tilde Y_{\pi(4j - 2)} - \mathring Y_{\pi(4j - 2)}))^2 = o_P(1) \\
\label{eq:checktilde2} & \frac{2}{n} \sum_{1 \leq j \leq \lfloor \frac{n}{2} \rfloor} (\tilde Y_{\pi(4j - 1)} - \mathring Y_{\pi(4j - 1)}) - (\tilde Y_{\pi(4j)} - \mathring Y_{\pi(4j)}))^2 = o_P(1) \\
\nonumber & \frac{2}{n} \sum_{1 \leq j \leq \lfloor \frac{n}{2} \rfloor} (\mathring Y_{\pi(4j - 3)} - \mathring Y_{\pi(4j - 2)}) (\mathring Y_{\pi(4j - 1)} - \mathring Y_{\pi(4j)}) (D_{\pi(4j - 3)} - D_{\pi(4j - 2)}) (D_{\pi(4j - 1)} - D_{\pi(4j)}) \\
\label{eq:checkpop} & \hspace{5em} \stackrel{P}{\to} E[(E[Y_i(1) | X_i] - E[Y_i(0) | X_i])^2]~.
\end{align}
To establish \eqref{eq:check1}--\eqref{eq:check2}, note they follow directly from \eqref{eq:check} and Assumptions \ref{ass:Q}(b) and \ref{ass:md}(b). Next, note \eqref{eq:checktilde1} follows from repeated applications of the inequality $(a + b)^2 \leq 2 (a^2 + b^2)$ and \eqref{eq:L2}. \eqref{eq:checktilde2} can be established by similar arguments. \eqref{eq:checkpop} follows from similar arguments to those in the proof of Lemma S.1.7 of \cite{bai2021inference}, with the uniform integrability arguments replaced by arguments similar to those in the proof of Lemma \ref{lem:sn}, together with Assumptions \ref{ass:Q}--\ref{ass:pairsclose} and \ref{ass:md}. \eqref{eq:lambdadecomp}--\eqref{eq:checkpop} imply \eqref{eq:lambda} immediately.

Finally, note we have shown
\[ \hat \sigma_n^2 - \sigma_n^2 \stackrel{P}{\to} 0~. \]
Assumption \ref{ass:md}(a) implies $\sigma_n^2$ is bounded away from zero, so
\[ \frac{\hat \sigma_n}{\sigma_n} \stackrel{P}{\to} 1~. \]
The conclusion of the theorem then follows.
\qed

\subsection{Proof of Theorem \ref{thm:naive}}
We will apply the Frisch-Waugh-Lovell theorem to obtain an expression for $\hat \beta_n^{\rm naive}$. Consider the linear regression of $\psi_i$ on $1$ and $D_i$. Define
\[ \hat \mu_{\psi, n}(d) = \frac{1}{n} \sum_{1 \leq i \leq 2n} \psi_i I \{D_i = d\} \]
for $d \in \{0, 1\}$ and
\[  \hat \Delta_{\psi, n} = \hat \mu_{\psi, n}(1) - \hat \mu_{\psi, n}(0)~. \]
The $i$th residual based on the OLS estimation of this linear regression model is given by
\[ \tilde \psi_i = \psi_i - \hat \mu_{\psi, n}(0) - \hat \Delta_{\psi, n} D_i~. \]
$\hat \beta_n^{\rm naive}$ is then given by the OLS estimator of the coefficient in the linear regression of $Y_i$ on $\tilde \psi_i$. Note
\begin{align*}
\frac{1}{2n} \sum_{1 \leq i \leq 2n} \tilde \psi_i \tilde \psi_i' & = \frac{1}{2n} \sum_{1 \leq i \leq 2n} (\psi_i - \hat \mu_{\psi, n}(1)) (\psi_i - \hat \mu_{\psi, n}(1))' D_i + \frac{1}{2n} \sum_{1 \leq i \leq 2n} (\psi_i - \hat \mu_{\psi, n}(0)) (\psi_i - \hat \mu_{\psi, n}(0))' (1 - D_i) \\
& = \frac{1}{2n} \sum_{1 \leq i \leq 2n} \psi_i \psi_i' - \frac{1}{2} \hat \mu_{\psi, n}(1) \hat \mu_{\psi, n}(1)' - \frac{1}{2} \hat \mu_{\psi, n}(0) \hat \mu_{\psi, n}(0)'~.
\end{align*}
It follows from Assumption \ref{ass:psi}(b) and the weak law of large number that
\[ \frac{1}{2n} \sum_{1 \leq i \leq 2n} \psi_i \psi_i' \stackrel{P}{\to} E[\psi_i \psi_i']~. \]
On the other hand, it follows from Assumptions \ref{ass:treatment}--\ref{ass:close} and \ref{ass:psi}(b)--(c) as well as similar arguments to those in the proof of Lemma S.1.5 of \cite{bai2021inference} that
\[ \hat \mu_{\psi, n}(d) \stackrel{P}{\to} E[\psi_i] \]
for $d \in \{0, 1\}$. Therefore,
\[ \frac{1}{2n} \sum_{1 \leq i \leq 2n} \tilde \psi_i \tilde \psi_i' \stackrel{P}{\to} \var[\psi_i]~. \]

Next, 
\begin{align*}
\frac{1}{2n} \sum_{1 \leq i \leq 2n} \tilde \psi_i Y_i & = \frac{1}{2n} \sum_{1 \leq i \leq 2n} (\psi_i - \hat \mu_{\psi, n}(1)) Y_i(1) D_i + \frac{1}{2n} \sum_{1 \leq i \leq 2n} (\psi_i - \hat \mu_{\psi, n}(0)) Y_i(0) (1 - D_i)
\end{align*}
It follows from similar arguments as above as well as Assumptions \ref{ass:Q}(b), \ref{ass:treatment}--\ref{ass:close}, and \ref{ass:psi}(b)--(c) that
\[ \frac{1}{2n} \sum_{1 \leq i \leq 2n} \tilde \psi_i Y_i \stackrel{P}{\to} \cov[\psi_i, Y_i(1) + Y_i(0)]~. \]
The convergence of $\hat \beta_n^{\rm naive}$ therefore follows from the continuous mapping theorem and Assumption \ref{ass:psi}(a).

To see \eqref{eq:L2} is satisfied, note
\[ \frac{1}{2n} \sum_{1 \leq i \leq 2n} (\hat m_{d, n}(X_i,W_i) - m_{d, n}(X_i,W_i))^2 = (\hat \beta_n^{\rm naive} - \beta^{\rm naive})' \left ( \frac{1}{2n} \sum_{1 \leq i \leq 2n} \psi_i\psi_i' \right ) (\hat \beta_n^{\rm naive} - \beta^{\rm naive})~. \]
\eqref{eq:L2} then follows from the fact that $\hat \beta_n^{\rm naive} \stackrel{P}{\to} \beta$, Assumption \ref{ass:psi}(b), and the weak law of large numbers. To establish \eqref{eq:rate}, first note
\[ \frac{1}{\sqrt {2n}} \sum_{1 \leq i \leq 2n} (2D_i - 1)(\hat m_{d, n}(X_i,W_i) - m_{d, n}(X_i,W_i)) = \frac{1}{\sqrt {2}} \sqrt n \hat \Delta_{\psi, n}' (\hat \beta_n^{\rm naive} - \beta^{\rm naive})~. \]
In what follows, we establish
\begin{equation} \label{eq:rootnpsi}
\sqrt n \hat \Delta_{\psi, n} = O_P(1)~,
\end{equation}
from which \eqref{eq:rate} follows immediately because $\hat \beta_n^{\rm naive} - \beta^{\rm naive} = o_P(1)$. Note by Assumption \ref{ass:treatment} that $E[\sqrt n \hat \Delta_{\psi, n} | X^{(n)}] = 0$. Also note
\[ \sqrt n \hat \Delta_{\psi, n} = F_n - G_n + H_n~, \]
where
\begin{align*}
F_n & = \frac{1}{\sqrt n} \sum_{1 \leq i \leq 2n} (\psi_i - E[\psi_i | X_i]) D_i~, \\
G_n & = \frac{1}{\sqrt n} \sum_{1 \leq i \leq 2n} (\psi_i - E[\psi_i | X_i]) (1 - D_i)~,\quad \text{and} \\
H_n & = \frac{1}{\sqrt n} \sum_{1 \leq j \leq n} (E[\psi_{\pi(2j - 1)} | X_{\pi(2j - 1)}] - E[\psi_{\pi(2j)} | X_{\pi(2j)}]) (D_{\pi(2j - 1)} - D_{\pi(2j)})~.
\end{align*}
We will argue $F_n, G_n, H_n$ are all $O_P(1)$. Since this could be carried out separately for each entry of $F_n$ and $G_n$, we assume without loss of generality that $k_\psi = 1$. First, it follows from Assumptions \ref{ass:treatment}--\ref{ass:close} and \ref{ass:psi}(c) as well as similar arguments to those in the proof of Lemma S.1.4 of \cite{bai2021inference} that
\[ \mathrm{Var}[F_n | X^{(n)}, D^{(n)}] = \frac{1}{n}\sum_{1 \leq i \leq 2n} \mathrm{Var}[\psi_i | X_i] D_i \stackrel{P}{\to} E[\mathrm{Var}[\psi_i | X_i]] > 0~. \]
It then follows from similar arguments using the Lindeberg central limit theorem as in the proof of Lemma S.1.4 of \cite{bai2021inference} that $F_n = O_P(1)$. Similar arguments establish $G_n = O_P(1)$. Finally, we show $H_n = O_P(1)$. Note that $E[H_n | X^{(n)}] = 0$ and by Assumptions \ref{ass:treatment}--\ref{ass:close} and \ref{ass:psi}(c),
\[ \mathrm{Var}[H_n | X^{(n)}] = \frac{1}{n} \sum_{1 \leq j \leq n} (E[\psi_{\pi(2j - 1)} | X_{\pi(2j - 1)}] - E[\psi_{\pi(2j)} | X_{\pi(2j)}])^2 \stackrel{P}{\to} 0~. \]
Therefore, for any fixed $\epsilon > 0$, Markov's inequality implies
\[ P \{|H_n - E[H_n | X^{(n)}]| > \epsilon | X^{(n)}\} \leq \frac{\mathrm{Var}[H_n | X^{(n)}]}{\epsilon^2} \stackrel{P}{\to} 0~. \]
Since probabilities are bounded and therefore uniformly integrable, we have that
\[ P \{|H_n - E[H_n | X^{(n)}]| > \epsilon\} \to 0~. \]
Therefore, \eqref{eq:rootnpsi} follows. Finally, it is straightforward to see Assumption \ref{ass:md} is implied by Assumption \ref{ass:psi}. \qed

\subsection{Proof of Theorem \ref{thm:pfe}}
By the Frisch-Waugh-Lovell theorem, $\hat \beta_n^{\rm pfe}$ is equal to the OLS estimator in the linear regression of $\{(Y_{\pi(2j - 1)} - Y_{\pi(2j)}, Y_{\pi(2j)} - Y_{\pi(2j - 1)}): 1 \leq j \leq n\}$ on $\{(2 D_{\pi(2j - 1)} - 1, 2 D_{\pi(2j)} - 1): 1 \leq j \leq n\}$ and $\{(\psi_{\pi(2j - 1)} - \psi_{\pi(2j)}, \psi_{\pi(2j)} - \psi_{\pi(2j - 1)}): 1 \leq j \leq n\}$. To apply the Frisch-Waugh-Lovell theorem again, we study the linear regression of $\{(\psi_{\pi(2j - 1)} - \psi_{\pi(2j)}, \psi_{\pi(2j)} - \psi_{\pi(2j - 1)}): 1 \leq j \leq n\}$ on $\{(2 D_{\pi(2j - 1)} - 1, 2 D_{\pi(2j)} - 1): 1 \leq j \leq n\}$. The OLS estimator of the regression coefficient in such a regression equals
\begin{align*}
\hat \Delta_{\psi, n} = \frac{1}{n} \sum_{1 \leq j \leq n} (D_{\pi(2j - 1)} - D_{\pi(2j)}) (\psi_{\pi(2j - 1)} - \psi_{\pi(2j)})~.
\end{align*}
The residual is therefore
$\{(\psi_{\pi(2j - 1)} - \psi_{\pi(2j)} - (2 D_{\pi(2j - 1)} - 1) \hat \Delta_{\psi, n}, \psi_{\pi(2j)} - \psi_{\pi(2j - 1)} - (2 D_{\pi(2j)} - 1) \hat \Delta_{\psi, n}): 1 \leq j \leq n\}$. $\hat \beta_n^{\rm pfe}$ equals the OLS estimator of the coefficient in the linear regression of $\{(Y_{\pi(2j - 1)} - Y_{\pi(2j)}, Y_{\pi(2j)} - Y_{\pi(2j - 1)}): 1 \leq j \leq n\}$ on those residuals. Define
\begin{align*}
\delta_{Y, j} & = (D_{\pi(2j - 1)} - D_{\pi(2j)}) (Y_{\pi(2j - 1)} - Y_{\pi(2j)}) \quad \text{and}\\
\delta_{\psi, j} & = (D_{\pi(2j - 1)} - D_{\pi(2j)}) (\psi_{\pi(2j - 1)} - \psi_{\pi(2j)})
\end{align*}
Apparently $\hat \Delta_{\psi, n} = \frac{1}{n} \sum_{1 \leq j \leq n} \delta_{\psi, j}$. A moment's thought reveals that $\hat \beta_n^{\rm pfe}$ further equals the coefficient estimate using least squares in the linear regression of $\delta_{Y, j}$ on $\delta_{\psi, j} - \hat \Delta_{\psi, n}$ for $1 \leq j \leq n$. It follows from Assumptions \ref{ass:Q}(b)--(c), \ref{ass:treatment}--\ref{ass:close}, and \ref{ass:psi}(b)--(c) as well as similar arguments to those in the proof of Lemma S.1.5 of \cite{bai2021inference} that
\begin{align}
\label{eq:psi0} \hat \Delta_{\psi, n} & \stackrel{P}{\to} 0 \quad \text{and}\\
\nonumber \frac{1}{n} \sum_{1 \leq j \leq n} \delta_{Y, j} & \stackrel{P}{\to} \Delta(Q)~.
\end{align}
Next, note that
\begin{align}
\nonumber & \frac{1}{n} \sum_{1 \leq j \leq n} \delta_{\psi, j} \delta_{\psi, j}' \\
\nonumber & = \frac{1}{n} \sum_{1 \leq j \leq n} (\psi_{\pi(2j - 1)} - \psi_{\pi(2j)}) (\psi_{\pi(2j - 1)} - \psi_{\pi(2j)})' \\
\label{eq:psi} & = \frac{1}{n} \sum_{1 \leq i \leq 2n} \psi_i \psi_i' - \frac{1}{n} \sum_{1 \leq j \leq n} (\psi_{\pi(2j - 1)} \psi_{\pi(2j)}' + \psi_{\pi(2j)} \psi_{\pi(2j - 1)}')~.
\end{align}
For convenience, we introduce the following notation:
\begin{align*}
\mu_d(X_i) & = E[Y_i(d) | X_i] \\
\Psi(X_i) & = E[\psi_i | X_i] \\
\xi_d(X_i) & = E[\psi_i Y_i(d) | X_i]~.
\end{align*}
The first term in \eqref{eq:psi} converges in probability to $2 E[\psi_i \psi_i']$ by the weak law of large numbers. For the second term, we have that
\begin{align*}
& E \Big [ \frac{1}{n} \sum_{1 \leq j \leq n} (\psi_{\pi(2j - 1)} \psi_{\pi(2j)}' + \psi_{\pi(2j)} \psi_{\pi(2j - 1)}') \Big | X^{(n)} \Big ] \\
& = \frac{1}{n} \sum_{1 \leq i \leq 2n} \Psi(X_i) \Psi(X_i)' - \frac{1}{n} \sum_{1 \leq j \leq n} (\Psi(X_{\pi(2j - 1)}) - \Psi(X_{\pi(2j)})) (\Psi(X_{\pi(2j - 1)}) - \Psi(X_{\pi(2j)}))' \\
& \stackrel{P}{\to} 2 E[\Psi(X_i) \Psi(X_i)']~,
\end{align*}
where the convergence in probability holds because of Assumptions \ref{ass:treatment}--\ref{ass:close} and \ref{ass:psi}(c). It follows from Assumptions \ref{ass:treatment}--\ref{ass:close} and \ref{ass:psi}(b)--(c) as well as similar arguments to those in the proof of Lemma S.1.6 of \cite{bai2021inference} that
\[ \Big |\frac{1}{n} \sum_{1 \leq j \leq n} (\psi_{\pi(2j - 1)} \psi_{\pi(2j)}' + \psi_{\pi(2j)} \psi_{\pi(2j - 1)}') - E \Big [ \frac{1}{n} \sum_{1 \leq j \leq n} (\psi_{\pi(2j - 1)} \psi_{\pi(2j)}' + \psi_{\pi(2j)} \psi_{\pi(2j - 1)}') \Big | X^{(n)} \Big ] \Big | \stackrel{P}{\to} 0~. \]
Therefore,
\[ \frac{1}{n} \sum_{1 \leq j \leq n} \delta_{\psi, j} \delta_{\psi, j}' \stackrel{P}{\to} 2 E[\var[\psi_i | X_i]]~. \]
We now turn to
\[ \frac{1}{n} \sum_{1 \leq j \leq n} \delta_{\psi, j} \delta_{Y, j} = \frac{1}{n} \sum_{1 \leq j \leq n} (\psi_{\pi(2j - 1)} - \psi_{\pi(2j)}) (Y_{\pi(2j - 1)} - Y_{\pi(2j)})~. \]
Note that
\begin{align*}
E[\psi_{\pi(2j - 1)} Y_{\pi(2j - 1)} | X^{(n)}] & = \frac{1}{2} \xi_1(X_{\pi(2j - 1)}) + \frac{1}{2} \xi_0(X_{\pi(2j - 1)}) \\
E[\psi_{\pi(2j - 1)} Y_{\pi(2j)} | X^{(n)}] & = \frac{1}{2} \Psi(X_{\pi(2j - 1)}) (\mu_1(X_{\pi(2j)}) + \mu_0(X_{\pi(2j)}))~.
\end{align*}
It follows from Assumptions \ref{ass:Q}(b)--(c), \ref{ass:treatment}--\ref{ass:close}, \ref{ass:psi}(b)--(c) as well as similar arguments to those in the proof of Lemma S.1.6 of \cite{bai2021inference} that
\[ \frac{1}{n} \sum_{1 \leq j \leq n} \delta_{\psi, j} \delta_{Y, j} \stackrel{P}{\to} E[\psi_i (Y_i(1) + Y_i(0))] - E[\Psi(X_i) (\mu_1(X_i) + \mu_0(X_i))]~. \]
The convergence in probability of $\hat \beta_n^{\rm pfe}$ now follows from Assumption \ref{ass:psi}(a) and the continuous mapping theorem. \eqref{eq:rate}--\eqref{eq:L2} can be established using similar arguments to those in the proof of Theorem \ref{thm:naive}. Finally, it is straightforward to see Assumption \ref{ass:md} is implied by Assumption \ref{ass:psi}. \qed

\subsection{Proof of Theorem \ref{thm:LASSO2}}
We divide the proof into three steps. In the first step, we show 
\begin{equation} \label{eq:beta-hd2}
|\hat \alpha_{d,n}^{\rm r} - \alpha_{d,n}^{\rm r}|+\|\hat \beta_{d, n}^{\rm r} - \beta_{d, n}^{\rm r}\|_1 = O_P \left ( s_n  \lambda_n^{\rm r} \right )~.
\end{equation}
In the second step, we show \eqref{eq:rate},  \eqref{eq:L2}, and Assumption \ref{ass:md} hold. In the third step, we show the asymptotic variance achieves the minimum under the approximately correct specification condition in Theorem \ref{thm:LASSO2}. 

\underline{Step 1: Proof of \eqref{eq:beta-hd2}} 

Note that 
\begin{align*}
& \frac{1}{n} \sum_{1 \leq i \leq 2n} I \{D_i = d\} (Y_i(d) - \hat \alpha_{d, n}^{\rm r} - \psi_{n, i}' \hat \beta_{d, n}^{\rm r})^2 + \lambda_{d, n}^{\rm r} \|\hat{\Omega}_n(d) \hat \beta_{d, n}^{\rm r}\|_1 \\
&\leq \frac{1}{n} \sum_{1 \leq i \leq 2n} I \{D_i = d\} (Y_i(d) - \alpha_{d, n}^{\rm r}-\psi_{n, i}' \beta_{d, n}^{\rm r})^2 + \lambda_{d, n}^{\rm r} \|\hat{\Omega}_n(d) \beta_{d, n}^{\rm r}\|_1~.     
\end{align*}

Rearranging the terms, we then have
\begin{align}\label{eq:basic2}
& \frac{1}{n} \sum_{1 \leq i \leq 2n} I \{D_i = d\} (\hat \alpha_{d, n}^{\rm r} - \alpha_{d, n}^{\rm r} +\psi_{n, i}' (\hat \beta_{d, n}^{\rm r} - \beta_{d, n}^{\rm r}))^2 + \lambda_{d, n}^{\rm r} \|\hat{\Omega}_n(d) \hat \beta_{d, n}^{\rm r}\|_1 \notag \\
& \leq \left ( \frac{2}{n} \sum_{1 \leq i \leq 2n} I \{D_i = d\} \epsilon_{n, i}(d) \psi_{n, i}' \right ) (\hat \beta_{d, n}^{\rm r} - \beta_{d, n}^{\rm r}) + \left ( \frac{2}{n} \sum_{1 \leq i \leq 2n} I \{D_i = d\} \epsilon_{n, i}(d) \right ) (\hat \alpha_{d, n}^{\rm r} - \alpha_{d, n}^{\rm r}) \notag \\
& + \lambda_{d, n}^{\rm r} \|\hat{\Omega}_n(d) \beta_{d, n}^{\rm r}\|_1  \end{align}

Next, define
\[ \mathbb U_n(d) = \Omega_n^{-1}(d)\frac{1}{n } \sum_{1 \leq i \leq 2n} I \{D_i = d\} (\psi_{n, i} \epsilon_{n, i}(d) - E[\psi_{n, i} \epsilon_{n, i}(d)]) \]
and
\[ \mathcal E_n(d) = \left \{ \|\mathbb U_n(d)\|_\infty \leq \frac{6 \bar{\sigma}}{\ubar\sigma} \sqrt{\frac{\log (2n p_n)}{n}}, ~\left\vert \frac{1}{n}\sum_{i \in [2n]}I\{D_i=d\}\epsilon_{n,i}(d) - E[\epsilon_{n,i}(d)] \right\vert\leq \sqrt{\frac{\log (2n p_n)}{n}} \right \}~. \]
Lemma \ref{lem:En'} implies $P \{\mathcal E_n(d)\} \to 1$ for $d \in \{0, 1\}$.

On the event $\mathcal E_n(d)$, we have
\begin{align*}
& \left | \left ( \frac{2}{n} \sum_{1 \leq i \leq 2n} I \{D_i = d\} \epsilon_{n, i}(d) \psi_{n, i}' \right ) (\hat \beta_{d, n}^{\rm r} - \beta_{d, n}^{\rm r}) \right | \\
& \leq \left \|\Omega_n^{-1}(d)  \frac{2}{n} \sum_{1 \leq i \leq 2n} I \{D_i = d\} \epsilon_{n, i}(d) \psi_{n, i} \right \|_\infty \|\Omega_n(d)  (\hat \beta_{d, n}^{\rm r} - \beta_{d, n}^{\rm r})\|_1 \\
& \leq 2 \|\mathbb U_n(d)\|_\infty \|\Omega_n(d)(\hat \beta_{d, n}^{\rm r} - \beta_{d, n}^{\rm r})\|_1 + \left \| \Omega_n^{-1}(d)\frac{2}{n} \sum_{1 \leq i \leq 2n} I \{D_i = d\} E[\epsilon_{n, i}(d) \psi_{n, i}] \right \|_\infty \|\Omega_n(d)(\hat \beta_{d, n}^{\rm r} - \beta_{d, n}^{\rm r})\|_1 \\
& \leq 2 \|\mathbb U_n(d)\|_\infty \|\Omega_n(d)(\hat \beta_{d, n}^{\rm r} - \beta_{d, n}^{\rm r})\|_1 + \left \| \Omega_n^{-1}(d)2E[\epsilon_{n, i}(d) \psi_{n, i}] \right \|_\infty \|\Omega_n(d)(\hat \beta_{d, n}^{\rm r} - \beta_{d, n}^{\rm r})\|_1 \\
& \leq \left(\frac{12 \bar{\sigma}}{\ubar\sigma \ell \ell_n} + d_n\right)\lambda_{d, n}^{\rm r} \|\Omega_n(d)(\hat \beta_{d, n}^{\rm r} - \beta_{d, n}^{\rm r})\|_1~,
\end{align*}
where $d_n = o(1)$ and the last inequality follows from \eqref{eq:foc2} and the fact that
\[ \lambda_{d, n}^{\rm r} \geq \ell \ell_n  \sqrt{\frac{\log (2 n p_n)}{n}}~. \]
Next, define
\[ \hat \delta_{d, n} = \hat \beta_{d, n}^{\rm r} - \beta_{d, n}^{\rm r} \]
and let $S_{d, n}$ be the support of $\beta_{d, n}^{\rm r}$. Then, we have
\begin{align*}
\|\hat \Omega_n(d) \hat \beta_{d, n}^{\rm r}\|_1 & = \|(\hat \Omega_n(d) \hat \beta_{d, n}^{\rm r})_{S_{d, n}}\|_1 + \|(\hat \Omega_n(d)\hat \beta_{d, n}^{\rm r})_{S_{d, n}^c}\|_1 = \|(\hat \Omega_n(d)\hat \beta_{d, n}^{\rm r})_{S_{d, n}}\|_1 + \|(\hat \Omega_n(d) \hat \delta_{d, n})_{S_{d, n}^c}\|_1~, \\
\|\hat \Omega_n(d)\beta_{d, n}^{\rm r}\|_1 & = \|(\hat \Omega_n(d)\beta_{d, n}^{\rm r})_{S_{d, n}}\|_1 \leq \|(\hat \Omega_n(d)\hat \beta_{d, n}^{\rm r})_{S_{d, n}}\|_1 + \|(\hat \Omega_n(d) \hat \delta_{d, n}^{\rm r})_{S_{d, n}}\|_1~,
\end{align*}
and thus, 
\begin{align*}
& \left(\frac{12 \bar{\sigma}}{\ubar\sigma \ell \ell_n} + d_n\right) \|\Omega_n(d)\hat \delta_{d, n}\|_1 +\|\hat \Omega_n(d)\beta_{d, n}^{\rm r}\|_1 - \|\hat \Omega_n(d)\hat \beta_{d, n}^{\rm r}\|_1 \\
& = \left(\frac{12 \bar{\sigma}}{\ubar\sigma \ell \ell_n} + d_n\right) \|( \Omega_n(d) \hat \delta_{d, n})_{S_{d, n}}\|_1 + \left(\frac{12 \bar{\sigma}}{\ubar\sigma \ell \ell_n} + d_n\right) \|( \Omega_n(d) \hat \delta_{d, n})_{S_{d, n}^c}\|_1+ \|\hat \Omega_n(d)\beta_{d, n}^{\rm r}\|_1 - \|\hat \Omega_n(d)\hat \beta_{d, n}^{\rm r}\|_1 \\
& \leq \left(\frac{12 \bar{\sigma}}{\ubar\sigma \ell \ell_n} + d_n\right) \|( \Omega_n(d) \hat \delta_{d, n})_{S_{d, n}}\|_1 + \left(\frac{12 \bar{\sigma}}{\ubar\sigma \ell \ell_n} + d_n\right) \|( \Omega_n(d) \hat \delta_{d, n})_{S_{d, n}^c}\|_1+ \|(\hat \Omega_n(d) \hat \delta_{d, n}^{\rm r})_{S_{d, n}}\|_1 - \|(\hat \Omega_n(d) \hat \delta_{d, n}^{\rm r})_{S_{d, n}^c}\|_1~.
\end{align*}

Further define $\breve \delta_{d,n} = (\hat \alpha_{d, n}^{\rm r} - \alpha_{d, n}^{\rm r}, \hat \delta_{d,n}' )'$ and $\breve S_{d,n} = \{1,S_{d,n}+1\}$\footnote{For example, if $S_{d,n} = \{2,4,9\}$, we have $\breve S_{d,n} = \{1,3,5,10\}$.} and recall $\breve \psi_{n,i} = (1,\psi_{n,i}')'$. Then, 
together with \eqref{eq:basic2}, we have
\begin{align}\label{eq:basic3}
0 & \leq \frac{1}{n} \sum_{1 \leq i \leq 2n} I \{D_i = d\} (\breve \psi_{n, i}' \breve \delta_{d, n})^2 \notag \\
& \leq \lambda_{d, n}^{\rm r} \left[ \left(\frac{12 \bar{\sigma}}{\ubar\sigma \ell \ell_n} + d_n + \bar{c}\right) \|( \Omega_n(d) \hat \delta_{d, n})_{S_{d, n}}\|_1 - \left(\underline{c} - \frac{12 \bar{\sigma}}{\ubar\sigma \ell \ell_n} - d_n\right) \|( \Omega_n(d) \hat \delta_{d, n})_{S_{d, n}^c}\|_1\right] \notag \\
& + \lambda_{d, n}^{\rm r}(1/\ell_n + d_n)|\hat \alpha_{d, n}^{\rm r} - \alpha_{d, n}^{\rm r}| \notag \\
& \leq \lambda_{d, n}^{\rm r} \left[ \left(\frac{12 \bar{\sigma}}{\ubar\sigma \ell \ell_n} + d_n + \bar{c}\right)\bar{\sigma} \|(\hat \delta_{d, n})_{S_{d, n}}\|_1 - \left(\underline{c} - \frac{12 \bar{\sigma}}{\ubar\sigma \ell \ell_n} - d_n\right) \ubar\sigma \|( \hat \delta_{d, n})_{S_{d, n}^c}\|_1\right] \notag \\
& + \lambda_{d, n}^{\rm r}(1/\ell_n + d_n)|\hat \alpha_{d, n}^{\rm r} - \alpha_{d, n}^{\rm r}| \notag \\
&\leq \lambda_{d, n}^{\rm r} \left[ \left(\frac{12 \bar{\sigma}}{\ubar\sigma \ell \ell_n} + d_n + \bar{c}\right)\bar{\sigma} \|(\breve \delta_{d, n})_{\breve S_{d, n}}\|_1 - \left(\underline{c} - \frac{12 \bar{\sigma}}{\ubar\sigma \ell \ell_n} - d_n\right) \ubar\sigma \|( \breve \delta_{d, n})_{\breve S_{d, n}^c}\|_1\right]~. 
\end{align}
Define
\[ \mathcal C_n = \left\{u \in \mathbf R^{p_n+1}: \|u_{\breve S_{d, n}^c}\|_1 \leq \frac{2 \bar{\sigma} \bar{c}}{\ubar\sigma \underline{c}} \|u_{\breve S_{d, n}}\|_1\right\}~. \]
For sufficiently large $n$, we have $\breve \delta_{d,n} \in \mathcal C_n$. It follows from \cite{bickel2009simultaneous} and Assumption \ref{ass:LASSO2-ev} that
\[ \inf_{u \in \mathcal C_n} (\|u_{\breve S_{d, n}}\|_1)^{-2} (s_n+1) u' \left ( \frac{1}{n} \sum_{1 \leq i \leq 2n} I \{D_i = d\} \breve \psi_{n, i} \breve \psi_{n, i}' \right ) u \geq 0.25 \kappa_1^2~. \]
Therefore, we have
\begin{align*}
0.25 \kappa_1^2 \|(\breve \delta_{d, n})_{\breve S_{d, n}}\|_1^2 & \leq \frac{1}{n} \sum_{1 \leq i \leq 2n} I \{D_i = d\} (\breve \psi_{n, i}' \breve \delta_{d, n})^2 \\
& \leq \left(\frac{12 \bar{\sigma}}{\ubar\sigma \ell \ell_n} + d_n + \bar{c}\right)\lambda_{d, n}^{\rm r} (s_n+1) \|(\breve \delta_{d, n})_{S_{d, n}}\|_1~,     
\end{align*}

which implies
\[ \|(\breve \delta_{d, n})_{S_{d, n}}\|_1 \leq 4\left(\frac{12 \bar{\sigma}}{\ubar\sigma \ell \ell_n} + d_n + \bar{c}\right) (s_n+1) \lambda_{d, n}^{\rm r} / \kappa_1^2~. \]
We then have
\begin{align*}
|\hat \alpha_{d,n}^{\rm r} - \alpha_{d,n}^{\rm r}|+\|\hat \beta_{d, n}^{\rm r} - \beta_{d, n}^{\rm r}\|_1 & \leq \|(\breve \delta_{d, n})_{\breve S_{d, n}}\|_1 + \|(\breve \delta_{d, n})_{\breve S_{d, n}^c}\|_1 \\
& \leq \left(1 + \frac{2 \bar{\sigma} \bar{c}}{\ubar\sigma \underline{c}}\right) \|(\breve \delta_{d, n})_{\breve S_{d, n}}\|_1 \\
& \leq 4\left(1 + \frac{2 \bar{\sigma} \bar{c}}{\ubar\sigma \underline{c}}\right)\left(\frac{12 \bar{\sigma}}{\ubar\sigma \ell \ell_n} + d_n + \bar{c}\right) (s_n+1) \lambda_{d, n}^{\rm r} / \kappa_1^2~.    
\end{align*}
Then, we have \eqref{eq:beta-hd2} holds because $P \{\mathcal E_n(d)\} \to 1$.

\underline{Step 2: Verifying \eqref{eq:rate},  \eqref{eq:L2}, and Assumption \ref{ass:md}}

By \eqref{eq:basic3}, on $\mathcal E_n(d)$ we have 
\begin{align*}
  \frac{1}{n} \sum_{1 \leq i \leq 2n} I \{D_i = d\} (\hat \alpha_{d,n}^{\rm r}-\alpha_{d,n}^{\rm r} + \psi_{n, i}' \hat \delta_{d, n})^2 & \leq    \lambda_{d, n}^{\rm r} \left(\frac{12 \bar{\sigma}}{\ubar\sigma \ell \ell_n} + d_n + \bar{c}\right)\bar{\sigma} \|(\breve \delta_{d, n})_{\breve S_{d, n}}\|_1 \\
  & \leq 4\left(\frac{12 \bar{\sigma}}{\ubar\sigma \ell \ell_n} + d_n + \bar{c}\right)^2 \bar\sigma (s_n+1) \lambda_{d, n}^{\rm r,2} / \kappa_1^2~.
\end{align*}
Because $P \{\mathcal E_n(d)\} \to 1$, we have
\[ \frac{1}{n} \sum_{1 \leq i \leq 2n} I \{D_i = d\} (\hat \alpha_{d,n}^{\rm r}-\alpha_{d,n}^{\rm r} +\psi_{n, i}' (\hat \beta_{d, n} - \beta_{d, n}))^2 = O_P \left ( s_n  (\lambda_n^{\rm r})^2 \right ) = o_P(1)~, \]
which implies \eqref{eq:L2} holds. 

Next, we show \eqref{eq:rate} for $\hat \beta_{d, n}^{\rm r}$. First note
\begin{align*}
& \left | \frac{1}{\sqrt{2n}} \sum_{1 \leq i \leq 2n} (2 D_i - 1) (\hat m_{d, n}(X_i, W_{n, i}) - m_{d, n}(X_i, W_{n, i})) \right | \\
& \left | \frac{1}{\sqrt{2n}} \sum_{1 \leq i \leq 2n} (2 D_i - 1) \psi_{n,i}'(\hat \beta_{d,n}^{\rm r} - \beta_{d,n}^{\rm r}) \right | \\
& \leq \left \| \frac{1}{\sqrt{2n}} \sum_{1 \leq i \leq n} (2 D_i - 1) \psi_{n, i} \right \|_\infty \|\hat \beta_{d, n}^{\rm r} - \beta_{d, n}^{\rm r}\|_1~.
\end{align*}
Next, note that it follows from Assumption \ref{ass:treatment} that conditional on $X^{(n)}$ and $W_n^{(n)}$,
\[ \{D_{\pi(2j - 1)} - D_{\pi(2j)}: 1 \leq j \leq n\} \]
is a sequence of independent Rademacher random variables. Therefore, Hoeffding's inequality implies
\begin{align*}
& P \left \{ \left \| \frac{1}{\sqrt{2n}} \sum_{1 \leq i \leq 2n} (2 D_i - 1) \psi_{n, i} \right \|_\infty > t \Bigg | X^{(n)}, W_n^{(n)} \right \} \\
& \leq \sum_{1 \leq l \leq p_n} P \left \{ \left\vert \frac{1}{\sqrt{2n}} \sum_{1 \leq j \leq n} (\psi_{n, \pi(2j - 1)} - \psi_{n, \pi(2j)}) (D_{\pi(2j - 1)} - D_{\pi(2j)})\right\vert  > t \Bigg | X^{(n)}, W_n^{(n)} \right \} \\
& \leq \sum_{1 \leq l \leq p_n} 2 \exp \left ( - \frac{t^2}{\frac{1}{n} \sum_{1 \leq j \leq n} (\psi_{n, \pi(2j - 1)} - \psi_{n, \pi(2j)})^2} \right )~.
\end{align*}
Define
\[ \nu_n^2 = \max_{1 \leq l \leq p_n} \frac{1}{n} \sum_{1 \leq i \leq 2n} \psi_{n, i, l}^2~. \]
We then have
\begin{equation} \label{eq:ratepsi2}
P \left \{ \left \| \frac{1}{\sqrt{2n}} \sum_{1 \leq i \leq 2n} (2 D_i - 1) \psi_{n, i} \right \|_\infty > \nu_n \sqrt{2 \log(p_n \vee n)} \Bigg | X^{(n)}, W_n^{(n)} \right \} \leq (p_n \vee n)^{-1}~.
\end{equation}
Next, we determine the order of $\nu_n^2$. Note
\begin{align*}
E[\nu_n^2] & \leq \max_{1 \leq l \leq p_n} 2 E[\psi_{n, i, l}^2] + 2 E \left [ \frac{1}{2n} \sum_{1 \leq i \leq 2n} (\psi_{n, i, l}^2 - E[\psi_{n, i, l}^2]) \right ] \\
& \lesssim 1 + E \left [ \max_{1 \leq l \leq p_n} \left | \frac{1}{2n} \sum_{1 \leq i \leq 2n} e_i \psi_{n, i, l}^2 \right | \right ] \\
& \lesssim 1 + \Xi_n E \left [ \max_{1 \leq l \leq p_n} \left | \frac{1}{2n} \sum_{1 \leq i \leq 2n} e_i \psi_{n, i, l} \right | \right ] \\
& \lesssim 1 + \Xi_n E \left [ \sup_{f \in \mathcal F_n} \left | \frac{1}{2n} \sum_{1 \leq i \leq 2n} f(e_i, \psi_{n, i, l}) \right | \right ]
\end{align*}
where $\{e_i: 1 \leq i \leq n\}$ is an i.i.d.\ sequence of Rademacher random variables, 
\[ \mathcal F_n = \{f: \mathbf R \times \mathbf R^{p_n} \mapsto \mathbf R,  f(e, \psi) = e\psi_l, 1 \leq l \leq p_n\}~, \]
and $\psi_l$ is the $l$th element of $\psi$. Note the second inequality follows from Lemma 2.3.1 of \cite{vandervaart:wellner:1996}, the third inequality follows from Theorem 4.12 of \cite{ledoux1991probability} and the definition of $\Xi_n$, and the last follows from Assumption \ref{ass:LASSO2-moments}. Note also $\mathcal F_n$ has an envelope $F = \Xi_n$ and
\[ \sup_{n \geq 1} \sup_{f \in \mathcal F_n} E[f^2] < \infty \]
because of Assumption \ref{ass:LASSO2-moments}. Because the cardinality of $\mathcal F_n$ is $p_n$, for any $\epsilon < 1$ we have that
\[ \sup_{Q: Q \text{ is a discrete distribution with finite support}} \mathcal N(\epsilon \|F\|_{Q, 2}, \mathcal F, L_2(Q)) \leq \frac{p_n}{\epsilon}~, \]
where $\mathcal N(\epsilon, \mathcal F, L_2(Q))$ is the covering number for class $\mathcal F$ under the metric $L_2(Q)$ using balls of radius $\epsilon$. Therefore, Corollary 5.1 of \cite{chernozhukov2014gaussian} implies
\[ E \left [ \sup_{f \in \mathcal F_n} \left | \frac{1}{2n} \sum_{1 \leq i \leq 2n} e_i \psi_{n, i, l}  \right | \right ] \lesssim \sqrt{\frac{\log p_n}{n}} + \frac{\Xi_n \log p_n}{n} = o(\Xi_n^{-1})~. \]
Therefore, $\nu_n = O_p(1)$. Together with \eqref{eq:ratepsi2}, they imply
\[ \left \| \frac{1}{\sqrt{2n}} \sum_{1 \leq i \leq 2n} (2 D_i - 1) \psi_{n, i} \right \|_\infty = O_P \left ( \sqrt{\log (p_n \vee n)} \right )~. \]
In light of \eqref{eq:beta-hd2} and Assumption \ref{ass:LASSO2-penalty}, we have
\[ \left \| \frac{1}{\sqrt{2n}} \sum_{1 \leq i \leq 2n} (2 D_i - 1) \psi_{n, i} \right \|_\infty \|\hat \beta_{d, n}^{\rm r} - \beta_{d, n}^{\rm r}\|_1 = O_P \left ( \frac{s_n \ell\ell_n\log (p_n \vee n)}{\sqrt n} \right ) = o_P(1)~. \]

Next, note that Assumption \ref{ass:md}(a) and \ref{ass:md}(b) follow Assumption \ref{ass:LASSO2-moments}, and Assumption \ref{ass:md}(c) follows Assumptions \ref{ass:LASSO2-moments} and \ref{ass:LASSO2-match}.

\underline{Step 3: Asymptotic variance} 

Suppose the true specification is approximately sparse as specified in Theorem \ref{thm:LASSO2}. Let $\tilde Y_i(d) = Y_i(d) - \mu_d(X_i)$, $\tilde \psi_{n, i} = \psi_{n,i} - E[\psi_{n, i}|X_i]$, and $\tilde R_{n,i}(d) = R_{n,i}(d) - E[R_{n,i}(d)|X_i]$. Then, we have 
\begin{align*}
& E \left[(E[\tilde Y_i(1) + \tilde Y_i(0)|W_{n,i},X_i] - \tilde \psi_{n,i}'(\beta_{1,n}^{\rm r} + \beta_{0,n}^{\rm r}))^2\right] = E[(\tilde R_{n,i}(1)+ \tilde R_{n,i}(0))^2] = o(1)~. 
\end{align*}
This concludes the proof.  \qed

\subsection{Proof of Theorem \ref{thm:refit}}
We divide the proof into three steps. In the first step, we show
\begin{align}
\hat \beta_n^{\rm refit} - \beta_n^{\rm refit} = o_P(1)~.
\label{eq:beta^pfehd}
\end{align}
In the second step, we show \eqref{eq:rate}, \eqref{eq:L2}, and Assumption \ref{ass:md} hold. In the third step, we show that  $\sigma_n^{\rm na,2} \geq \sigma_n^{\rm refit,2}$ and $\sigma_n^{\rm r,2}\geq \sigma_n^{\rm refit,2}$. 


\underline{Step 1: Proof of \eqref{eq:beta^pfehd}} 

Let 
\begin{align*}
\hat \Delta_{\Gamma, n} & = \frac{1}{n} \sum_{1 \leq j \leq n} (D_{\pi(2j - 1)} - D_{\pi(2j)}) (\Gamma_{n,\pi(2j - 1)} - \Gamma_{n,\pi(2j)})~,\\
\hat \Delta_{\hat \Gamma, n} & = \frac{1}{n} \sum_{1 \leq j \leq n} (D_{\pi(2j - 1)} - D_{\pi(2j)}) (\hat \Gamma_{n,\pi(2j - 1)} - \hat \Gamma_{n,\pi(2j)})~,\\
\delta_{\Gamma, j} & = (D_{\pi(2j - 1)} - D_{\pi(2j)})(\Gamma_{n,\pi(2j - 1)} - \Gamma_{n,\pi(2j)})~,\\
\delta_{\hat \Gamma, j} & = (D_{\pi(2j - 1)} - D_{\pi(2j)})(\hat \Gamma_{n,\pi(2j - 1)} - \hat \Gamma_{n,\pi(2j)})~.
\end{align*}
Then, by the proof of Theorem \ref{thm:pfe}, we have $\hat \beta_n^{\rm refit}$ equals the coefficient estimate using least squares in the linear regression of $\delta_{Y, j}$ on $\delta_{\hat \Gamma, j} - \hat{\Delta}_{\hat \Gamma, n}$. Then, for any $u \in \mathbf R^2$ such that $||u||_2 =1$, we have 
\begin{align*}
& \left|\left( \frac{1}{n} \sum_{1\leq j \leq n} ((\delta_{\hat \Gamma, j} - \hat \Delta_{\hat \Gamma, n})'u)^2 \right)^{1/2} - \left( \frac{1}{n} \sum_{1\leq j \leq n} ((\delta_{\Gamma, j}- \hat \Delta_{ \Gamma, n})'u)^2 \right)^{1/2} \right| \\
& \leq  \left( \frac{1}{n} \sum_{1\leq j \leq n} ((\delta_{\hat \Gamma, j} - \delta_{\Gamma, j})'u)^2 - ((\hat \Delta_{\hat \Gamma, n}- \hat \Delta_{ \Gamma, n})'u)^2 \right)^{1/2} \\
& \leq \left( \frac{2}{n} \sum_{1\leq i \leq 2n} \left\|\hat\Gamma_{n,i} + (\hat \alpha_{1,n}^{\rm r},\hat \alpha_{0,n}^{\rm r})' - \Gamma_{n,i} - (\alpha_{1,n}^{\rm r},\alpha_{0,n}^{\rm r})'\right\|_2^2 \right)^{1/2} \\
& \lesssim \sum_{d \in \{0, 1\}} \frac{1}{2n}\sum_{1 \leq i \leq 2n}(\hat \alpha_{d,n}^{\rm r} - \alpha_{d,n}^{\rm r}+ \psi_{n, i}'(\hat \beta_{d, n}^{\rm r} - \beta_{d, n}^{\rm r}))^2 = o_P(1)~, 
\end{align*}
where the second inequality is by the fact that 
\begin{align*}
&    \delta_{\Gamma,j} = (D_{\pi(2j - 1)} - D_{\pi(2j)})(\Gamma_{n,\pi(2j - 1)} + ( \alpha_{1,n}^{\rm r},\alpha_{0,n}^{\rm r})' - \Gamma_{n,\pi(2j)} - (\alpha_{1,n}^{\rm r},\alpha_{0,n}^{\rm r})')~,\\
&    \delta_{\hat \Gamma,j} = (D_{\pi(2j - 1)} - D_{\pi(2j)})(\hat \Gamma_{n,\pi(2j - 1)} + (\hat \alpha_{1,n}^{\rm r},\hat \alpha_{0,n}^{\rm r})' - \hat \Gamma_{n,\pi(2j)} - (\hat \alpha_{1,n}^{\rm r},\hat \alpha_{0,n}^{\rm r})')~,
\end{align*}
and the last equality is by the proof of Theorem \ref{thm:LASSO2}. This implies 
\begin{align*}
& \frac{1}{n} \sum_{1\leq j \leq n} (\delta_{\hat \Gamma, j} - \hat \Delta_{\hat \Gamma, n})(\delta_{\hat \Gamma, j} - \hat \Delta_{\hat \Gamma, n})' - 2 E[\var[\Gamma_{n,i}|X_i]] \\
& = \frac{1}{n} \sum_{1\leq j \leq n} (\delta_{\hat \Gamma, j} - \hat \Delta_{\hat \Gamma, n})(\delta_{\hat \Gamma, j} - \hat \Delta_{\hat \Gamma, n})'-\frac{1}{n} \sum_{1\leq j \leq n} (\delta_{ \Gamma, j} - \hat \Delta_{ \Gamma, n})(\delta_{ \Gamma, j} - \hat \Delta_{\Gamma, n})' \\
& + \frac{1}{n} \sum_{1\leq j \leq n} (\delta_{ \Gamma, j} - \hat \Delta_{ \Gamma, n})(\delta_{ \Gamma, j} - \hat \Delta_{\Gamma, n})'- 2 E[\var[\Gamma_{n,i}|X_i]] = o_P(1)~,
\end{align*}
where the last equality holds due to the same argument as used in the proof of Theorem \ref{thm:pfe}. 
Similarly, we can show that 
\begin{align*}
\frac{1}{n} \sum_{1\leq j \leq n} \delta_{Y, j}(\delta_{\hat \Gamma, j} - \hat \Delta_{\hat \Gamma, n}) - E [\cov[\Gamma_{n,i},Y_i(1) + Y_i(0)|X_i]] = o_P(1)~,
\end{align*}
which leads to \eqref{eq:beta^pfehd}. 

\underline{Step 2: Verifying \eqref{eq:rate}, \eqref{eq:L2}, and Assumption \ref{ass:md}}

We first show \eqref{eq:rate}. We have
\begin{align*}
& \frac{1}{\sqrt{2n}}\sum_{1 \leq i \leq 2n} (2D_i -1)(\hat m_{d, n}(X_i, W_{n, i}) -  m_{d, n}(X_i, W_{n, i})) \\
& = \frac{1}{\sqrt{2n}}\sum_{1 \leq i \leq 2n} (2D_i -1)(\hat \Gamma_{n, i} - \Gamma_{n, i})'\hat \beta_{ n}^{\rm refit} + \frac{1}{\sqrt{2n}}\sum_{1 \leq i \leq 2n} (2D_i -1) \Gamma_{n, i}'(\hat \beta_{ n}^{\rm refit} - \beta_{ n}^{\rm refit} ) \\
& = 
\begin{pmatrix}
\frac{1}{\sqrt{2n}}\sum_{1 \leq i \leq 2n} (2D_i -1)\psi_{n,i}'(\hat \beta_{1, n}^{\rm r} - \beta_{1, n}^{\rm r}),\frac{1}{\sqrt{2n}}\sum_{1 \leq i \leq 2n} (2D_i -1)\psi_{n,i}'(\hat \beta_{0, n}^{\rm r} - \beta_{0, n}^{\rm r})
\end{pmatrix}\hat \beta_{ n}^{\rm refit} \\
&  + \begin{pmatrix}
\frac{1}{\sqrt{2n}}\sum_{1 \leq i \leq 2n} (2D_i -1)\psi_{n,i}' \beta_{1, n}^{\rm r}, \frac{1}{\sqrt{2n}}\sum_{1 \leq i \leq 2n} (2D_i -1)\psi_{n,i}' \beta_{0, n}^{\rm r}
\end{pmatrix} (\hat \beta_{1, n}^{\rm refit} - \beta_{1, n}^{\rm refit}) \\
& = o_P(1)~,
\end{align*}
where the last equality holds by \eqref{eq:beta^pfehd} and the facts that 
\begin{align*}
\begin{pmatrix}
\frac{1}{\sqrt{2n}}\sum_{1 \leq i \leq 2n} (2D_i -1)\psi_{n,i}'(\hat \beta_{1, n}^{\rm r} - \beta_{1, n}^{\rm r}),\frac{1}{\sqrt{2n}}\sum_{1 \leq i \leq 2n} (2D_i -1)\psi_{n,i}'(\hat \beta_{0, n}^{\rm r} - \beta_{0, n}^{\rm r})
\end{pmatrix} = o_P(1)
\end{align*}
as shown in Theorem \ref{thm:LASSO2} and 
\begin{align*}
 \begin{pmatrix}
\frac{1}{\sqrt{2n}}\sum_{1 \leq i \leq 2n} (2D_i -1)\psi_{n,i}' \beta_{1, n}^{\rm r}, \frac{1}{\sqrt{2n}}\sum_{1 \leq i \leq 2n} (2D_i -1)\psi_{n,i}' \beta_{0, n}^{\rm r}
\end{pmatrix} = O_P(1)~. 
\end{align*}

Next, we show \eqref{eq:L2}. We note that 
\begin{align*}
& \frac{1}{2n}\sum_{1 \leq i \leq 2n}(\hat m_{d, n}(X_i, W_{n, i}) -  m_{d, n}(X_i, W_{n, i}))^2 \\
& \lesssim \frac{1}{2n}\sum_{1 \leq i \leq 2n} ((\hat \Gamma_{n, i} - \Gamma_{n, i})'\hat \beta_{ n}^{\rm refit})^2 + \frac{1}{2n}\sum_{1 \leq i \leq 2n} (\Gamma_{n, i}'(\hat \beta_{n}^{\rm refit} - \beta_{n}^{\rm refit}))^2 \\
& \lesssim \frac{1}{2n}\sum_{1 \leq i \leq 2n} ((\psi_{n,i}'(\beta_{1,n}^{\rm r}-\hat \beta_{1,n}^{\rm r}))^2 + (\psi_{n, i}'(\beta_{0,n}^{\rm r}-\hat \beta_{0,n}^{\rm r}))^2) ||\hat \beta_{ n}^{\rm refit}||^2_2 \\
& + \frac{1}{2n}\sum_{1 \leq i \leq 2n}[ (\psi_{n,i}'\beta_{1,n}^{\rm r})^2 + (\psi_{n, i}'\beta_{0,n}^{\rm r})^2] ||\hat \beta_{n}^{\rm refit} - \beta_{n}^{\rm refit}||_2^2 \\
& \lesssim \sum_{d=0,1}\frac{1}{2n}\sum_{1 \leq i \leq 2n} \left[( \alpha_{d,n}^{\rm r}-\hat \alpha_{d,n}^{\rm r} + \psi_{n,i}'(\beta_{d,n}^{\rm r}-\hat \beta_{d,n}^{\rm r}))^2 + (\alpha_{d,n}^{\rm r}-\hat \alpha_{d,n}^{\rm r})^2\right] ||\hat \beta_{ n}^{\rm refit}||^2_2 + o_P(1) \\
& = o_P(1)~. 
\end{align*}

Last, Assumption \ref{ass:md}(1) can be verified in the same manner as we did in the proof of Theorem \ref{thm:LASSO2}. 

\underline{Step 3: Asymptotic variance}

Recall $\sigma_2^2(Q)$ and $\sigma_3^2(Q)$ defined in Theorem \ref{thm:main}. As we have already verified \eqref{eq:rate} for $\hat m_{d,n}(X_i,W_{n,i}) =\hat \Gamma_{n, i} \hat \beta_{ n}^{\rm refit}$ and $ m_{d,n}(X_i,W_{n,i}) = \Gamma_{n, i} \beta_{ n}^{\rm refit}$, we have, for $b \in \{\rm{unadj, r, refit}\}$, that  
\begin{align*}
\sigma_n^{\rm b,2} - \sigma_2^2(Q)-\sigma_3^2(Q) = \frac{1}{2} E\left[ \var[E[Y_i(1)+Y_i(0)|X_i,W_{n,i}] - \Gamma_{n, i}' \gamma^b|X_i]\right]
\end{align*}
with
\begin{align*}
\gamma^{\rm unadj} = (0,0)'~, \quad \gamma^{\rm r} = (1,1)'~, \quad \text{and} \quad \gamma^{\rm refit} = \beta_{ n}^{\rm refit}~.
\end{align*}
In addition, we note  that 
\begin{align*}
\frac{1}{2} E\left[ \var[E[Y_i(1)+Y_i(0)|X_i,W_{n,i}] - \Gamma_{n, i}' \gamma|X_i]\right]
\end{align*}
is minimized at $\gamma = \beta_{ n}^{\rm refit}$, which leads to the desired result. \qed

\section{Auxiliary Lemmas} \label{sec:aux_lemmas}
\begin{lemma} \label{lem:ui}
Suppose $\phi_n, n \geq 1$ is a sequence of random variables satisfying
\begin{equation} \label{eq:ui}
\lim_{\lambda \to \infty} \limsup_{n \to \infty} E[|\phi_n| I \{|\phi_n| > \lambda\}] = 0~.
\end{equation}
Suppose $X$ is another random variable defined on the same probability space with $\phi_n, n \geq 1$. Then,
\begin{equation} \label{eq:uic}
\lim_{\gamma \to \infty} \limsup_{n \to \infty} E[ E[|\phi_n| | X] I \{E[|\phi_n| | X] > \gamma\}] = 0~.
\end{equation}
\end{lemma}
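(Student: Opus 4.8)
The plan is to show that taking conditional expectations preserves uniform integrability, with the whole argument resting on the observation that the truncation indicator appearing in \eqref{eq:uic} is measurable with respect to $\sigma(X)$. First I would abbreviate $\psi_n = E[|\phi_n| \, | \, X]$, so that the goal becomes bounding $E[\psi_n I\{\psi_n > \gamma\}]$. Since $I\{\psi_n > \gamma\}$ is $\sigma(X)$-measurable, the tower property gives
\[ E[\psi_n I\{\psi_n > \gamma\}] = E[E[|\phi_n| \, | \, X] \, I\{\psi_n > \gamma\}] = E[|\phi_n| \, I\{\psi_n > \gamma\}]~, \]
which converts the problem into controlling the \emph{unconditional} quantity $E[|\phi_n| I\{\psi_n > \gamma\}]$ in terms of the hypothesis on $|\phi_n|$.

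Next I would split this expectation according to whether $|\phi_n|$ exceeds an auxiliary threshold $\lambda$, obtaining
\[ E[|\phi_n| \, I\{\psi_n > \gamma\}] \leq E[|\phi_n| \, I\{|\phi_n| > \lambda\}] + \lambda \, P\{\psi_n > \gamma\}~. \]
The first term is handled directly by \eqref{eq:ui}. For the second, Markov's inequality yields $P\{\psi_n > \gamma\} \leq E[\psi_n]/\gamma = E[|\phi_n|]/\gamma$, and I would invoke the standard consequence of \eqref{eq:ui} that $C := \sup_{n \geq 1} E[|\phi_n|] < \infty$: choosing $\lambda_0$ with $\limsup_n E[|\phi_n| I\{|\phi_n| > \lambda_0\}] < 1$ bounds $E[|\phi_n|] \leq \lambda_0 + 1$ for all large $n$, while each individual $E[|\phi_n|]$ is finite since $\psi_n$ is well-defined. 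This bounds the second term by $\lambda C / \gamma$.

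Finally I would take iterated limits in the correct order. Fixing $\lambda$ and letting $\gamma \to \infty$ eliminates $\lambda C / \gamma$, giving
\[ \lim_{\gamma \to \infty} \limsup_{n \to \infty} E[\psi_n I\{\psi_n > \gamma\}] \leq \limsup_{n \to \infty} E[|\phi_n| \, I\{|\phi_n| > \lambda\}]~. \]
Because the left-hand side does not depend on $\lambda$, I would then let $\lambda \to \infty$ and apply \eqref{eq:ui} to force the right-hand side to vanish; nonnegativity makes the limit in \eqref{eq:uic} exactly zero. The only delicate point is precisely this ordering of the two limits together with the uniform bound $\sup_n E[|\phi_n|] < \infty$; the tower-property reduction is the conceptual crux but is otherwise routine, so I expect no substantial obstacle beyond bookkeeping the limits.
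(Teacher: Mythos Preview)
Your proof is correct and follows essentially the same approach as the paper: both reduce via the tower property to $E[|\phi_n| I\{\psi_n > \gamma\}]$, then control this using Markov's inequality together with the bound $\limsup_n E[|\phi_n|] < \infty$. The only cosmetic difference is that the paper invokes Theorem~10.3.5 of Dudley (the uniform-absolute-continuity characterization of uniform integrability) as a black box, whereas you carry out the elementary splitting $E[|\phi_n| I_A] \leq E[|\phi_n| I\{|\phi_n| > \lambda\}] + \lambda P(A)$ directly---but that splitting is exactly how one proves the relevant direction of Dudley's theorem, so the arguments are substantively identical.
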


\begin{proof}
Fix $\epsilon > 0$. We will show there exists $\gamma > 0$ so that
\begin{equation} \label{eq:goal}
\limsup_{n \to \infty} E[ E[|\phi_n| | X] I \{E[|\phi_n| | X] > \gamma\}] < \epsilon~.
\end{equation}
First note the event $\{E[|\phi_n| | X] > \gamma\}$ is measurable with respect to the $\sigma$-algebra generated by $X$, and therefore
\begin{equation} \label{eq:mes}
E[ E[|\phi_n| | X] I \{E[|\phi_n| | X] > \gamma\}] = E[|\phi_n| I \{E[|\phi_n| | X] > \gamma\}]~.
\end{equation}
Next, by Theorem 10.3.5 of \cite{dudley:1989}, \eqref{eq:ui} implies that there exists a $\delta > 0$ such that for any sequence of events $A_n$ such that $\limsup_{n \to \infty} P \{A_n\} < \delta$, we have
\begin{equation} \label{eq:eps}
\limsup_{n \to \infty} E[|\phi_n| I \{A_n\}] < \epsilon~.
\end{equation}
In light of the previous result, note
\begin{align*}
P \{E[|\phi_n| | X] > \gamma\} & \leq \frac{E[E[|\phi_n| | X]]}{\gamma} = \frac{E[|\phi_n|]}{\gamma}
\end{align*}
By Theorem 10.3.5 of \cite{dudley:1989} again, \eqref{eq:ui} implies $\limsup_{n \to \infty} E[|\phi_n|] < \infty$, so by choosing $\gamma$ large enough, we can make sure
\[ \limsup_{n \to \infty} P \{E[|\phi_n| | X] > \gamma\} < \delta \text{ for all } n~. \]
\eqref{eq:goal} then follows from \eqref{eq:mes}--\eqref{eq:eps}.
\end{proof}

\begin{lemma} \label{lem:sn}
Suppose Assumptions \ref{ass:Q}--\ref{ass:close} and \ref{ass:md} hold. Then,
\[ \frac{s_n^2}{n E[\var[\phi_{1, n, i} | X_i]]} \stackrel{P}{\to} 1~. \]
\end{lemma}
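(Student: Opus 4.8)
The plan is to exploit that the weight $\var[\phi_{1, n, i}\mid X_i]$ is a measurable function of $X_i$ alone. Write $g_n(X_i)=\var[\phi_{1, n, i}\mid X_i]$, so that $s_n^2=\sum_{1\le i\le 2n}D_i\,g_n(X_i)$. Since Assumption \ref{ass:md}(a) gives $\liminf_{n\to\infty}E[g_n(X_i)]>0$, while Assumptions \ref{ass:Q}(b) and \ref{ass:md}(b) give $\limsup_{n\to\infty}E[g_n(X_i)]\le\limsup_{n\to\infty}E[\phi_{1,n,i}^2]<\infty$, the denominator is bounded away from both $0$ and $\infty$; it therefore suffices to prove
\[ \frac1n\sum_{1\le i\le 2n}D_i\,g_n(X_i)-E[g_n(X_i)]\stackrel{P}{\to}0~. \]
Using $\sum_{1\le i\le 2n}D_i=n$, I would split the left-hand side into a population average and a randomization term,
\[ \frac1n\sum_{1\le i\le 2n}D_i\,g_n(X_i)=\frac1{2n}\sum_{1\le i\le 2n}g_n(X_i)+\frac1n\sum_{1\le i\le 2n}\Big(D_i-\tfrac12\Big)g_n(X_i)~, \]
and handle the two pieces separately.

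The common input is the uniform integrability
\[ \lim_{\lambda\to\infty}\limsup_{n\to\infty}E\big[g_n(X_i)\,I\{g_n(X_i)>\lambda\}\big]=0~, \]
which I would obtain from Lemma \ref{lem:ui} applied with conditioning variable $X_i$ and $\phi_n=(\phi_{1,n,i}-E[\phi_{1,n,i}\mid X_i])^2$, so that $E[\phi_n\mid X_i]=g_n(X_i)$; the hypothesis \eqref{eq:ui} is the uniform integrability of $(\phi_{1,n,i}-E[\phi_{1,n,i}\mid X_i])^2$, which follows from Assumptions \ref{ass:Q}(b) and \ref{ass:md}(b) (and is precisely what Lemma \ref{lem:lind-ui} records). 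Granting this, the first term is a weak law of large numbers for the row-wise i.i.d.\ triangular array $\{g_n(X_i)\}$: truncating at level $\lambda$, the truncated average has variance $O(\lambda^2/n)$ and so converges to its mean by Chebyshev's inequality, while the discarded part has expectation $E[g_n(X_i)I\{g_n(X_i)>\lambda\}]$, made uniformly small by the displayed uniform integrability. This yields $\frac1{2n}\sum_{1\le i\le 2n}g_n(X_i)-E[g_n(X_i)]\stackrel{P}{\to}0$.

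For the randomization term I would condition on $X^{(n)}$. By Assumption \ref{ass:treatment} its conditional mean is zero, and since $D_{\pi(2j)}=1-D_{\pi(2j-1)}$ within each pair,
\[ \var\Big[\frac1n\sum_{1\le i\le 2n}\big(D_i-\tfrac12\big)g_n(X_i)\,\Big|\,X^{(n)}\Big]=\frac1{4n^2}\sum_{1\le j\le n}\big(g_n(X_{\pi(2j-1)})-g_n(X_{\pi(2j)})\big)^2~. \]
Splitting $g_n=g_n I\{g_n\le\lambda\}+g_n I\{g_n>\lambda\}$, the truncated piece contributes a conditional variance of at most $\lambda^2/(4n)\to0$, hence is $o_P(1)$ for each fixed $\lambda$ by conditional Chebyshev together with the bounded-probability argument used in Step~1 of the proof of Theorem \ref{thm:main}; the tail piece is bounded in absolute value by $\frac1{2n}\sum_{1\le i\le 2n}g_n(X_i)I\{g_n(X_i)>\lambda\}$, which by Markov's inequality and the uniform integrability above is uniformly small once $\lambda$ is large. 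Letting $n\to\infty$ and then $\lambda\to\infty$ shows this term is $o_P(1)$, which combined with the previous paragraph completes the proof.

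The main obstacle is that $g_n$ depends on $n$ while only second moments of $Y_i(d)$ and of the working models are assumed; a direct Chebyshev bound on the randomization term would require $E[g_n(X_i)^2]$, hence fourth moments of $\phi_{1,n,i}$, which are unavailable. The argument must therefore be routed entirely through uniform integrability and truncation, with the uniform integrability of the conditional variance $g_n(X_i)$ being the crucial and least mechanical ingredient, supplied by Lemma \ref{lem:ui}.
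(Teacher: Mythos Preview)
Your proof is correct and takes a genuinely different route from the paper's. Both arguments begin from the same split
\[
\frac{1}{n}\sum_{1\le i\le 2n}D_i\,g_n(X_i)=\frac{1}{2n}\sum_{1\le i\le 2n}g_n(X_i)+\frac{1}{n}\sum_{1\le i\le 2n}\Big(D_i-\tfrac12\Big)g_n(X_i)
\]
and dispatch the population average via a triangular-array weak law under the uniform integrability of $g_n(X_i)$. The divergence is in how the randomization term is handled. The paper bounds it by $\frac{1}{2n}\sum_{1\le j\le n}|g_n(X_{\pi(2j-1)})-g_n(X_{\pi(2j)})|$, then expands $g_n=\var[\phi_{1,n,i}\mid X_i]$ into its conditional-covariance constituents and shows each piece is Lipschitz in $X_i$ via Assumptions \ref{ass:Q}(c) and \ref{ass:md}(c), so that the pairwise differences vanish under Assumption \ref{ass:close}. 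You instead truncate $g_n$ at a fixed level $\lambda$: the bounded part has conditional variance at most $\lambda^2/(4n)$, and the tail is controlled in $L^1$ directly by the uniform integrability of $g_n(X_i)$ supplied by Lemma \ref{lem:ui}. Your route is more elementary and, for this lemma in isolation, uses strictly fewer hypotheses---neither the Lipschitz parts of Assumptions \ref{ass:Q} and \ref{ass:md} nor the pair-closeness Assumption \ref{ass:close} is invoked. The paper's argument has the advantage of reusing the Lipschitz-plus-closeness machinery that is in any case required elsewhere (the $\mathbb L_n=o_P(1)$ step in Theorem \ref{thm:main}, the proof of Theorem \ref{thm:var}), so no assumptions are saved at the level of the main theorems; but your observation that the lemma itself needs only second-moment and uniform-integrability conditions is a nice sharpening.
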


\begin{proof}
To begin, note it follows from Assumption \ref{ass:treatment} and $Q_n = Q^{2n}$ that
\begin{multline} \label{eq:varA} 
\frac{1}{n} \sum_{1 \leq i \leq 2n} D_i \var[\phi_{1, n, i} | X_i] = \frac{1}{2n} \sum_{1 \leq i \leq 2n} \var[\phi_{1, n, i} | X_i] \\
+ \frac{1}{2n} \sum_{1 \leq i \leq 2n: D_i = 1} \var[\phi_{1, n, i} | X_i] - \frac{1}{2n} \sum_{1 \leq i \leq 2n: D_i = 0} \var[\phi_{1, n, i} | X_i]~.
\end{multline}
Next,
\begin{multline} \label{eq:varphi1}
\left | \frac{1}{2n} \sum_{1 \leq i \leq 2n: D_i = 1} \var[\phi_{1, n, i} | X_i] - \frac{1}{2n} \sum_{1 \leq i \leq 2n: D_i = 0} \var[\phi_{1, n, i} | X_i] \right | \\
\leq \frac{1}{2n} \sum_{1 \leq j \leq n} |\var[\phi_{1, n, \pi(2j - 1)} | X_{\pi(2j - 1)}] - \var[\phi_{1, n, \pi(2j)} | X_{\pi(2j)}]|~.
\end{multline}
In what follows, we will show
\begin{multline*}
\frac{1}{n} \sum_{1 \leq j \leq n} |\cov[Y_{\pi(2j - 1)}(1), m_{1, n}(X_{\pi(2j - 1)}, W_{\pi(2j - 1)}) | X_{\pi(2j - 1)}]] \\
- \cov[Y_{\pi(2j)}(1), m_{1, n}(X_{\pi(2j)}, W_{\pi(2j)}) | X_{\pi(2j)}]]| \stackrel{P}{\to} 0~.
\end{multline*}
To that end, first note from Assumptions \ref{ass:close} and \ref{ass:md}(c) that
\begin{align*}
& \frac{1}{n} \sum_{1 \leq j \leq n} |E[Y_{\pi(2j - 1)}(1) m_{1, n}(X_{\pi(2j - 1)}, W_{\pi(2j - 1)}) | X_{\pi(2j - 1)}] - E[Y_{\pi(2j)}(1) m_{1, n}(X_{\pi(2j)}, W_{\pi(2j)}) | X_{\pi(2j)}]| \\
& \lesssim \frac{1}{n} \sum_{1 \leq j \leq n} |X_{\pi(2j - 1)} - X_{\pi(2j)}| \stackrel{P}{\to} 0~.
\end{align*}
Next, note
\begin{align*}
& \frac{1}{n} \sum_{1 \leq j \leq n} |E[Y_{\pi(2j - 1)}(1) | X_{\pi(2j - 1)}] E[m_{1, n}(X_{\pi(2j - 1)}, W_{\pi(2j - 1)}) | X_{\pi(2j - 1)}] \\
& \hspace{3em} - E[Y_{\pi(2j)}(1) | X_{\pi(2j)}] E[m_{1, n}(X_{\pi(2j)}, W_{\pi(2j)}) | X_{\pi(2j)}]| \\
& \leq \frac{1}{n} \sum_{1 \leq j \leq n} |E[Y_{\pi(2j - 1)}(1) | X_{\pi(2j - 1)}]| |E[m_{1, n}(X_{\pi(2j - 1)}, W_{\pi(2j - 1)}) | X_{\pi(2j - 1)}] - E[m_{1, n}(X_{\pi(2j)}, W_{\pi(2j)}) | X_{\pi(2j)}]| \\
& \hspace{3em} + \frac{1}{n} \sum_{1 \leq j \leq n} |E[Y_{\pi(2j - 1)}(1) | X_{\pi(2j - 1)}] - E[Y_{\pi(2j)}(1) | X_{\pi(2j)}]| |E[m_{1, n}(X_{\pi(2j)}, W_{\pi(2j)}) | X_{\pi(2j)}]| \\
& \leq \left ( \frac{1}{n} \sum_{1 \leq j \leq n} |E[Y_{\pi(2j - 1)}(1) | X_{\pi(2j - 1)}]|^2 \right )^{1/2} \\
& \hspace{5em} \times \left ( \frac{1}{n} \sum_{1 \leq j \leq n} |E[m_{1, n}(X_{\pi(2j - 1)}, W_{\pi(2j - 1)}) | X_{\pi(2j - 1)}] - E[m_{1, n}(X_{\pi(2j)}, W_{\pi(2j)}) | X_{\pi(2j)}]|^2 \right )^{1/2} \\
& \hspace{3em} + \left ( \frac{1}{n} \sum_{1 \leq j \leq n} |E[m_{1, n}(X_{\pi(2j)}, W_{\pi(2j)}) | X_{\pi(2j)}]|^2 \right )^{1/2} \\
& \hspace{5em} \times \left ( \frac{1}{n} \sum_{1 \leq j \leq n} |E[Y_{\pi(2j - 1)}(1) | X_{\pi(2j - 1)}] - E[Y_{\pi(2j)}(1) | X_{\pi(2j)}]|^2 \right )^{1/2} \\
& \lesssim \left ( \frac{1}{n} \sum_{1 \leq i \leq 2n} |E[Y_i(1) | X_i]|^2 \right )^{1/2} \left ( \frac{1}{n} \sum_{1 \leq j \leq n} |X_{\pi(2j - 1)} - X_{\pi(2j)}|^2 \right )^{1/2} \\
& \hspace{3em} + \left ( \frac{1}{n} \sum_{1 \leq i \leq 2n} |E[m_{1, n}(X_i, W_i) | X_i]|^2 \right )^{1/2} \left ( \frac{1}{n} \sum_{1 \leq j \leq n} |X_{\pi(2j - 1)} - X_{\pi(2j)}|^2 \right )^{1/2} \stackrel{P}{\to} 0~,
\end{align*}
where the first inequality follows from the triangle inequality, the second follows from the Cauchy-Schwarz inequality, the last follows from Assumptions \ref{ass:Q}(c) and \ref{ass:md}(c). To see the convergence holds, first note because
\[ E[|E[Y_i(1) | X_i]|^2] \leq E[E[Y_i^2(1) | X_i]] = E[Y_i^2(1)] < \infty~, \]
the weak law of large numbers implies
\[ \frac{1}{n} \sum_{1 \leq i \leq 2n} |E[Y_i(1) | X_i]|^2 \stackrel{P}{\to} 2 E[|E[Y_i(1) | X_i]|^2] < \infty~. \]
On the other hand,
\[ \frac{1}{2n} \sum_{1 \leq i \leq 2n} |E[m_{1, n}(X_i, W_i) | X_i]|^2 \leq \frac{1}{2n} \sum_{1 \leq i \leq 2n} E[m_{1, n}^2(X_i, W_i) | X_i]~. \]
Assumption \ref{ass:md}(b) and Lemma \ref{lem:ui} imply
\[ \lim_{\lambda \to \infty} \limsup_{n \to \infty} E[E[m_{1, n}^2(X_i, W_i) | X_i] I \{E[m_{1, n}^2(X_i, W_i) | X_i] > \lambda\}] = 0~. \]
Therefore, Lemma 11.4.2 of \cite{lehmann:romano:tsh:2005} implies
\[ \frac{1}{2n} \sum_{1 \leq i \leq 2n} E[m_{1, n}^2(X_i, W_i) | X_i] - E[E[m_{1, n}^2(X_i, W_i) | X_i]] \stackrel{P}{\to} 0~. \]
Finally, note $E[E[m_{1, n}^2(X_i, W_i) | X_i]] = E[m_{1, n}^2(X_i, W_i)]$ is bounded for $n \geq 1$ by Assumption \ref{ass:md}(b), so
\[ \frac{1}{n} \sum_{1 \leq i \leq 2n} |E[m_{1, n}(X_i, W_i) | X_i]|^2 = O_P(1)~. \]
The desired convergence therefore follows.

Similar arguments applied termwise imply the right-hand side of \eqref{eq:varphi1} is $o_P(1)$. \eqref{eq:varA}--\eqref{eq:varphi1} then imply
\begin{equation} \label{eq:sn-d}
\frac{s_n^2}{n} - \frac{1}{2n} \sum_{1 \leq i \leq 2n} \var[\phi_{1, n, i} | X_i] \to 0~.
\end{equation}

Next, we argue
\begin{equation} \label{eq:sn-wlln}
\frac{1}{2n} \sum_{1 \leq i \leq 2n} \var[\phi_{1, n, i} | X_i] - E[\var[\phi_{1, n, i} | X_i]] \to 0~.
\end{equation}
To establish \eqref{eq:sn-wlln}, we verify the uniform integrability condition in Lemma 11.4.2 of \cite{lehmann:romano:tsh:2005}. To that end, we will repeatedly use the inequality
\begin{align}
\label{eq:ui-indicator1} \left | \sum_{1 \leq j \leq k} a_j \right | I \left \{ \left | \sum_{1 \leq j \leq k} a_j \right | > \lambda \right \} & \leq \sum_{1 \leq j \leq k} k |a_j| I \left \{ |a_j| > \frac{\lambda}{k} \right \} \\
\label{eq:ui-indicator2} |ab| I \{|ab| > \lambda\} & \leq |a|^2 I \{|a| > \sqrt \lambda\} + |b|^2 I \{|b| > \sqrt \lambda\}~.
\end{align}
Note
\begin{align*}
& E[|\var[\phi_{1, n, i} | X_i] - E[\var[\phi_{1, n, i} | X_i]]| I \{|\var[\phi_{1, n, i} | X_i] - E[\var[\phi_{1, n, i} | X_i]]| > \lambda\}] \\
& \lesssim E \left [ |\var[\phi_{1, n, i} | X_i]| I \left \{ |\var[\phi_{1, n, i} | X_i]| > \frac{\lambda}{2} \right \} \right ] + E[\var[\phi_{1, n, i} | X_i]] I \left \{ E[\var[\phi_{1, n, i} | X_i]] > \frac{\lambda}{2} \right \} \\
& \leq E \left [ E[\phi_{1, n, i}^2 | X_i] I \left \{ E[\phi_{1, n, i}^2 | X_i] > \frac{\lambda}{2} \right \} \right ] + E[\phi_{1, n, i}^2] I \left \{ E[\phi_{1, n, i}^2] > \frac{\lambda}{2} \right \}~,
\end{align*}
where in the second inequality we use the fact that the variance of a random variable is bounded by its second moment. Note Assumption \ref{ass:md} implies $E[\phi_{1, n, i}^2]$ is bounded for $n \geq 1$, and therefore
\[ \lim_{\lambda \to \infty} \limsup_{n \to \infty} E[\phi_{1, n, i}^2] I \left \{ E[\phi_{1, n, i}^2] > \frac{\lambda}{2} \right \} = 0~. \]
On the other hand
\begin{align}
\label{eq:sn-ui} & E \left [ E[\phi_{1, n, i}^2 | X_i] I \left \{ E[\phi_{1, n, i}^2 | X_i] > \frac{\lambda}{2} \right \} \right ] \\
\nonumber & \lesssim E \left [ E[Y_i^2(1) | X_i] I \left \{ E[Y_i^2(1) | X_i] > \frac{\lambda}{12} \right \} \right ] + E \left [ E[m_{1, n}^2(X_i, W_i) | X_i] I \left \{ E[m_{1, n}^2(X_i, W_i) | X_i] > \frac{\lambda}{3} \right \} \right ] \\
\nonumber & \hspace{3em} + E \left [ E[m_{0, n}^2(X_i, W_i) | X_i] I \left \{ E[m_{0, n}^2(X_i, W_i) | X_i] > \frac{\lambda}{3} \right \} \right ] \\
\nonumber & \hspace{3em} + E \left [ |E[Y_i(1) m_{1, n}(X_i, W_i) | X_i]| I \left \{ |E[Y_i(1) m_{1, n}(X_i, W_i) | X_i]| > \frac{\lambda}{12} \right \} \right ] \\
\nonumber & \hspace{3em} + E \left [ |E[Y_i(1) m_{0, n}(X_i, W_i) | X_i]| I \left \{ |E[Y_i(1) m_{0, n}(X_i, W_i) | X_i]| > \frac{\lambda}{12} \right \} \right ] \\
\nonumber & \hspace{3em} + E \left [ |E[m_{1, n}(X_i, W_i) m_{0, n}(X_i, W_i) | X_i]| I \left \{ |E[m_{1, n}(X_i, W_i) m_{0, n}(X_i, W_i) | X_i]| > \frac{\lambda}{6} \right \} \right ]~.
\end{align}
It follows from Assumptions \ref{ass:Q}(b) and \ref{ass:md}(b) together with Lemma \ref{lem:ui} that
\begin{align*}
\lim_{\lambda \to \infty} \limsup_{n \to \infty} E \left [ E[Y_i^2(1) | X_i] I \left \{ E[Y_i^2(1) | X_i] > \frac{\lambda}{12} \right \} \right ] & = 0 \\
\lim_{\lambda \to \infty} \limsup_{n \to \infty} E \left [ E[m_{1, n}^2(X_i, W_i) | X_i] I \left \{ E[m_{1, n}^2(X_i, W_i) | X_i] > \frac{\lambda}{3} \right \} \right ] & = 0 \\
\lim_{\lambda \to \infty} \limsup_{n \to \infty} E \left [ E[m_{0, n}^2(X_i, W_i) | X_i] I \left \{ E[m_{0, n}^2(X_i, W_i) | X_i] > \frac{\lambda}{3} \right \} \right ] & = 0~.
\end{align*}
For the last term in \eqref{eq:sn-ui}, note
\begin{align*}
& E \left [ |E[m_{1, n}(X_i, W_i) m_{0, n}(X_i, W_i) | X_i]| I \left \{ |E[m_{1, n}(X_i, W_i) m_{0, n}(X_i, W_i) | X_i]| > \frac{\lambda}{6} \right \} \right ] \\
& \leq E \left [ E[|m_{1, n}(X_i, W_i) m_{0, n}(X_i, W_i)| | X_i] I \left \{ E[|m_{1, n}(X_i, W_i) m_{0, n}(X_i, W_i)| | X_i] > \frac{\lambda}{6} \right \} \right ]~.
\end{align*}
Meanwhile,
\begin{align*}
& E \left [ E[|m_{1, n}(X_i, W_i) m_{0, n}(X_i, W_i)| I \left \{ |m_{1, n}(X_i, W_i) m_{0, n}(X_i, W_i)| > \lambda \right \} \right ] \\
& \leq E[m_{1, n}^2(X_i, W_i) I \{|m_{1, n}(X_i, W_i)| > \sqrt \lambda \}] + E[m_{0, n}^2(X_i, W_i) I \{|m_{0, n}(X_i, W_i)| > \sqrt \lambda \}]~.
\end{align*}
It then follows from the previous two inqualities, Assumption \ref{ass:md}(b), and Lemma \ref{lem:ui} that
\[ \lim_{\lambda \to \infty} \limsup_{n \to \infty} E \left [ |E[m_{1, n}(X_i, W_i) m_{0, n}(X_i, W_i) | X_i]| I \left \{ |E[m_{1, n}(X_i, W_i) m_{0, n}(X_i, W_i) | X_i]| > \frac{\lambda}{6} \right \} \right ] = 0~. \]
Similar arguments establish
\begin{align*}
\lim_{\lambda \to \infty} \limsup_{n \to \infty} E \left [ |E[Y_i(1) m_{1, n}(X_i, W_i) | X_i]| I \left \{ |E[Y_i(1) m_{1, n}(X_i, W_i) | X_i]| > \frac{\lambda}{12} \right \} \right ] & = 0 \\
\lim_{\lambda \to \infty} \limsup_{n \to \infty} E \left [ |E[Y_i(1) m_{0, n}(X_i, W_i) | X_i]| I \left \{ |E[Y_i(1) m_{0, n}(X_i, W_i) | X_i]| > \frac{\lambda}{12} \right \} \right ] & = 0~.
\end{align*}
Therefore, \eqref{eq:sn-wlln} follows. The conclusion then follows from \eqref{eq:sn-d}--\eqref{eq:sn-wlln} and Assumption \ref{ass:md}(a).
\end{proof}

\begin{lemma} \label{lem:lind-ui}
Suppose Assumptions \ref{ass:Q}--\ref{ass:close} and \ref{ass:md} hold. Then,
\[ \lim_{\gamma \to \infty} \limsup_{n \to \infty} E[|\phi_{1, n, i} - E[\phi_{1, n, i} | X_i]|^2 I \{|\phi_{1, n, i} - E[\phi_{1, n, i} | X_i]|^2 > \gamma\}] = 0~. \]
\end{lemma}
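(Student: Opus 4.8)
The plan is to read the claim as the assertion that the family $\{(\phi_{1,n,i}-E[\phi_{1,n,i}\mid X_i])^2:n\ge1\}$ is uniformly integrable in the $\limsup$ sense, and to build it up from the uniform integrability of its constituent pieces. Writing $\phi_{1,n,i}=Y_i(1)-\tfrac12(m_{1,n}(X_i,W_i)+m_{0,n}(X_i,W_i))$, I would first decompose
\[
\phi_{1,n,i}-E[\phi_{1,n,i}\mid X_i]=A_i-\tfrac12 B_{n,i}-\tfrac12 C_{n,i},
\]
where $A_i=Y_i(1)-E[Y_i(1)\mid X_i]$, $B_{n,i}=m_{1,n}(X_i,W_i)-E[m_{1,n}(X_i,W_i)\mid X_i]$, and $C_{n,i}=m_{0,n}(X_i,W_i)-E[m_{0,n}(X_i,W_i)\mid X_i]$. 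Using $(a+b+c)^2\le 3(a^2+b^2+c^2)$ gives the pointwise domination $(\phi_{1,n,i}-E[\phi_{1,n,i}\mid X_i])^2\le 3A_i^2+\tfrac34 B_{n,i}^2+\tfrac34 C_{n,i}^2=:V_{n,i}$. Since $0\le W\le V$ implies $E[W\,I\{W>\gamma\}]\le E[V\,I\{V>\gamma\}]$, it suffices to prove that $\{V_{n,i}\}$ is uniformly integrable, and by the elementary splitting inequality \eqref{eq:ui-indicator1} (applied with three summands) this in turn reduces to establishing uniform integrability of each of $\{A_i^2\}$, $\{B_{n,i}^2\}$, and $\{C_{n,i}^2\}$ separately.

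The term $\{A_i^2\}$ is the easy one: $A_i^2$ does not depend on $n$ and $E[A_i^2]=E[\var[Y_i(1)\mid X_i]]\le E[Y_i^2(1)]<\infty$ by Assumption \ref{ass:Q}(b), so $E[A_i^2 I\{A_i^2>\gamma\}]\to0$ as $\gamma\to\infty$ by dominated convergence, which is more than enough. For $\{B_{n,i}^2\}$, I would use $B_{n,i}^2\le 2m_{1,n}^2(X_i,W_i)+2E[m_{1,n}(X_i,W_i)\mid X_i]^2$ together with \eqref{eq:ui-indicator1}, reducing matters to the uniform integrability of $\{m_{1,n}^2(X_i,W_i)\}$ and of $\{E[m_{1,n}(X_i,W_i)\mid X_i]^2\}$. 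The first is exactly Assumption \ref{ass:md}(b). For the second, Jensen's inequality gives $E[m_{1,n}(X_i,W_i)\mid X_i]^2\le E[m_{1,n}^2(X_i,W_i)\mid X_i]$, so it is dominated by $\{E[m_{1,n}^2(X_i,W_i)\mid X_i]\}$, which is uniformly integrable by Lemma \ref{lem:ui} applied with $\phi_n=m_{1,n}^2(X_i,W_i)$ (whose uniform integrability is again Assumption \ref{ass:md}(b)). The term $\{C_{n,i}^2\}$ is handled identically with $m_{0,n}$ in place of $m_{1,n}$.

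Assembling the pieces, taking expectations in the domination chain, then $\limsup_{n\to\infty}$, and finally letting $\gamma\to\infty$, yields the claim. The only genuinely non-mechanical step is the passage from the uniform integrability of $\{m_{d,n}^2(X_i,W_i)\}$ to that of the conditional-expectation families $\{E[m_{d,n}^2(X_i,W_i)\mid X_i]\}$; this is precisely what Lemma \ref{lem:ui} is designed to supply, so the proof is essentially bookkeeping once that lemma is invoked. Everything else is an application of the elementary inequalities \eqref{eq:ui-indicator1}--\eqref{eq:ui-indicator2} and the fact that a finite sum of, and anything dominated by, uniformly integrable families is uniformly integrable.
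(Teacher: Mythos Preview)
Your proof is correct and uses the same toolkit as the paper (Assumption~\ref{ass:md}(b), Lemma~\ref{lem:ui}, Jensen, and the splitting inequality~\eqref{eq:ui-indicator1}), but the decomposition is organized differently. The paper first splits $\phi_{1,n,i}-E[\phi_{1,n,i}\mid X_i]$ into the \emph{uncentered} pieces $\phi_{1,n,i}$ and $E[\phi_{1,n,i}\mid X_i]$ via $(a+b)^2\le 2(a^2+b^2)$, and then expands each square in $Y_i(1)$, $m_{1,n}$, $m_{0,n}$; this produces six terms including cross products like $E[Y_i(1)\mid X_i]E[m_{1,n}(X_i,W_i)\mid X_i]$, which are handled using \eqref{eq:ui-indicator2}. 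Your route --- decomposing directly into the three \emph{centered} summands $A_i$, $B_{n,i}$, $C_{n,i}$ and then bounding the square by $3(A_i^2+\tfrac14 B_{n,i}^2+\tfrac14 C_{n,i}^2)$ --- sidesteps all cross terms, so you never need \eqref{eq:ui-indicator2} (your closing reference to it is superfluous). The trade-off is that you must invoke the domination principle $0\le W\le V\Rightarrow E[W\,I\{W>\gamma\}]\le E[V\,I\{V>\gamma\}]$ twice (once for $V_{n,i}$, once inside the $B_{n,i}^2$ bound), whereas the paper works only with the explicit indicator inequalities; both are elementary, and your version is a bit shorter overall.
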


\begin{proof}
Note
\begin{align*}
& E[|\phi_{1, n, i} - E[\phi_{1, n, i} | X_i]|^2 I \{|\phi_{1, n, i} - E[\phi_{1, n, i} | X_i]|^2 > \gamma\}] \\
& \lesssim E \left [ (\phi_{1, n, i}^2 + E[\phi_{1, n, i} | X_i]^2)  I \left \{ \phi_{1, n, i}^2 + E[\phi_{1, n, i} | X_i]^2 > \frac{\gamma}{2} \right \} \right ] \\
& \lesssim E \left [ \phi_{1, n, i}^2  I \left \{ \phi_{1, n, i}^2 > \frac{\gamma}{4} \right \} \right ] + E \left [ E[\phi_{1, n, i} | X_i]^2  I \left \{ E[\phi_{1, n, i} | X_i]^2 > \frac{\gamma}{4} \right \} \right ]~.
\end{align*}
where the first inequality follows from $(a + b)^2 \leq 2(a^2 + b^2)$ and the second inequality follows from \eqref{eq:ui-indicator1}. Next, note
\begin{align*}
& E \left [ E[\phi_{1, n, i} | X_i]^2  I \left \{ E[\phi_{1, n, i} | X_i]^2 > \frac{\gamma}{4} \right \} \right ] \\
& \lesssim E \left [ E[Y_i(1) | X_i]^2  I \left \{ E[Y_i(1) | X_i]^2 > \frac{\gamma}{24} \right \} \right ] + E \left [ E[m_{1, n}(X_i, W_i) | X_i]^2  I \left \{ E[m_{1, n}(X_i, W_i) | X_i]^2 > \frac{\gamma}{6} \right \} \right ] \\
& \hspace{3em} + E \left [ E[m_{0, n}(X_i, W_i) | X_i]^2  I \left \{ E[m_{0, n}(X_i, W_i) | X_i]^2 > \frac{\gamma}{6} \right \} \right ] \\
& \hspace{3em} + E \left [ |E[Y_i(1) | X_i] E[m_{1, n}(X_i, W_i) | X_i]|  I \left \{ |E[Y_i(1) | X_i] E[m_{1, n}(X_i, W_i) | X_i]| > \frac{\gamma}{24} \right \} \right ] \\
& \hspace{3em} + E \left [ |E[Y_i(1) | X_i] E[m_{0, n}(X_i, W_i) | X_i]|  I \left \{ |E[Y_i(1) | X_i] E[m_{0, n}(X_i, W_i) | X_i]| > \frac{\gamma}{24} \right \} \right ] \\
& \hspace{3em} + E \left [ |E[m_{1, n}(X_i, W_i) | X_i] E[m_{0, n}(X_i, W_i) | X_i]|  I \left \{ |E[m_{1, n}(X_i, W_i) | X_i] E[m_{0, n}(X_i, W_i) | X_i]| > \frac{\gamma}{12} \right \} \right ] \\
& \lesssim E \left [ E[Y_i^2(1) | X_i]  I \left \{ E[Y_i^2(1) | X_i] > \frac{\gamma}{24} \right \} \right ] + E \left [ E[m_{1, n}^2(X_i, W_i) | X_i]  I \left \{ E[m_{1, n}^2(X_i, W_i) | X_i] > \frac{\gamma}{6} \right \} \right ] \\
& \hspace{3em} + E \left [ E[m_{0, n}^2(X_i, W_i) | X_i]  I \left \{ E[m_{0, n}^2(X_i, W_i) | X_i] > \frac{\gamma}{6} \right \} \right ] \\
& \hspace{3em} + E \left [ |E[Y_i(1) | X_i]| I \left \{ |E[Y_i(1) | X_i]| > \sqrt{\frac{\gamma}{24}} \right \} \right ] \\
& \hspace{3em} + E \left [ |E[m_{1, n}(X_i, W_i) | X_i]| I \left \{ |E[m_{1, n}(X_i, W_i) | X_i]| > \sqrt{\frac{\gamma}{24}} \right \} \right ] \\
& \hspace{3em} + E \left [ |E[m_{0, n}(X_i, W_i) | X_i]| I \left \{ |E[m_{0, n}(X_i, W_i) | X_i]| > \sqrt{\frac{\gamma}{24}} \right \} \right ] \\
& \hspace{3em} + E \left [ |E[m_{1, n}(X_i, W_i) | X_i]| I \left \{ |E[m_{1, n}(X_i, W_i) | X_i]| > \sqrt{\frac{\gamma}{12}} \right \} \right ] \\
& \hspace{3em} + E \left [ |E[m_{0, n}(X_i, W_i) | X_i]| I \left \{ |E[m_{0, n}(X_i, W_i) | X_i]| > \sqrt{\frac{\gamma}{12}} \right \} \right ] \\
& \leq E \left [ E[Y_i^2(1) | X_i]  I \left \{ E[Y_i^2(1) | X_i] > \frac{\gamma}{24} \right \} \right ] + E \left [ E[m_{1, n}^2(X_i, W_i) | X_i]  I \left \{ E[m_{1, n}^2(X_i, W_i) | X_i] > \frac{\gamma}{6} \right \} \right ] \\
& \hspace{3em} + E \left [ E[m_{0, n}^2(X_i, W_i) | X_i]  I \left \{ E[m_{0, n}^2(X_i, W_i) | X_i] > \frac{\gamma}{6} \right \} \right ] \\
& \hspace{3em} + E \left [ E[Y_i^2(1) | X_i] I \left \{ E[Y_i^2(1) | X_i] > \frac{\gamma}{24} \right \} \right ] \\
& \hspace{3em} + E \left [ E[m_{1, n}^2(X_i, W_i) | X_i] I \left \{ E[m_{1, n}^2(X_i, W_i) | X_i] > \frac{\gamma}{24} \right \} \right ] \\
& \hspace{3em} + E \left [ E[m_{0, n}^2(X_i, W_i) | X_i] I \left \{ E[m_{0, n}^2(X_i, W_i) | X_i] > \sqrt{\frac{\gamma}{24}} \right \} \right ] \\
& \hspace{3em} + E \left [ E[m_{1, n}^2(X_i, W_i) | X_i] I \left \{ E[m_{1, n}^2(X_i, W_i) | X_i] > \sqrt{\frac{\gamma}{12}} \right \} \right ] \\
& \hspace{3em} + E \left [ E[m_{0, n}^2(X_i, W_i) | X_i] I \left \{ E[m_{0, n}^2(X_i, W_i) | X_i] > \sqrt{\frac{\gamma}{12}} \right \} \right ]~,
\end{align*}
where the first inequality follows from \eqref{eq:ui-indicator1}, the second one follows from the conditional Jensen's inequality and \eqref{eq:ui-indicator2}, and the third one follows again from the conditional Jensen's inequality. It then follows from Lemma \ref{lem:ui} together with Assumptions \ref{ass:Q}(b) and \ref{ass:md}(b) that
\[ \lim_{\gamma \to \infty} \limsup_{n \to \infty} E \left [ E[\phi_{1, n, i} | X_i]^2  I \left \{ E[\phi_{1, n, i} | X_i]^2 > \frac{\gamma}{4} \right \} \right ] = 0~. \]
Similar arguments lead to
\[ \lim_{\gamma \to \infty} \limsup_{n \to \infty} E \left [ \phi_{1, n, i}^2  I \left \{ \phi_{1, n, i}^2 > \frac{\gamma}{4} \right \} \right ] = 0~. \]
The conclusion then follows.
\end{proof}

\begin{lemma} \label{lem:En'}
	Suppose Assumptions in Theorem \ref{thm:LASSO2} hold. Then,
	\[P \left\{ \left\vert \frac{1}{n}\sum_{i \in [2n]}I\{D_i=d\}\epsilon_{n,i}(d) - E[\epsilon_{n,i}(d)] \right\vert \leq  \sqrt{\frac{\log (2n p_n)}{n}}\right\} \rightarrow 1\]
	and
	\[ P \left \{ \left \|\Omega_n^{-1}(d)\frac{1}{n} \sum_{1 \leq i \leq 2n} I \{D_i = d\} (\psi_{n, i} \epsilon_{n, i}(d) - E[\psi_{n, i} \epsilon_{n, i}(d)]) \right \|_\infty \leq \frac{6\bar{\sigma}}{\ubar\sigma}  \sqrt{\frac{\log (2n p_n)}{n}} \right \} \to 1~. \]
\end{lemma}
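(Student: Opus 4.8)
The plan is to exploit the within-pair structure of the ``matched pairs'' design to split each normalized score into a \emph{sampling} component, which is an i.i.d.\ average over all $2n$ units and does not involve the treatment assignment, and a \emph{design} component, which is a Rademacher-weighted sum of within-pair differences. Fix $d$ and abbreviate $g_{n,i,l} = \psi_{n,i,l}\epsilon_{n,i}(d)$, $\mu_{n,l} = E[g_{n,i,l}]$, and $\omega_{n,l} = \omega_{n,l}(d)$. Because exactly one unit in each pair has $D_i = d$, writing $\eta_j = 2D_{\pi(2j-1)} - 1 \in \{-1, 1\}$ yields the identity
\begin{equation*}
\frac{1}{n}\sum_{1 \leq i \leq 2n} I\{D_i = d\}(g_{n,i,l} - \mu_{n,l}) = \frac{1}{2n}\sum_{1 \leq i \leq 2n}(g_{n,i,l} - \mu_{n,l}) \pm \frac{1}{2n}\sum_{1 \leq j \leq n}\eta_j (g_{n,\pi(2j-1),l} - g_{n,\pi(2j),l})~,
\end{equation*}
where the sign depends only on $d$. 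By Assumption \ref{ass:treatment}, conditional on $X^{(n)}$ the variables $\eta_j$ are i.i.d.\ Rademacher and independent of $(Y^{(n)}(d), W^{(n)})$, hence they remain i.i.d.\ Rademacher conditional on the full data that determine $g_{n,i,l}$. The first (scalar) claim follows from this same decomposition applied to $\epsilon_{n,i}(d)$ in place of $g_{n,i,l}$: the sampling part is $O_P(n^{-1/2})$ by Chebyshev's inequality and Assumption \ref{ass:LASSO2-moments}(d), while the design part is controlled by a conditional Chebyshev bound whose conditional variance is of order $n^{-1}\cdot O_P(1)$; both are $o_P(\sqrt{\log(2np_n)/n})$.

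For the vector claim I would bound $\max_l |T_{n,l}|/\omega_{n,l}$ and $\max_l |S_{n,l}|/\omega_{n,l}$ separately, where $S_{n,l}$ and $T_{n,l}$ denote the sampling and design components above. Conditioning on the data renders $T_{n,l}$ a weighted Rademacher sum, so Hoeffding's inequality gives, coordinatewise,
\begin{equation*}
P\{|T_{n,l}| > t \mid \mathrm{data}\} \leq 2\exp\left(-\frac{2n t^2}{V_{n,l}}\right)~, \qquad V_{n,l} = \frac{1}{n}\sum_{1 \leq j \leq n}(g_{n,\pi(2j-1),l} - g_{n,\pi(2j),l})^2 \leq \frac{2}{n}\sum_{1 \leq i \leq 2n}g_{n,i,l}^2~.
\end{equation*}
A union bound over $l \in \{1, \ldots, p_n\}$ with $t$ proportional to $\sqrt{\log(2np_n)/n}$ then reduces the problem to a uniform-in-$l$ bound $\max_l \tfrac{1}{2n}\sum_i g_{n,i,l}^2 = O_P(\bar\sigma^2)$. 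This last bound I would establish exactly as the bound on $\nu_n^2$ in Step 2 of the proof of Theorem \ref{thm:LASSO2}: symmetrize, apply Theorem 4.12 of \cite{ledoux1991probability} and the envelope $\Xi_n$ to pass from the squares to the first power, and invoke Corollary 5.1 of \cite{chernozhukov2014gaussian}, using Assumption \ref{ass:LASSO2-moments}(b),(d) for the moment bounds and the rate $\Xi_n^2 (\log p_n)^7/n \to 0$ from Assumption \ref{ass:LASSO2-penalty}(b). For the sampling part $S_{n,l}$, which is a self-normalized i.i.d.\ average, I would apply a self-normalized moderate-deviation inequality in the manner of \cite{belloni2017program}, again under the same moment and rate conditions, to obtain $\max_l |S_{n,l}|/\omega_{n,l} \lesssim \sqrt{\log(2np_n)/n}$ with probability approaching one. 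Combining the two high-probability events and using $\omega_{n,l} \geq \ubar\sigma$ together with $\max_l V_{n,l} \lesssim \bar\sigma^2$ to calibrate the constants then yields the stated bound with factor $6\bar\sigma/\ubar\sigma$.

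The main obstacle I anticipate is the uniform-in-$l$ control of the second-moment proxy $\max_l \tfrac{1}{2n}\sum_i g_{n,i,l}^2$ over the $p_n \gg n$ coordinates: the summands are products $\psi_{n,i,l}\epsilon_{n,i}(d)$ for which only a $q$-th moment of $\psi_{n,i,l}$ (with $q > 2$) and a fourth moment of $\epsilon_{n,i}(d)$ are available, so a naive coordinatewise union bound is insufficient. The resolution is the truncation/symmetrization argument combined with the sup-norm envelope $\Xi_n$ and the rate condition $\Xi_n^2 (\log p_n)^7/n \to 0$, precisely the device already used to bound $\nu_n^2$ earlier; checking that this argument survives the additional within-pair differencing---which at most doubles the relevant quantities, as reflected in $V_{n,l} \leq \tfrac{2}{n}\sum_i g_{n,i,l}^2$---is the crux of the proof.
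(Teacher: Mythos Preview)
Your scalar claim and the overall ``sampling plus design'' split are sound and mirror the paper's strategy. The genuine gap is in the design component $T_{n,l}$. Your Hoeffding bound requires
\[
\max_{1\leq l\leq p_n}\frac{1}{2n}\sum_{1\leq i\leq 2n} g_{n,i,l}^2
=\max_{1\leq l\leq p_n}\frac{1}{2n}\sum_{1\leq i\leq 2n}\psi_{n,i,l}^2\,\epsilon_{n,i}^2(d)=O_P(1)~,
\]
and you claim this follows ``exactly as the bound on $\nu_n^2$.'' It does not. The $\nu_n^2$ argument contracts $e_i\psi_{n,i,l}^2\mapsto e_i\psi_{n,i,l}$ via Theorem 4.12 of \cite{ledoux1991probability}, which is legitimate because $x\mapsto x^2/(2\Xi_n)$ is a contraction on $[-\Xi_n,\Xi_n]$, the \emph{deterministic} range of $\psi_{n,i,l}$. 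For $g_{n,i,l}=\psi_{n,i,l}\epsilon_{n,i}(d)$ the range is $[-\Xi_n|\epsilon_{n,i}(d)|,\Xi_n|\epsilon_{n,i}(d)|]$, so the Lipschitz constant is $2\Xi_n|\epsilon_{n,i}(d)|$, a random, unbounded quantity (only $E[\epsilon_{n,i}^4(d)]<\infty$ is assumed). The contraction step therefore fails. A truncation of $\epsilon_{n,i}(d)$ at level $M$ gives only $\max_l V_{n,l}=O_P(M+\Xi_n^2/M)=O_P(\Xi_n)$ after optimizing, and with $V_{n,l}\asymp\Xi_n$ your Hoeffding/union bound yields $\max_l|T_{n,l}|=O_P(\Xi_n^{1/2}\sqrt{\log p_n/n})$, which misses the target when $\Xi_n\to\infty$.

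The paper avoids this precisely by \emph{not} conditioning on the full data in the design piece. It first inserts $E[\psi_{n,i,l}\epsilon_{n,i}(d)\mid X_i]$, so the Rademacher sum involves within-pair differences of these conditional means. The Hoeffding variance proxy is then bounded by $\tfrac{2}{n}\sum_i E[\psi_{n,i,l}^2\epsilon_{n,i}^2(d)\mid X_i]$, and the summands here satisfy $(E[\psi_{n,i,l}^2\epsilon_{n,i}^2(d)\mid X_i])^2\le E[\psi_{n,i,l}^4\mid X_i]\,E[\epsilon_{n,i}^4(d)\mid X_i]\le \Xi_n^2 C_1 c_0$ by Cauchy--Schwarz and Assumption \ref{ass:LASSO2-moments}(b),(d); this is exactly the boundedness needed for the symmetrization/Hoeffding argument (the paper's event $\mathcal E_{n,3}$). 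The residual piece $\psi_{n,i,l}\epsilon_{n,i}(d)-E[\psi_{n,i,l}\epsilon_{n,i}(d)\mid X_i]$ is handled, conditionally on $(X^{(n)},D^{(n)})$, by the high-dimensional CLT of \cite{chernozhukov2017central}, which needs only second-moment control and not a sup-norm envelope. In short, the extra conditioning on $X$ is not cosmetic: it is what converts the unbounded products into summands whose squares have bounded second moments, and that is the step your proposal is missing.
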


\begin{proof}
For the first result, we note that 
\begin{align*}
\left\vert \frac{1}{n}\sum_{i \in [2n]}I\{D_i=d\}\epsilon_{n,i}(d) - E[\epsilon_{n,i}(d)] \right\vert & \leq \left\vert \frac{1}{n}\sum_{i \in [2n]}I\{D_i=d\}(\epsilon_{n,i}(d) - E[\epsilon_{n,i}(d)|X_i]) \right\vert \\
& + \left\vert \frac{1}{n}\sum_{i \in [2n]}(I\{D_i=d\}-1/2)( E[\epsilon_{n,i}(d)|X_i] - E[\epsilon_{n,i}(d)]) \right\vert \\
&+ \left\vert \frac{1}{2n}\sum_{i \in [2n]}( E[\epsilon_{n,i}(d)|X_i] - E[\epsilon_{n,i}(d)]) \right\vert.
\end{align*}
The first two terms on the RHS of the above display are $O_P(1/\sqrt{n})$. The last term on the RHS is also $O_P(1/\sqrt{n})$ by Chebyshev's inequality. This implies the desired result. 

For the second result, define
	\begin{align*}
	\mathcal E_{n,0}(d)  = \begin{pmatrix}
	& \max_{d \in \{0, 1\}} \frac{1}{2n} \sum_{1 \leq i \leq 2n} E[\epsilon_{n, i}^4(d) | X_i] \leq c_0 < \infty~, \\
	& \min_{1 \leq l \leq p_n} \frac{1}{n} \sum_{1 \leq i \leq 2n} I \{D_i = d\} \var[\psi_{n, i, l} \epsilon_{n, i}(d) | X_i] \geq \ubar \sigma^2 > 0~,
	\end{pmatrix}
	\end{align*}

	\begin{align*}
	\mathcal E_{n,1}(d) = \left\{\left\Vert\frac{1}{n}\sum_{1 \leq i \leq 2n}I \{D_i = d\} (\psi_{n,i}\epsilon_{n, i}(d) - E[\psi_{n,i}\epsilon_{n, i}(d)|X_i]) \right\Vert_\infty \leq 2.04 \overline{\sigma} \sqrt{\log (2n p_n)/n}   \right\}~,
	\end{align*}
	\begin{align*}
	\mathcal E_{n,2}(d) = \left\{\left\Vert\frac{1}{n}\sum_{1 \leq i \leq 2n}I \{D_i = d\} ( E[\psi_{n,i}\epsilon_{n, i}(d)|X_i]-E[\psi_{n,i}\epsilon_{n, i}(d)]) \right\Vert_\infty \leq 3.96 \overline{\sigma} \sqrt{\log (2n p_n)/n}   \right\}~,
	\end{align*}
	\begin{align*}
	\mathcal E_{n,3}(d) = \left\{\max_{1 \leq l \leq p_n}\left \vert \frac{1}{2n} \sum_{1 \leq i \leq 2n} (E[\psi_{n,i,l}^2\epsilon_{n, i}^2(d)|X_i] - E[\psi_{n,i,l}^2\epsilon_{n, i}^2(d)])\right\vert^{1/2}  \leq 0.01 \overline{\sigma}\right\}~,
	\end{align*}
	and
	\begin{align*}
	\mathcal E_{n,4}(d) = \left\{\max_{1 \leq l \leq p_n}\left \vert \frac{1}{2n} \sum_{1 \leq i \leq 2n} (2I\{D_i = d\}-1) (E[\epsilon_{n, i}^2(d)\psi_{n, i,l}^2|X_i] - E[\epsilon_{n, i}^2(d)\psi_{n, i,l}^2] )\right\vert^{1/2}  \leq 0.01 \overline{\sigma}\right\}~.
	\end{align*}

	We aim to show that $P \{\mathcal E_{n,1}(d)\} \to 1$ and $P \{\mathcal E_{n,2}(d)\} \to 1$. Then, by letting $C = 6\bar{\sigma}/\ubar\sigma$ which implies
	\begin{align*}
	P \{\mathcal E_{n}(d)\} & = 1-P \{\mathcal E_{n}^c(d)\} \\
	& \geq 1- P \left \{ \left \|\frac{1}{n} \sum_{1 \leq i \leq 2n} I \{D_i = d\} (\psi_{n, i} \epsilon_{n, i}(d) - E[\psi_{n, i} \epsilon_{n, i}(d)]) \right \|_\infty \geq C\ubar\sigma\sqrt{\frac{\log (2n p_n)}{n}} \right \}\\
	& = 1- P \left \{ \left \|\frac{1}{n} \sum_{1 \leq i \leq 2n} I \{D_i = d\} (\psi_{n, i} \epsilon_{n, i}(d) - E[\psi_{n, i} \epsilon_{n, i}(d)]) \right \|_\infty \geq 6\bar{\sigma}\sqrt{\frac{\log (2n p_n)}{n}} \right \}\\
	& \geq 1 - P \{\mathcal E_{n,1}^c(d)\}-P\{\mathcal E_{n,2}^c(d)\} \to 1~.
	\end{align*}
	
	First, we show $P \{\mathcal E_{n,3}(d)\} \to 1$. Let
	\[ t_n = C\sqrt{\frac{\log(n p_n) \Xi_n^2 }{n}} \to 0 \]
	for some sufficiently large constant $C>0$ and $\{e_i\}_{1 \leq i \leq 2n}$ be a sequence of i.i.d.\ Rademacher random variables independent of everything else. Then, for any fixed $t>0$, we have
	\begin{align*}
	& \left(1 - \frac{4\max_{1 \leq l \leq p_n} \var[E[\psi_{n,i,l}^2\epsilon_{n, i}^2(d)|X_i]]}{2nt^2}\right)P\left\{\max_{1 \leq l \leq p_n}\left \vert \frac{1}{2n} \sum_{1 \leq i \leq 2n} \left[ E[\psi_{n,i,l}^2\epsilon_{n, i}^2(d)|X_i] - E[\psi_{n,i,l}^2\epsilon_{n, i}^2(d)]\right]\right\vert \geq t \right\} \\
	& \leq 2P\left\{\max_{1 \leq l \leq p_n}\left \vert \frac{1}{2n} \sum_{1 \leq i \leq 2n} 4e_iE[\psi_{n,i,l}^2\epsilon_{n, i}^2(d)|X_i] \right\vert  \geq t \right\} \\
	& = o(1)+2E \left[ P\left\{\max_{1 \leq l \leq p_n}\left \vert \frac{1}{2n} \sum_{1 \leq i \leq 2n} 4e_iE[\psi_{n,i,l}^2\epsilon_{n, i}^2(d)|X_i] \right\vert  \geq t \bigg|X^{(n)}\right\}I\{\mathcal{E}_{n,0}(d)\} \right] \\
	& \lesssim o(1) + p_n  \exp\left(-\frac{nt^2}{\Xi_n^2 c} \right) = o(1),
	\end{align*}
	where the first inequality is by \citet[Lemma 2.3.7]{vandervaart:wellner:1996}, the second inequality is by the Hoeffding's inequality conditional on $X^{(n)}$ and the fact that, on $\mathcal{E}_{n,0}(d)$, 
	\begin{align*}
	\frac{1}{2n}\sum_{1 \leq i \leq 2n} (E[\psi_{n,i,l}^2\epsilon_{n, i}^2(d)|X_i])^2 & \leq \frac{1}{2n}\sum_{1 \leq i \leq 2n} E[\psi_{n,i,l}^4|X_i] E[\epsilon_{n, i}^4(d)|X_i] \\
	& \leq \frac{\Xi_n^2}{2n}\sum_{1 \leq i \leq 2n} E[\psi_{n,i,l}^2|X_i] E[\epsilon_{n, i}^4(d)|X_i] \leq \Xi_n^2 Cc_0,
	\end{align*}
	where $C$ is a fixed constant, and the last equality is by the fact that $\log(p_n) \Xi_n^2 = o(n)$. Furthermore, we note that 
	\begin{align*}
	& \frac{4\max_{1 \leq l \leq p_n} \var[E[\psi_{n,i,l}^2\epsilon_{n, i}^2(d)|X_i]]}{2n} \\
	& \lesssim \frac{\max_{1 \leq l \leq p_n} E \left[ E[\psi_{n,i,l}^4|X_i]E[\epsilon_{n, i}^4(d)|X_i]\right]}{n} \\
	& \lesssim \frac{\Xi_n^2 \max_{1 \leq l \leq p_n} E \left[ E[\psi_{n,i,l}^2|X_i]E[\epsilon_{n, i}^4(d)|X_i]\right]}{n} \\
	& \lesssim \frac{\Xi_n^2 E[\epsilon_{n, i}^4(d)]}{n} \\
	& = o(1).
	\end{align*}
	Therefore, we have 
	\begin{align*}
	P\left\{\max_{1 \leq l \leq p_n}\left \vert \frac{1}{2n} \sum_{1 \leq i \leq 2n} \left[ E[\psi_{n,i,l}^2\epsilon_{n, i}^2(d)|X_i] - E[\psi_{n,i,l}^2\epsilon_{n, i}^2(d)]\right]\right\vert \geq t \right\} = o(1)
	\end{align*}
	for any fixed $t>0$, which is the desired result.

	Next, we show $P\{\mathcal E_{n,4}(d)\} \to 1$. Define $a_{n,i,l} = E[\epsilon_{n, i}^2(d)\psi_{n, i,l}^2|X_i] - E[\epsilon_{n, i}^2(d)\psi_{n, i,l}^2]$. Then, we have
	\begin{align*}
	& P \left \{\max_{1 \leq l \leq p_n}\left \vert \frac{1}{2n} \sum_{1 \leq i \leq 2n} (2I\{D_i = d\}-1) (E[\epsilon_{n, i}^2(d)\psi_{n, i,l}^2|X_i] - E[\epsilon_{n, i}^2(d)\psi_{n, i,l}^2] )\right\vert  >t\bigg|X^{(n)} \right \}I\{\mathcal{E}_{n,0}(d)\} \\
	& \leq \sum_{1 \leq l \leq p_n} P \left \{  \left\vert \frac{1}{2n}\sum_{1 \leq j \leq n}(I\{D_{\pi(2j-1)}=d\} - I\{ D_{\pi(2j)}=d\})(a_{n,\pi(2j-1),l} - a_{n,\pi(2j),l})\right\vert > t \bigg|X^{(n)} \right\}I\{\mathcal{E}_{n,0}(d)\} \\
	& \leq \sum_{1 \leq l \leq p_n} \exp \left ( - \frac{2n t^2}{\frac{1}{n} \sum_{1 \leq j \leq n} (a_{n, \pi(2j - 1),l} - a_{n, \pi(2j),l})^2} \right )I\{\mathcal{E}_{n,0}(d)\} \\
	& \leq \exp\left(\log(p_n) - \frac{2n t^2}{\Xi_n^2 c^2} \right)~,
	\end{align*}
	where, conditional on $X^{(n)}$, $\{I\{D_{\pi(2j-1)}=d\} - I\{ D_{\pi(2j)}=d\}\}_{1 \leq j \leq n}$ is a sequence of i.i.d. Rademacher random variables, the second last inequality is by Hoeffding's inequality, and the last inequality is by  that, on $\mathcal{E}_{n,0}(d)$, 
	\begin{align*}
	& \left(\frac{1}{n} \sum_{1 \leq j \leq n} (a_{n, \pi(2j - 1),l} - a_{n, \pi(2j),l})^2\right)^{1/2} \\
	& \leq \left(\frac{1}{n} \sum_{1 \leq j \leq n} (E[\psi_{n,\pi(2j-1),l}^2\epsilon_{n, i}^2(d)|X_{\pi(2j-1)}])^2\right)^{1/2} + \left(\frac{1}{n} \sum_{1 \leq j \leq n} (E[\psi_{n,\pi(2j),l}^2\epsilon_{n, i}^2(d)|X_{\pi(2j)}])^2\right)^{1/2} \\
	& \leq \left(\frac{2}{n} \sum_{1 \leq i \leq 2n} (E[\psi_{n,i,l}^2\epsilon_{n, i}^2(d)|X_i])^2\right)^{1/2} \\
	& \leq \Xi_n c~.
	\end{align*}
	By letting $t = C\sqrt{\frac{ \log(p_n) \Xi_n^2 }{n}}$ for some sufficiently large $C$ and noting that $P\{\mathcal{E}_{n,0}(d)\} \rightarrow 1$, we have 
	\begin{align*}
	\max_{1 \leq l \leq p_n}\left \vert  \sum_{1 \leq i \leq 2n} \frac{(2I\{D_i = d\}-1) (E[\epsilon_{n, i}^2(d)\psi_{n, i,l}^2|X_i] - E[\epsilon_{n, i}^2(d)\psi_{n, i,l}^2] )}{2n}\right\vert = O_p\left(\sqrt{\frac{ \log(p_n) \Xi_n^2 }{n}} \right)~,
	\end{align*}
	and thus, $P\{\mathcal E_{n,4}(d)\} \to 1$. 
	
	Next, we show $P \{\mathcal E_{n,1}(d)\} \to 1$. We note that, for $d \in \{0, 1\}$, conditional on $(D^{(n)},X^{(n)})$, $\{\psi_{n,i}\epsilon_{n, i}(d)\}_{1 \leq i \leq 2n}$ are independent. In what follows, we couple
	\[ \mathbb U_n = \frac{1}{n}\sum_{1 \leq i \leq 2n}I \{D_i = d\} (\psi_{n,i}\epsilon_{n, i}(d) - E[\psi_{n,i}\epsilon_{n, i}(d)|X_i]) \]
	with a centered Gaussian random vector as in Theorem 2.1 in \cite{chernozhukov2017central}. Let $Z = (Z_1, \ldots, Z_{p_n})$ be a Gaussian random vector with $E[Z_l] = 0$ for $1 \leq l \leq p_n$ and $\var[Z] = \var[\mathbb U_n | X^{(n)}, D^{(n)}]$ that additionally satisfies the conditions of that theorem. Specifically, $Z = (Z_1,\cdots,Z_{p_n})$ is a centered Gaussian random vector in $R^{p_n}$ such that on $\mathcal E_{n,0}(d) \cap \mathcal E_{n,3}(d) \cap \mathcal E_{n,4}(d)$, 
	\begin{align*}
	E[ZZ'] & = \frac{1}{n^2}\sum_{1 \leq i \leq 2n} I\{D_i = d\} E[\epsilon_{n, i}^2(d)\psi_{n, i}\psi_{n, i}'|X_i] \\
	& - \frac{1}{n}\left(\frac{1}{n}\sum_{1 \leq i \leq 2n} I\{D_i = d\} E[\epsilon_{n, i}(d)\psi_{n, i}|X_i]\right)\left(\frac{1}{n}\sum_{1 \leq i \leq 2n} I\{D_i = d\} E[\epsilon_{n, i}(d)\psi_{n, i}|X_i]\right)'
	\end{align*}
	and 
	\begin{align*}
	\max_{1 \leq l \leq p_n}E[Z_l^2] & \leq \frac{\max_{1 \leq l \leq p_n}\sum_{1 \leq i \leq 2n} I\{D_i = d\} E[\epsilon_{n, i}^2(d)\psi_{n, i,l}^2|X_i]}{n^2} \\
	& \leq \frac{\overline{\sigma}^2}{n} + \frac{\max_{1 \leq l \leq p_n}\sum_{1 \leq i \leq 2n} I\{D_i = d\} (E[\epsilon_{n, i}^2(d)\psi_{n, i,l}^2|X_i] - E[\epsilon_{n, i}^2(d)\psi_{n, i,l}^2] )}{n^2} \\
	& \leq \frac{\overline{\sigma}^2}{n} + \frac{\max_{1 \leq l \leq p_n}\sum_{1 \leq i \leq 2n} (2I\{D_i = d\}-1) (E[\epsilon_{n, i}^2(d)\psi_{n, i,l}^2|X_i] - E[\epsilon_{n, i}^2(d)\psi_{n, i,l}^2] )}{2n^2}
	\\
	& + \frac{\max_{1 \leq l \leq p_n}\sum_{1 \leq i \leq 2n}  (E[\epsilon_{n, i}^2(d)\psi_{n, i,l}^2|X_i] - E[\epsilon_{n, i}^2(d)\psi_{n, i,l}^2] )}{2n^2}\\
	& \leq \frac{1.02\overline{\sigma}^2}{n}~. 
	\end{align*}
	Further define $q(1-\alpha)$ as the $(1-\alpha)$ quantile of $||Z||_\infty$. Then, we have 
	\begin{align*}
	q(1-1/n) \leq \frac{1.02\overline{\sigma}(\sqrt{2 \log (2p_n)} + \sqrt{2 \log(n)})}{\sqrt{n}} \leq 2.04\overline{\sigma}\sqrt{\log (2n p_n)/n},  
	\end{align*}
	where the first inequality is by the last display in the proof of Lemma E.2 in \cite{chetverikov2022analytic} and the second inequality is by the fact that $\sqrt{a} + \sqrt{b} \leq \sqrt{2(a+b)}$ for $a,b>0$. Therefore, we have
	\begin{align*}
	P \{\mathcal E_{n,1}^c(d)) & \leq  P \{\mathcal E_{n,1}^c(d),  \mathcal E_{n,0}(d), \mathcal E_{n,3}(d), \mathcal E_{n,4}(d) \}  + o(1) \\
	& = E  P \{\mathcal E_{n,1}^c(d)| D^{(n)}, X^{(n)}\} I\{\mathcal E_{n,0}(d), \mathcal E_{n,3}(d), \mathcal E_{n,4}(d)\} + o(1) \\
	& \leq E [ P \{||Z||_\infty \geq 2.04 \overline{\sigma} \sqrt{\log (2n p_n)/n} | D^{(n)}, X^{(n)})\} I\{\mathcal E_{n,0}(d), \mathcal E_{n,3}(d), \mathcal E_{n,4}(d)\}] + o(1)\\
	& \leq E [P \{ ||Z||_\infty \geq q(1-1/n) | D^{(n)}, X^{(n)} \}]= o(1),
	\end{align*}
	where the second inequality is by Theorem 2.1 in \cite{chernozhukov2017central}. 
	
	Finally, we turn to $\mathcal E_{n,2}(d)$ with $d=1$. We have 
	\begin{align}
	& \frac{1}{n}\sum_{1 \leq i \leq 2n}I \{D_i = 1\} ( E[\psi_{n,i}\epsilon_{n, i}(1)|X_i]-E[\psi_{n,i}\epsilon_{n, i}(1)]) \notag \\
	& \label{eq:E2'} = \frac{1}{2n}\sum_{1 \leq i \leq 2n}( E[\psi_{n,i}\epsilon_{n, i}(1)|X_i]-E[\psi_{n,i}\epsilon_{n, i}(1)]) + \frac{1}{2n}\sum_{1 \leq i \leq 2n}(2D_i-1)( E[\psi_{n,i}\epsilon_{n, i}(1)|X_i]-E[\psi_{n,i}\epsilon_{n, i}(1)]).
	\end{align}
	Note $\{ E[\psi_{n,i}\epsilon_{n, i}(1)|X_i]-E[\psi_{n,i}\epsilon_{n, i}(1)]\}_{1 \leq i \leq 2n}$ is a sequence of independent centered random variables and
	\begin{align*}
	\max_{1 \leq l \leq p_n} E [(E[\psi_{n,i,l}\epsilon_{n, i}(1)|X_i]-E[\psi_{n,i,l}\epsilon_{n, i}(1)])^2] \leq \overline{\sigma}^2. 
	\end{align*}
	Following Theorem 2.1 in \cite{chernozhukov2017central}, Lemma E.2 in \cite{chetverikov2022analytic}, and similar arguments to the ones above, we have 
	\begin{align}
	P \left \{\left\Vert \frac{1}{2n}\sum_{1 \leq i \leq 2n}( E[\psi_{n,i}\epsilon_{n, i}(1)|X_i]-E[\psi_{n,i}\epsilon_{n, i}(1)])\right\Vert_\infty \leq  \overline{\sigma} \sqrt{2\log (2n p_n)/n}\right) \to 1.
	\label{eq:E21'}
	\end{align}
	For the second term on the RHS of \eqref{eq:E2'}, we define $g_{n,i,l} = E[\psi_{n,i,l}\epsilon_{n, i}(1)|X_i]-E[\psi_{n,i,l}\epsilon_{n, i}(1)]$. We have 
	\begin{align*}
	& P \left \{ \left\Vert \frac{1}{2n}\sum_{1 \leq i \leq 2n}(2D_i-1)E[\psi_{n,i}\epsilon_{n, i}(1)|X_i]-E[\psi_{n,i}\epsilon_{n, i}(1)]\right\Vert_\infty > t \bigg|X^{(n)}\right\} \\
	& \leq \sum_{1 \leq l \leq p_n} P \left \{  \left\vert \frac{1}{2n}\sum_{1 \leq j \leq n}(D_{\pi(2j-1)} - D_{\pi(2j)})(g_{n,\pi(2j-1),l} - g_{n,\pi(2j),l})\right\vert > t \bigg|X^{(n)} \right\} \\
	& \leq \sum_{1 \leq l \leq p_n} \exp \left ( - \frac{2n t^2}{\frac{1}{n} \sum_{1 \leq j \leq n} (g_{n, \pi(2j - 1),l} - g_{n, \pi(2j),l})^2} \right )~,
	\end{align*}
	where, conditional on $X^{(n)}$, $\{(D_{\pi(2j-1)} - D_{\pi(2j)})\}_{1 \leq j \leq n}$ is a sequence of i.i.d. Rademacher random variables and the last inequality is by Hoeffding's inequality. In addition, on $\mathcal E_{n,3}(1)$, we have 
	\begin{align*}
	& \left(\frac{1}{n} \sum_{1 \leq j \leq n} (g_{n, \pi(2j - 1),l} - g_{n, \pi(2j),l})^2\right)^{1/2} \\
	& \leq \left(\frac{1}{n} \sum_{1 \leq j \leq n} (E[\psi_{n,\pi(2j-1),l}\epsilon_{n, i}(1)|X_{\pi(2j-1)}])^2\right)^{1/2} + \left(\frac{1}{n} \sum_{1 \leq j \leq n} (E[\psi_{n,\pi(2j),l}\epsilon_{n, i}(1)|X_{\pi(2j)}])^2\right)^{1/2} \\
	& \leq \left(\frac{2}{n} \sum_{1 \leq i \leq 2n} (E[\psi_{n,i,l}\epsilon_{n, i}(1)|X_i])^2\right)^{1/2} \\
	& \leq \left(\frac{2}{n} \sum_{1 \leq i \leq 2n} E[\psi_{n,i,l}^2\epsilon_{n, i}^2(1)|X_i]\right)^{1/2} \\
	& \leq \left(\frac{2}{n} \sum_{1 \leq i \leq 2n} \left[ E[\psi_{n,i,l}^2\epsilon_{n, i}^2(1)|X_i] - E[\psi_{n,i,l}^2\epsilon_{n, i}^2(1)]\right]\right)^{1/2} + 2 \overline{\sigma} \\
	& \leq 2.02\overline{\sigma}.
	\end{align*}
	Therefore, we have
	\begin{align}
	& P \left \{\left\Vert \frac{1}{2n}\sum_{1 \leq i \leq 2n}(2D_i-1)E[\psi_{n,i}\epsilon_{n, i}(1)|X_i]-E[\psi_{n,i}\epsilon_{n, i}(1)]\right\Vert_\infty > 2.02\sqrt{\frac{\log(n p_n) \overline{\sigma}^2}{n}}\right\} \notag  \\
	&\leq  P \left \{ \left\Vert \frac{1}{2n}\sum_{1 \leq i \leq 2n}(2D_i-1)E[\psi_{n,i}\epsilon_{n, i}(1)|X_i]-E[\psi_{n,i}\epsilon_{n, i}(1)]\right\Vert_\infty >  2.02\sqrt{\frac{\log(n p_n) \overline{\sigma}^2}{n}}, \mathcal E_{n,3}(1) \right\} + o(1) \notag  \\
	& \leq E \left[P \left \{\left\Vert \frac{1}{2n}\sum_{1 \leq i \leq 2n}(2D_i-1)E[\psi_{n,i}\epsilon_{n, i}(1)|X_i]-E[\psi_{n,i}\epsilon_{n, i}(1)]\right\Vert_\infty >  2.02\sqrt{\frac{\log(n p_n) \overline{\sigma}^2}{n}} \bigg|X^{(n)}\right\}I\{\mathcal E_{n,3}(1) \}\right] + o(1) \notag\\
	& = o(1)~.
	\label{eq:E22'}
	\end{align}
	Combining \eqref{eq:E2'}, \eqref{eq:E21'}, \eqref{eq:E22'}, and the fact that $\sqrt{2}+2.02 \leq 3.98$, we have $P\{\mathcal E_{n,2}(1)\} \to 1$. The same result holds for $\mathcal E_{n,2}(0)$. 
\end{proof}

\section{Details for Simulations} \label{sec:sims-details}
The regressors in the LASSO-based adjustment are as follows.
		\begin{enumerate}[(i)]
			\item For Models 1-6, we use $\{1,X_{i},W_{i},X_{i}^2,W_{i}^2,X_{i}W_{i},(X_{i}-\tilde{X})I\{X_{i}>\tilde{X}\}, (W_{i}-\tilde{W})I\{W_{i}>\tilde{W}\},(X_{i}-\tilde{X})^2 I\{X_{i}>\tilde{X}\}, (W_{i}-\tilde{W})^{2}I\{W_{i}>\tilde{W}\}\}$ where $\tilde{X}$ and $\tilde{W}$ are the sample medians of $\{X_{i}\}_{i \in [2n]}$ and $\{W_{i}\}_{i \in [2n]}$, respectively.
			\item For Models 7-9, we use $\{1,X_{i},W_{i},X_{i}^2,W_{i}^2,X_{i1}W_{i1},X_{i2}W_{i1},X_{i1}W_{i2},X_{i2}W_{i2},(X_{ij}-\tilde{X_{j}})I\{X_{ij}>\tilde{X_{j}}\},(X_{ij}-\tilde{X_{j}})^{2}I\{X_{ij}>\tilde{X_{j}}\},(W_{ij}-\tilde{W_{1}})I\{W_{ij}>\tilde{W_{j}}\},(W_{ij}-\tilde{W_{j}})^{2}I\{W_{ij}>\tilde{W_{j}}\}\}$ where $\tilde{X_{j}}$ and $\tilde{W_{j}}$, for $j=1,2$, are the sample medians of $\{X_{ij}\}_{i \in [2n]}$ and $\{W_{ij}\}_{i \in [2n]}$, respectively.
			\item For Models 10-11, we use $\{1,X_{i},W_{i},X_{i}^2,W_{i}^2,X_{i1}W_{i1},X_{i2}W_{i2},X_{i3}W_{i1},X_{i4}W_{i2}, (X_{ij}-\tilde{X_{j}})I\{X_{ij}>\tilde{X_{j}}\},(X_{ij}-\tilde{X_{j}})^{2}I\{X_{ij}>\tilde{X_{j}}\},(W_{ij}-\tilde{W_{j}})I\{W_{ij}>\tilde{W_{j}}\},(W_{ij}-\tilde{W_{j}})^{2}I\{W_{ij}>\tilde{W_{j}}\}\}$ where $\tilde{X_{j}}$,for $j=1,2,3,4$, and $\tilde{W_{j}}$, for $j=1,2$, are the sample medians of $\{X_{ij}\}_{i \in [2n]}$ and $\{W_{ij}\}_{i \in [2n]}$, respectively.
			\item Models 12-15 already contain high-dimensional covariates. We just use $X_i$ and $W_i$ as the LASSO regressors.
	\end{enumerate}

\bibliography{wolf}

\begin{thebibliography}{31}
\expandafter\ifx\csname natexlab\endcsname\relax\def\natexlab#1{#1}\fi
\expandafter\ifx\csname url\endcsname\relax
  \def\url#1{\texttt{#1}}\fi
\expandafter\ifx\csname urlprefix\endcsname\relax\def\urlprefix{URL }\fi
\providecommand{\eprint}[2][]{\url{#2}}

\bibitem[{Abadie and Imbens(2008)}]{abadie2008estimation}
\textsc{Abadie, A.} and \textsc{Imbens, G.~W.} (2008).
\newblock Estimation of the {Conditional} {Variance} in {Paired} {Experiments}.
\newblock \textit{Annales d'\'{E} conomie et de Statistique} 175--187.

\bibitem[{Armstrong(2022)}]{armstrong2022asymptotic}
\textsc{Armstrong, T.~B.} (2022).
\newblock Asymptotic {Efficiency} {Bounds} for a {Class} of {Experimental}
  {Designs}.
\newblock ArXiv:2205.02726 [stat],
  \urlprefix\url{http://arxiv.org/abs/2205.02726}.

\bibitem[{Bai et~al.(2023{\natexlab{a}})Bai, Liu, Shaikh and
  Tabord-Meehan}]{bai2023efficiency}
\textsc{Bai, Y.}, \textsc{Liu, J.}, \textsc{Shaikh, A.~M.} and
  \textsc{Tabord-Meehan, M.} (2023{\natexlab{a}}).
\newblock On the {Efficiency} of {Finely} {Stratified} {Experiments}.
\newblock ArXiv:2307.15181 [econ, math, stat],
  \urlprefix\url{http://arxiv.org/abs/2307.15181}.

\bibitem[{Bai et~al.(2023{\natexlab{b}})Bai, Liu and
  Tabord-Meehan}]{bai2023inference}
\textsc{Bai, Y.}, \textsc{Liu, J.} and \textsc{Tabord-Meehan, M.}
  (2023{\natexlab{b}}).
\newblock Inference for {Matched} {Tuples} and {Fully} {Blocked} {Factorial}
  {Designs}.
\newblock ArXiv:2206.04157 [econ, math, stat],
  \urlprefix\url{http://arxiv.org/abs/2206.04157}.

\bibitem[{Bai et~al.(2022)Bai, Romano and Shaikh}]{bai2021inference}
\textsc{Bai, Y.}, \textsc{Romano, J.~P.} and \textsc{Shaikh, A.~M.} (2022).
\newblock Inference in {Experiments} {With} {Matched} {Pairs}.
\newblock \textit{Journal of the American Statistical Association},
  \textbf{117} 1726--1737.
\newblock Publisher: Taylor \& Francis \_eprint:
  https://doi.org/10.1080/01621459.2021.1883437,
  \urlprefix\url{https://doi.org/10.1080/01621459.2021.1883437}.

\bibitem[{Belloni et~al.(2017)Belloni, Chernozhukov, Fern{\'a}ndez-Val and
  Hansen}]{belloni2017program}
\textsc{Belloni, A.}, \textsc{Chernozhukov, V.}, \textsc{Fern{\'a}ndez-Val, I.}
  and \textsc{Hansen, C.} (2017).
\newblock Program evaluation and causal inference with high-dimensional data.
\newblock \textit{Econometrica}, \textbf{85} 233--298.

\bibitem[{Bickel et~al.(2009)Bickel, Ritov and
  Tsybakov}]{bickel2009simultaneous}
\textsc{Bickel, P.~J.}, \textsc{Ritov, Y.} and \textsc{Tsybakov, A.~B.} (2009).
\newblock Simultaneous analysis of {Lasso} and {Dantzig} selector.
\newblock \textit{The Annals of Statistics}, \textbf{37} 1705--1732.
\newblock Publisher: Institute of Mathematical Statistics,
  \urlprefix\url{https://projecteuclid.org/journals/annals-of-statistics/volume-37/issue-4/Simultaneous-analysis-of-Lasso-and-Dantzig-selector/10.1214/08-AOS620.full}.

\bibitem[{Bruhn and McKenzie(2009)}]{bruhn2009pursuit}
\textsc{Bruhn, M.} and \textsc{McKenzie, D.} (2009).
\newblock In pursuit of balance: Randomization in practice in development field
  experiments.
\newblock \textit{American Economic Journal: Applied Economics}, \textbf{1}
  200--232.

\bibitem[{Chernozhukov et~al.(2014)Chernozhukov, Chetverikov and
  Kato}]{chernozhukov2014gaussian}
\textsc{Chernozhukov, V.}, \textsc{Chetverikov, D.} and \textsc{Kato, K.}
  (2014).
\newblock Gaussian approximation of suprema of empirical processes.
\newblock \textit{The Annals of Statistics}, \textbf{42} 1564--1597.

\bibitem[{Chernozhukov et~al.(2017)Chernozhukov, Chetverikov and
  Kato}]{chernozhukov2017central}
\textsc{Chernozhukov, V.}, \textsc{Chetverikov, D.} and \textsc{Kato, K.}
  (2017).
\newblock Central limit theorems and bootstrap in high dimensions.
\newblock \textit{The Annals of Probability}, \textbf{45} 2309--2352.
\newblock Publisher: Institute of Mathematical Statistics,
  \urlprefix\url{https://projecteuclid.org/journals/annals-of-probability/volume-45/issue-4/Central-limit-theorems-and-bootstrap-in-high-dimensions/10.1214/16-AOP1113.full}.

\bibitem[{Chetverikov and S\o{}rensen(2022)}]{chetverikov2022analytic}
\textsc{Chetverikov, D.} and \textsc{S\o{}rensen, J. R.-V.} (2022).
\newblock Analytic and {Bootstrap}-after-{Cross}-{Validation} {Methods} for
  {Selecting} {Penalty} {Parameters} of {High}-{Dimensional} {M}-{Estimators}.
\newblock Tech. Rep. arXiv:2104.04716, arXiv.
\newblock ArXiv:2104.04716 [econ, math, stat] type: article,
  \urlprefix\url{http://arxiv.org/abs/2104.04716}.

\bibitem[{Cohen and Fogarty(2023)}]{cohen2020no-harm}
\textsc{Cohen, P.~L.} and \textsc{Fogarty, C.~B.} (2023).
\newblock No-harm calibration for generalized oaxaca-blinder estimators.
\newblock \textit{Biometrika. Forthcoming}.

\bibitem[{Cytrynbaum(2023)}]{C23}
\textsc{Cytrynbaum, M.} (2023).
\newblock Covariate adjustment in stratified experiments.

\bibitem[{Donner and Klar(2000)}]{donner2000design}
\textsc{Donner, A.} and \textsc{Klar, N.} (2000).
\newblock \textit{Design and analysis of cluster randomization trials in health
  research}, vol.~27.
\newblock Arnold London.

\bibitem[{Dudley(1989)}]{dudley:1989}
\textsc{Dudley, R.~M.} (1989).
\newblock \textit{Real Analysis and Probability}.
\newblock Wadsworth and Brook/Cole.

\bibitem[{Freedman(2008)}]{freedman2008regression}
\textsc{Freedman, D.~A.} (2008).
\newblock On regression adjustments to experimental data.
\newblock \textit{Advances in Applied Mathematics}, \textbf{40} 180--193.
\newblock
  \urlprefix\url{http://www.sciencedirect.com/science/article/pii/S019688580700005X}.

\bibitem[{Glennerster and Takavarasha(2013)}]{glennerster2013running}
\textsc{Glennerster, R.} and \textsc{Takavarasha, K.} (2013).
\newblock \textit{Running randomized evaluations: A practical guide}.
\newblock Princeton University Press.

\bibitem[{Groh and McKenzie(2016)}]{groh2016macroinsurance}
\textsc{Groh, M.} and \textsc{McKenzie, D.} (2016).
\newblock Macroinsurance for microenterprises: A randomized experiment in
  post-revolution egypt.
\newblock \textit{Journal of Development Economics}, \textbf{118} 13--25.

\bibitem[{Jiang et~al.(2022)Jiang, Liu, Phillips and Zhang}]{Jiang2022QTE}
\textsc{Jiang, L.}, \textsc{Liu, X.}, \textsc{Phillips, P.~C.} and
  \textsc{Zhang, Y.} (2022).
\newblock Bootstrap inference for quantile treatment effects in randomized
  experiments with matched pairs.
\newblock \textit{Review of Economics and Statistics}.
\newblock Forthcoming.

\bibitem[{Ledoux and Talagrand(1991)}]{ledoux1991probability}
\textsc{Ledoux, M.} and \textsc{Talagrand, M.} (1991).
\newblock \textit{Probability in {Banach} {Spaces}: {Isoperimetry} and
  {Processes}}.
\newblock Classics in {Mathematics}, Springer-Verlag, Berlin Heidelberg.
\newblock \urlprefix\url{https://www.springer.com/gp/book/9783642202117}.

\bibitem[{Lehmann and Romano(2005)}]{lehmann:romano:tsh:2005}
\textsc{Lehmann, E.~L.} and \textsc{Romano, J.~P.} (2005).
\newblock \textit{Testing Statistical Hypotheses}.
\newblock 3rd ed. Springer, New York.

\bibitem[{Lin(2013)}]{lin2013agnostic}
\textsc{Lin, W.} (2013).
\newblock Agnostic notes on regression adjustments to experimental data:
  {Reexamining} {Freedman}'s critique.
\newblock \textit{Annals of Applied Statistics}, \textbf{7} 295--318.
\newblock Publisher: Institute of Mathematical Statistics,
  \urlprefix\url{https://projecteuclid.org/euclid.aoas/1365527200}.

\bibitem[{Negi and Wooldridge(2021)}]{negi2021revisiting}
\textsc{Negi, A.} and \textsc{Wooldridge, J.~M.} (2021).
\newblock Revisiting regression adjustment in experiments with heterogeneous
  treatment effects.
\newblock \textit{Econometric Reviews}, \textbf{40} 504--534.
\newblock Publisher: Taylor \& Francis \_eprint:
  https://doi.org/10.1080/07474938.2020.1824732,
  \urlprefix\url{https://doi.org/10.1080/07474938.2020.1824732}.

\bibitem[{Robins et~al.(1995)Robins, Rotnitzky and Zhao}]{robins1995analysis}
\textsc{Robins, J.~M.}, \textsc{Rotnitzky, A.} and \textsc{Zhao, L.~P.} (1995).
\newblock Analysis of {Semiparametric} {Regression} {Models} for {Repeated}
  {Outcomes} in the {Presence} of {Missing} {Data}.
\newblock \textit{Journal of the American Statistical Association}, \textbf{90}
  106--121.
\newblock Publisher: [American Statistical Association, Taylor \& Francis,
  Ltd.], \urlprefix\url{https://www.jstor.org/stable/2291134}.

\bibitem[{Rosenberger and Lachin(2015)}]{rosenberger2015randomization}
\textsc{Rosenberger, W.~F.} and \textsc{Lachin, J.~M.} (2015).
\newblock \textit{Randomization in clinical trials: Theory and Practice}.
\newblock John Wiley \& Sons.

\bibitem[{Tsiatis et~al.(2008)Tsiatis, Davidian, Zhang and
  Lu}]{tsiatis2008covariate}
\textsc{Tsiatis, A.~A.}, \textsc{Davidian, M.}, \textsc{Zhang, M.} and
  \textsc{Lu, X.} (2008).
\newblock Covariate adjustment for two-sample treatment comparisons in
  randomized clinical trials: a principled yet flexible approach.
\newblock \textit{Statistics in Medicine}, \textbf{27} 4658--4677.

\bibitem[{van~der Vaart and Wellner(1996)}]{vandervaart:wellner:1996}
\textsc{van~der Vaart, A.} and \textsc{Wellner, J.} (1996).
\newblock \textit{Weak Convergence and Empirical Processes with Applications to
  Statistics}.
\newblock Springer-Verlag, New York.

\bibitem[{Wager et~al.(2016)Wager, Du, Taylor and Tibshirani}]{WDTT16}
\textsc{Wager, S.}, \textsc{Du, W.}, \textsc{Taylor, J.} and
  \textsc{Tibshirani, R.~J.} (2016).
\newblock High-dimensional regression adjustments in randomized experiments.
\newblock \textit{Proceedings of the National Academy of Sciences},
  \textbf{113} 12673--12678.

\bibitem[{Wu and Gagnon-Bartsch(2021)}]{WGB21}
\textsc{Wu, E.} and \textsc{Gagnon-Bartsch, J.~A.} (2021).
\newblock Design-based covariate adjustments in paired experiments.
\newblock \textit{Journal of Educational and Behavioral Statistics},
  \textbf{46} 109--132.

\bibitem[{Yang and Tsiatis(2001)}]{yang2001efficiency}
\textsc{Yang, L.} and \textsc{Tsiatis, A.~A.} (2001).
\newblock Efficiency {Study} of {Estimators} for a {Treatment} {Effect} in a
  {Pretest}–{Posttest} {Trial}.
\newblock \textit{The American Statistician}, \textbf{55} 314--321.
\newblock Publisher: Taylor \& Francis \_eprint:
  https://doi.org/10.1198/000313001753272466,
  \urlprefix\url{https://doi.org/10.1198/000313001753272466}.

\bibitem[{Zhao and Ding(2021)}]{zhao2021covariate-adjusted}
\textsc{Zhao, A.} and \textsc{Ding, P.} (2021).
\newblock Covariate-adjusted {Fisher} randomization tests for the average
  treatment effect.
\newblock \textit{Journal of Econometrics}, \textbf{225} 278--294.
\newblock
  \urlprefix\url{https://www.sciencedirect.com/science/article/pii/S0304407621001457}.

\end{thebibliography}

\end{document}